\newtheorem{thm}{Theorem}
\newtheorem{lem}{Lemma}
\newtheorem{dfn}{Definition}
\begin{document}

\title{Predicting Online Item-choice Behavior:\\ 
A Shape-restricted Regression Approach}
\author{
Naoki~Nishimura,~Noriyoshi~Sukegawa,~Yuichi~Takano,~and~Jiro~Iwanaga
\thanks{N. Nishimura was with the Graduate School of Systems and Information Engineering, 
University of Tsukuba, Ibaraki 305-8577, Japan. 
He is now with Recruit Lifestyle Co., Ltd., Tokyo 100-6640, Japan 
(e-mail: nishimura@r.recruit.co.jp).}
\thanks{N. Sukegawa is with the Department of Information and Computer Technology, 
Tokyo University of Science, Tokyo 125-8585, Japan 
(e-mail: sukegawa@rs.tus.ac.jp).}
\thanks{Y. Takano is with the Faculty of Engineering, Information and Systems, 
University of Tsukuba, Ibaraki 305-8577, Japan 
(e-mail: ytakano@sk.tsukuba.ac.jp).}
\thanks{J. Iwanaga is with %the Department of 
the Doctoral Program in
Policy and Planning Sciences, 
University of Tsukuba, Ibaraki 305-8577, Japan, 
and also with Erdos Inc., Kanagawa 222-0033, Japan 
(email: iwanaga@erdos-the-book.com).}
}

% The paper headers
\markboth{}%IEEE TRANSACTIONS ON CYBERNETICS,~Vol.~xx, No.~xx, xx~2020}%
{Nishimura \MakeLowercase{\textit{et al.}}: 
Predicting Online Item-choice Behavior: A Shape-restricted Regression Approach}

\maketitle

\begin{abstract}
This paper examines the relationship between user pageview (PV) histories and their item-choice behavior on an e-commerce website. 
We focus on PV sequences, which represent time series of the number of PVs for each user--item pair. 
We propose a shape-restricted optimization model that accurately estimates item-choice probabilities for all possible PV sequences. 
This model imposes monotonicity constraints on item-choice probabilities by exploiting partial orders for PV sequences, according to the recency and frequency of a user's previous PVs. 
To improve the computational efficiency of our optimization model, we devise efficient algorithms for eliminating all redundant constraints according to the transitivity of the partial orders. 
Experimental results using real-world clickstream data demonstrate that our method achieves higher prediction performance than that of a state-of-the-art optimization model and common machine learning methods. 
\end{abstract}

% Note that keywords are not normally used for peerreview papers.
\begin{IEEEkeywords}
Consumer behavior, 
electronic commerce, 
graph theory,
predictive models, 
quadratic programming
\end{IEEEkeywords}

\IEEEpeerreviewmaketitle

\section{Introduction}
\label{sec:1}
\IEEEPARstart{A}{} growing number of companies are now operating e-commerce websites that allow users to browse and purchase a variety of items~\cite{TuOu17}.
Within this context, there is great potential value in analyzing users' item-choice behavior from clickstream data, which is a record of user pageview (PV) histories on an e-commerce website.
By grasping users' purchase intention as revealed by PV histories, we can lead users to target pages or design special sales promotions, providing companies with opportunities to build profitable relationships with users~\cite{Ka17,NgXi09}. 
Companies can also use clickstream data to improve the quality of operational forecasting and inventory management~\cite{HuMi14}.
Meanwhile, users often find it difficult to select an appropriate item from among the plethora of choices presented by e-commerce websites~\cite{Ag16}. 
Analyzing item-choice behavior can improve the performance of recommendation systems that help users discover items of interest~\cite{IwNi19}. 
For these reasons, a number of prior studies have investigated clickstream data from various perspectives~\cite{BuSi09}. 
In this study, we focused on closely examining the relationship between PV histories and item-choice behavior on an e-commerce website.

It has been demonstrated that the recency and frequency of a user's past purchases are critical indicators for purchase prediction~\cite{FaHa05a,VaBu05} and sequential pattern mining~\cite{ChKu09}. 
Accordingly, Iwanaga \textit{et al.}~\cite{IwNi16} developed a shape-restricted optimization model for estimating item-choice probabilities from the recency and frequency of a user's previous PVs. 
Their method creates a two-dimensional probability table consisting of item-choice probabilities for all recency--frequency combinations in a user's previous PVs. 
Nishimura \textit{et al.}~\cite{NiSu18} employed latent-class modeling to integrate item heterogeneity into a two-dimensional probability table. 
These prior studies demonstrated experimentally that higher prediction performance was achieved with the two-dimensional probability table than with common machine learning methods, namely, logistic regression, kernel-based support vector machines, artificial neural networks, and random forests. 
Notably, however, reducing PV histories to two dimensions (recency and frequency) can markedly decrease the amount of information contained in PV histories reflecting item-choice behavior. 

This study focused on PV sequences, which represent time series of the number of PVs for each user--item pair. 
In contrast to the two-dimensional probability table, PV sequences allow us to retain detailed information contained in the PV history. 
However, the huge number of possible PV sequences makes it extremely difficult to accurately estimate item-choice probabilities for all of them.
To overcome this difficulty, we propose a shape-restricted optimization model that imposes monotonicity constraints on item-choice probabilities based on a partially ordered set (poset) for PV sequences. 
While this optimization model contains a huge number of constraints, all redundant constraints can be eliminated according to the transitivity of partial order. 
To accomplish this, we compute a transitivity reduction~\cite{AhGa72} of a directed graph representing the poset. 
We demonstrate the effectiveness of our method through experiments using real-world clickstream data. 

The main contributions of this paper are as follows:
\begin{itemize}%\setlength{\itemsep}{.1in}
\item We propose a shape-restricted optimization model for estimating item-choice probabilities from a user's previous PV sequence. 
This PV sequence model exploits the monotonicity constraints to precisely estimate item-choice probabilities. 
\item We derive two types of PV sequence posets according to the recency and frequency of a user's previous PVs. 
Experimental results show that the monotonicity constraints based on these posets greatly enhances the prediction performance of our PV sequence model. 
\item We devise constructive algorithms for transitive reduction specific to these posets. 
The time complexity of our algorithms is much smaller than that of general-purpose algorithms. 
Experimental results reveal that transitive reduction improves efficiency in terms of both the computation time and memory usage of our PV sequence model. 
\item We verify experimentally that higher prediction performance is achieved with our method than with the two-dimensional probability table and common machine learning methods, namely, logistic regression, artificial neural networks, and random forests. 
\end{itemize}

The remainder of this paper is organized as follows. 
Section~\ref{sec:2} gives a brief review of related work. 
Section~\ref{sec:3} describes the two-dimensional probability table~\cite{IwNi16}, and Section~4 presents our PV sequence model. 
Section~\ref{sec:4} describes our constructive algorithms for transitive reduction. 
Section~\ref{sec:5} evaluates the effectiveness of our method based on experimental results. 
Section~\ref{sec:6} concludes with a brief summary of this work and a discussion of future research directions.

\section{Related work}
\label{sec:2}

This section briefly surveys methods for predicting online user behavior and discusses some related work on shape-restricted regression. 

\subsection{Prediction of online user behavior}
A number of prior studies have aimed at predicting users' purchase behavior on e-commerce websites~\cite{CiHo19}. 
Mainstream research has applied stochastic or statistical models for predicting purchase sessions~\cite{BaHa18,KoLe20,MoFa04,MoSc04,PaPa16,SiBu04,VaBu05}, but these approaches do not consider which items users choose. 

Various machine learning methods have been used to predict online item-choice behavior, including logistic regression~\cite{DoJi19,ZhPe13}, association rule mining~\cite{PiZa10}, support vector machines~\cite{QiLi15,ZhPe13}, ensemble learning methods~\cite{LiGu15,LiZh15,RoSo15,YiWa15,ZhYa16}, and artificial neural networks~\cite{JeLu17,Vi15,WuTa15}. 
Tailored statistical models have also been proposed. For instance, Moe~\cite{Mo06} devised a two-stage multinomial logit model that separates the decision-making process into item views and purchase decisions.
Yao \textit{et al.}~\cite{YaKi17} proposed a joint framework consisting of user-level factor estimation and item-level factor aggregation based on the buyer decision process. 
Borges and Levener~\cite{BoLe07} used Markov chain models to estimate the probability of the next link choice of a user. 

These prior studies effectively utilized clickstream data in various prediction methods and
showed that consideration of time-evolving user behavior is crucial for precise prediction of online item-choice behavior. 
We therefore focused on user PV sequences to estimate item-choice probabilities on e-commerce websites. 
Moreover, we evaluated the prediction performance of our method by comparison with machine learning methods that have commonly been used in prior studies. 

\subsection{Shape-restricted regression}
In many practical situations, prior information is known about the relationship between explanatory and response variables. 
For instance, utility functions can be assumed to be increasing and concave according to economic theory~\cite{Ma91}, and option pricing functions to be monotone and convex according to finance theory~\cite{AiDu03}. 
Shape-restricted regression fits a nonparametric function to a set of given observations under shape restrictions such as monotonicity, convexity, concavity, or unimodality~\cite{ChGu15,GrJo14,GuSe18,WaGh12}. 

Isotonic regression is the most common method for shape-restricted regression. 
In general, isotonic regression is the problem of estimating a real-valued monotone (non-decreasing or non-increasing) function with respect to a given partial order of observations~\cite{PaXu99}. 
Some regularization techniques~\cite{GaKi18,TiHo11} and estimation algorithms~\cite{HaWa19,PaXu99,St15} have been proposed for isotonic regression. 

One of the greatest advantages of shape-restricted regression is that it mitigates overfitting, thereby improving prediction performance of regression models~\cite{AlRe05}. 
To utilize this advantage, Iwanaga \textit{et al.}~\cite{IwNi16} devised a shape-restricted optimization model for estimating item-choice probabilities on e-commerce websites. 
Along similar lines, we propose a shape-restricted optimization model based on order relations of PV sequences to improve prediction performance. 

\section{Two-dimensional probability table}
\label{sec:3}

This section briefly reviews the two-dimensional probability table proposed by Iwanaga \textit{et al.}~\cite{IwNi16}.

\subsection{Empirical probability table}

\begin{table*}[t]%If I use [!t], then it returns Table 3.1
\renewcommand{\arraystretch}{1.3}
\caption{Pageview History of Six User--Item Pairs}
\label{tab:PV}
\centering
\begin{tabular}{cccccccc} \toprule
      &     & \multicolumn{3}{c}{\#PVs} & Choice &%\multicolumn{2}{c}{Summarization} 
\\ \cmidrule(lr){3-5} \cmidrule(lr){6-6}%\cmidrule(lr){7-8} 
User & Item 	&Apr 1 &Apr 2 &Apr 3 &Apr 4 & %Recency--Frequency 
$(r,f)$% (\cite{IwNi16}) 
&%PV sequence 
$(v_1,v_2,v_3)$%(This study) 
\\ \midrule
$u_1$ & $i_2$ & 1      & 0      & 1      & 0      & $(3,2)$ & $(1,0,1)$       \\ 
$u_1$ & $i_4$ & 0      & 1      & 0      & 1      & $(2,1)$ & $(0,1,0)$       \\ 
$u_2$ & $i_1$ & 3      & 0      & 0      & 0      & $(1,3)$ & $(0,0,3)$       \\ 
$u_2$ & $i_3$ & 0      & 0      & 3      & 1      & $(3,3)$ & $(3,0,0)$       \\
$u_2$ & $i_4$ & 1      & 1      & 1      & 0      & $(3,3)$ & $(1,1,1)$       \\ 
$u_3$ & $i_2$ & 2      & 0      & 1      & 0      & $(3,3)$ & $(1,0,2)$       \\ \bottomrule
\end{tabular}
\end{table*}

Table~\ref{tab:PV} gives an example of a PV history for six user--item pairs. 
For instance, user $u_1$ viewed the page for item $i_2$ once on each of April 1 and 3. 
We focus on user choices (e.g., revisit and purchase) on April 4, which we call the \emph{base date}.
For instance, user $u_1$ chose item $i_4$ rather than item $i_2$ on the base date. 
We suppose for each user--item pair that recency and frequency are characterized by the last PV day and the total number of PVs, respectively. 
As shown in the table, the PV history can be summarized by the recency--frequency combination $(r,f) \in R \times F$, where $R$ and $F$ are index sets representing recency and frequency, respectively. 

Let $n_{rf}$ be the number of user--item pairs having $(r,f) \in R \times F$, and
set $q_{rf}$ to the number of choices occurring by user--item pairs that have $(r,f) \in R \times F$ on the base date.
In the case of Table~\ref{tab:PV}, the \emph{empirical probability table} is calculated as 
\begin{align} 
\left(\hat{x}_{rf} := \frac{q_{rf}}{n_{rf}}\right)_{\!\!(r,f) \in R \times F} 
&= 
\left(
\begin{array}{ccc}
0/0 & 0/0 & 0/1 \\
1/1 & 0/0 & 0/0 \\
0/0 & 0/1 & 1/3 \\
\end{array}
\right) \nonumber \\ 
&\approx 
\left(
\begin{array}{lll}
0.00    & 0.00 & 0.00    \\
1.00 & 0.00 & 0.00    \\
0.00    & 0.00 & 0.33 \\
\end{array}
\right), \label{eq:emp1}
\end{align}
where, for reasons of expediency, $\hat{x}_{rf} := 0$ for $(r,f) \in R \times F$ with $n_{rf} = 0$.

\subsection{Two-dimensional monotonicity model}
It is reasonable to assume that the recency and frequency of user--item pairs are positively associated with user item-choice probabilities. 
To estimate user item-choice probabilities $x_{rf}$ for all recency--frequency combinations $(r,f) \in R \times F$, the \emph{two-dimensional monotonicity model}~\cite{IwNi16} minimizes the weighted sum of squared errors under monotonicity constraints with respect to recency and frequency. 
\begin{align}
&\mathop{\mbox{minimize}}_{(x_{rf})_{(r,f) \in R \times F}} &&\sum_{(r,f) \in R \times F} n_{rf} (x_{rf} - \hat{x}_{rf})^2\label{obj:Mono}\\
&\mbox{~~~subject~to} 										&&x_{rf} \le x_{r+1,f}~~~((r,f) \in R \times F), \label{con1:Mono}\\
&&& x_{rf} \le x_{r,f+1}~~~((r,f) \in R \times F), \label{con2:Mono}\\
&&&0 \le x_{rf} \le 1~~~~\,((r,f) \in R \times F). \label{con3:Mono}
\end{align}

Note, however, that PV histories are often indistinguishable according to recency and frequency. 
A typical example is the set of user--item pairs $(u_2,i_3)$, $(u_2,i_4)$, and $(u_3,i_2)$ in Table~\ref{tab:PV}; although their PV histories are actually different, they have the same value $(r,f) = (3,3)$ for recency--frequency combinations. 
As described in the next section, we exploit the PV sequence to distinguish between such PV histories. 

\section{PV sequence model}
\label{sec:4}

This section presents our shape-restricted optimization model for estimating item-choice probabilities from a user's previous PV sequence. 

\subsection{PV sequence}

The \emph{PV sequence} for each user--item pair represents a time series of the number of PVs, and is written as  
\[
\bm{v} := (v_1, v_2, \ldots, v_n),
\]
where $v_j$ is the number of PVs $j$ periods earlier for $j=1,2,\ldots,n$ (see Table~\ref{tab:PV}). 
Note that sequence terms are arranged in reverse chronological order, so $v_j$ moves into the past as index $j$ increases. 

Throughout the paper, we express sets of consecutive integers as
\[
[m_1,m_2] := \{m_1,m_1+1,\ldots,m_2\} \subseteq \mathbb{Z}, 
\]
where $[m_1,m_2] = \emptyset$ when $m_1 > m_2$. 
The set of possible PV sequences is defined as 
\[
\Gamma := [0,m]^n = \{0,1,\ldots,m\}^n,
\]
where $m$ is the maximum number of PVs in each period, and $n$ is the number of periods considered. 

Our objective is to estimate item-choice probabilities $x_{\bm{v}}$ for all PV sequences $\bm{v} \in \Gamma$. 
However, the huge number of PV sequences makes it extremely difficult to accurately estimate such probabilities. 
In the case of $(n,m)=(|R|,|F|)=(5,6)$ for instance, the number of different PV sequences is $(m+1)^n = 16{,}807$, whereas the number of recency--frequency combinations is only $|R| \cdot |F| = 30$. 
To avoid this difficulty, we effectively utilize monotonicity constraints on item-choice probabilities as in the optimization model~\eqref{obj:Mono}--\eqref{con3:Mono}. 
In the next section, we introduce three operations underlying the development of the monotonicity constraints. 

\subsection{Operations based on recency and frequency}

From the perspective of frequency, it is reasonable to assume that  item-choice probability increases as the number of PVs in a particular period increases. 
To formulate this, we define the following operation: 
\begin{dfn}[$\texttt{Up}$]\label{def:Up}
On the domain 
\[
\mathcal{D}_{\normalfont \texttt{U}} :=\{(\bm{v},s) \in \Gamma \times [1,n] \mid v_s \le m-1 \}, 
\]
the function ${\normalfont \texttt{Up}}:\mathcal{D}_{\normalfont \texttt{U}} \to \Gamma$ is defined as
\[
((\ldots,v_s,\ldots),s) \mapsto (\ldots,v_s + 1,\ldots). 
\]
\end{dfn}
For instance, we have $\texttt{Up}((0,1,1),1) = (1,1,1)$, and $\texttt{Up}((1,1,1),2) = (1,2,1)$. 
Since this operation increases PV frequencies, the monotonicity constraint $x_{(0,1,1)} \le x_{(1,1,1)} \le x_{(1,2,1)}$ should be satisfied by item-choice probabilities. 

From the perspective of recency, we assume that more-recent PVs have a larger effect on increasing item-choice probability.
To formulate this, we consider the following operation for moving one PV from an old period to a new period: 
\begin{dfn}[$\texttt{Move}$] \label{def:Move}
On the domain
\[
\begin{array}{l}
\mathcal{D}_{\normalfont \texttt{M}} :=\{(\bm{v},s,t) \in \Gamma \times [1,n] \times [1,n] \\
\qquad\qquad\qquad\qquad\mid v_s \le m-1,~v_t \ge 1,~s < t \},
\end{array}
\] 
the function ${\normalfont \texttt{Move}}:\mathcal{D}_{\normalfont \texttt{M}} \to \Gamma$ is defined as
\[
((\ldots,v_s,\ldots,v_t,\ldots),s,t) \mapsto (\ldots,v_s + 1,\ldots,v_t - 1,\ldots). 
\]
\end{dfn}
For instance, we have $\texttt{Move}((1,1,1),2,3) = (1,2,0)$, and $\texttt{Move}((1,2,0),1,2) = (2,1,0)$. 
Because this operation increases the number of recent PVs, item-choice probabilities should satisfy the monotonicity constraint $x_{(1,1,1)} \le x_{(1,2,0)} \le x_{(2,1,0)}$. 

The PV sequence $\bm{v}=(1,1,1)$ represents a user's continued interest in a certain item over three periods. 
In contrast, the PV sequence $\bm{v}=(1,2,0)$ implies that a user's interest decreased over the two most-recent periods.
In this sense, the monotonicity constraint $x_{(1,1,1)} \le x_{(1,2,0)}$ may not be validated. 
Accordingly, we define the following alternative operation, which exchanges numbers of PVs to increase the number of recent PVs: 
\begin{dfn}[$\texttt{Swap}$] \label{def:Swap}
On the domain
\[
\mathcal{D}_{\normalfont \texttt{S}} :=\{(\bm{v},s,t) \in \Gamma \times [1,n] \times [1,n] \mid v_s <v_t,~s < t \},
\]
the function ${\normalfont \texttt{Swap}}:\mathcal{D}_{\normalfont \texttt{S}} \to \Gamma$ is defined as
\[
((\ldots,v_s,\ldots,v_t,\ldots),s,t) \mapsto (\ldots,v_t,\ldots,v_s,\ldots). 
\]
\end{dfn}
We thus have $\texttt{Swap}((1,0,2),2,3) = (1,2,0)$ because $v_2 < v_3$, and $\texttt{Swap}((1,2,0),1,2) = (2,1,0)$ because $v_1 < v_2$. 
Since this operation increases the number of recent PVs, item-choice probabilities should satisfy the monotonicity constraint $x_{(1,0,2)} \le x_{(1,2,0)} \le x_{(2,1,0)}$. 
Note that the monotonicity constraint $x_{(1,1,1)} \le x_{(1,2,0)}$ is not implied by this operation. 

\subsection{Partially ordered sets}

Let $U \subseteq \Gamma$ be a subset of PV sequences.
The image of each operation is then defined as  
\begin{align*}
\texttt{Up}(U) 	&= \{ \texttt{Up}(\bm{u},s) 		\mid \bm{u} \in U,~(\bm{u},s) \in \mathcal{D}_{\texttt{U}} \}, \notag \\
\texttt{Move}(U)	&= \{ \texttt{Move}(\bm{u},s,t) 	\mid \bm{u} \in U,~(\bm{u},s,t) \in \mathcal{D}_{\texttt{M}} \}, \notag \\
\texttt{Swap}(U)	&= \{ \texttt{Swap}(\bm{u},s,t) 	\mid \bm{u} \in U,~(\bm{u},s,t) \in \mathcal{D}_{\texttt{S}} \}. \notag
\end{align*}

We define $\texttt{UM}(U) := \texttt{Up}(U) \cup \texttt{Move}(U)$ for $U \subseteq \Gamma$. 
The following definition states that the binary relation $\bm{u} \prec_{\texttt{UM}} \bm{v}$ holds when $\bm{u}$ can be transformed into $\bm{v}$ by repeated application of \texttt{Up} and \texttt{Move}: 
\begin{dfn}[$\preceq_{\texttt{UM}}$] \label{def:UM}
Suppose $\bm{u},\bm{v} \in \Gamma$.
We write $\bm{u} \prec_{\normalfont \texttt{UM}} \bm{v}$ if and only if there exists $k \ge 1$ such that 
\[
\bm{v} \in {\normalfont \texttt{UM}}^k (\{\bm{u}\}) = \underbrace{{\normalfont \texttt{UM}} \circ \cdots \circ {\normalfont \texttt{UM}} \circ {\normalfont \texttt{UM}}}_{k~\mathrm{compositions}}(\{\bm{u}\}). 
\]
We also write $\bm{u} \preceq_{\normalfont \texttt{UM}} \bm{v}$ if $\bm{u} \prec_{\normalfont \texttt{UM}} \bm{v}$ or $\bm{u} = \bm{v}$.  
\end{dfn}

Similarly, we define $\texttt{US}(U) := \texttt{Up}(U) \cup \texttt{Swap}(U)$ for $U \subseteq \Gamma$. 
Then, the binary relation $\bm{u} \prec_{\texttt{US}} \bm{v}$ holds when $\bm{u}$ can be transformed into $\bm{v}$ by repeated application of \texttt{Up} and \texttt{Swap}. 
\begin{dfn}[$\preceq_{\texttt{US}}$] \label{def:US}
Suppose $\bm{u},\bm{v} \in \Gamma$.
We write $\bm{u} \prec_{\normalfont \texttt{US}} \bm{v}$ if and only if there exists $k \ge 1$ such that 
\[
\bm{v} \in {\normalfont \texttt{US}}^k (\{\bm{u}\}) = \underbrace{{\normalfont \texttt{US}} \circ \cdots \circ {\normalfont \texttt{US}} \circ {\normalfont \texttt{US}}}_{k~\mathrm{compositions}}(\{\bm{u}\}). 
\]
We also write $\bm{u} \preceq_{\normalfont \texttt{US}} \bm{v}$ if $\bm{u} \prec_{\normalfont \texttt{US}} \bm{v}$ or $\bm{u} = \bm{v}$.   
\end{dfn}

To prove properties of these binary relations, we can use the lexicographic order, which is a well-known linear order~\cite{Sc16}: 
\begin{dfn}[$\preceq_{\texttt{lex}}$] \label{def:lex}
Suppose $\bm{u},\bm{v} \in \Gamma$.
We write $\bm{u} \prec_{\normalfont \texttt{lex}} \bm{v}$ if and only if there exists $s \in [1,n]$ such that $u_s < v_s$ and $u_j = v_j$ for $j \in [1,s-1]$. 
We also write $\bm{u} \preceq_{\normalfont \texttt{lex}} \bm{v}$ if $\bm{u} \prec_{\normalfont \texttt{lex}} \bm{v}$ or $\bm{u} = \bm{v}$. 
\end{dfn}

Each application of \texttt{Up}, \texttt{Move}, and \texttt{Swap} makes a PV sequence greater in the lexicographic order. 
Therefore, we can obtain the following lemma: 
\begin{lem} \label{lem:lex}
Suppose $\bm{u},\bm{v} \in \Gamma$.  
If $\bm{u} \preceq_{\normalfont \texttt{UM}} \bm{v}$ or $\bm{u} \preceq_{\normalfont \texttt{US}} \bm{v}$, then $\bm{u} \preceq_{\normalfont \texttt{lex}} \bm{v}$. 
\end{lem}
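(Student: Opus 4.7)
The plan is to show that each of the three building-block operations \texttt{Up}, \texttt{Move}, and \texttt{Swap} strictly increases its input in the lexicographic order; once this is established, the lemma follows by a straightforward induction on the number of compositions $k$ in Definitions~\ref{def:UM} and \ref{def:US}, together with the transitivity of $\preceq_{\texttt{lex}}$ (which is itself a linear order by Definition~\ref{def:lex}).

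First I would establish the following base claim: for every $\bm{u} \in \Gamma$ and every $\bm{w}$ obtained from $\bm{u}$ by a single application of \texttt{Up}, \texttt{Move}, or \texttt{Swap}, we have $\bm{u} \prec_{\texttt{lex}} \bm{w}$. In each case the witness index $s$ required by Definition~\ref{def:lex} is easy to identify. For $\texttt{Up}(\bm{u},s)$, coordinates $1,\ldots,s-1$ are preserved and the $s$-th coordinate goes from $u_s$ to $u_s + 1$; the index $s$ is the required witness. For $\texttt{Move}(\bm{u},s,t)$ with $s < t$, coordinates $1,\ldots,s-1$ are again preserved and the $s$-th coordinate increases from $u_s$ to $u_s + 1$, so $s$ is once more the witness (the later decrement at position $t$ is irrelevant to the lex comparison). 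For $\texttt{Swap}(\bm{u},s,t)$ with $s < t$ and $u_s < u_t$, coordinates $1,\ldots,s-1$ are preserved and the $s$-th coordinate changes from $u_s$ to $u_t$, strictly larger by hypothesis; thus $s$ works as the witness.

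Having handled one step, I would then induct on $k$. Suppose $\bm{u} \prec_{\texttt{UM}} \bm{v}$, so that $\bm{v} \in \texttt{UM}^k(\{\bm{u}\})$ for some $k \ge 1$. Unfolding the definition of $\texttt{UM}$, there exist intermediate sequences $\bm{u} = \bm{w}_0, \bm{w}_1, \ldots, \bm{w}_k = \bm{v}$ such that each $\bm{w}_{i+1}$ is obtained from $\bm{w}_i$ by a single \texttt{Up} or \texttt{Move} step. By the base claim, $\bm{w}_i \prec_{\texttt{lex}} \bm{w}_{i+1}$ for every $i$, and chaining these inequalities via transitivity of the strict lexicographic order gives $\bm{u} \prec_{\texttt{lex}} \bm{v}$, hence $\bm{u} \preceq_{\texttt{lex}} \bm{v}$. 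The argument for $\preceq_{\texttt{US}}$ is identical with \texttt{Move} replaced by \texttt{Swap}. Finally, the case $\bm{u} = \bm{v}$ trivially gives $\bm{u} \preceq_{\texttt{lex}} \bm{v}$, completing both directions of the disjunction.

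There is no real obstacle here; the only thing to be careful about is making the lex-monotonicity of \texttt{Move} and \texttt{Swap} fully explicit, since in both cases the operation modifies two coordinates and one must check that the later, possibly decreasing, coordinate at position $t$ does not interfere with the lex comparison at the earlier witness position $s$. This is immediate from $s < t$ but is the one spot where a careless reader could worry about a conflict, so I would state it clearly when verifying the base claim.
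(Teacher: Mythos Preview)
Your proof is correct and follows exactly the approach the paper sketches: the paper simply asserts that ``each application of \texttt{Up}, \texttt{Move}, and \texttt{Swap} makes a PV sequence greater in the lexicographic order'' and states the lemma as an immediate consequence, while you supply the explicit verification of that assertion together with the transitivity/induction step.
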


The following theorem states that a partial order of PV sequences is derived by operations \texttt{Up} and \texttt{Move}. 
\begin{thm} \label{thm:posetUM}
The pair $(\Gamma,\preceq_{\normalfont \texttt{UM}})$ is a poset. 
\end{thm}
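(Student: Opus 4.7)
The plan is to verify directly the three defining properties of a poset for the relation $\preceq_{\texttt{UM}}$ on $\Gamma$: reflexivity, transitivity, and antisymmetry. Reflexivity is immediate from Definition~\ref{def:UM}, which explicitly allows $\bm{u} = \bm{v}$ in the definition of $\preceq_{\texttt{UM}}$.

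For transitivity, I would argue as follows. Suppose $\bm{u} \preceq_{\texttt{UM}} \bm{v}$ and $\bm{v} \preceq_{\texttt{UM}} \bm{w}$. If either relation is an equality, the conclusion $\bm{u} \preceq_{\texttt{UM}} \bm{w}$ is immediate, so assume $\bm{u} \prec_{\texttt{UM}} \bm{v}$ and $\bm{v} \prec_{\texttt{UM}} \bm{w}$. Then there exist $k, \ell \ge 1$ with $\bm{v} \in \texttt{UM}^k(\{\bm{u}\})$ and $\bm{w} \in \texttt{UM}^\ell(\{\bm{v}\})$. Since the image of a singleton under $\texttt{UM}$ is built from pointwise applications of $\texttt{Up}$ and $\texttt{Move}$, we have $\texttt{UM}^\ell(\{\bm{v}\}) \subseteq \texttt{UM}^{k+\ell}(\{\bm{u}\})$ by unfolding the compositions, so $\bm{w} \in \texttt{UM}^{k+\ell}(\{\bm{u}\})$, giving $\bm{u} \prec_{\texttt{UM}} \bm{w}$.

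Antisymmetry is the main obstacle, but Lemma~\ref{lem:lex} reduces it to a one-line argument. Suppose $\bm{u} \preceq_{\texttt{UM}} \bm{v}$ and $\bm{v} \preceq_{\texttt{UM}} \bm{u}$. By Lemma~\ref{lem:lex}, $\bm{u} \preceq_{\texttt{lex}} \bm{v}$ and $\bm{v} \preceq_{\texttt{lex}} \bm{u}$. Since $\preceq_{\texttt{lex}}$ is a linear (total) order, it is in particular antisymmetric, so $\bm{u} = \bm{v}$.

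The only subtle point, which I would check carefully, is the set-theoretic bookkeeping in the transitivity step: the definition uses $\texttt{UM}^k(\{\bm{u}\})$ (power of a set-valued operator) rather than iterated application along a single trajectory, so one must verify that if $\bm{v} \in \texttt{UM}^k(\{\bm{u}\})$ then $\texttt{UM}(\{\bm{v}\}) \subseteq \texttt{UM}^{k+1}(\{\bm{u}\})$, which follows since $\texttt{UM}$ is monotone with respect to set inclusion (its definition on $U \subseteq \Gamma$ is a union of pointwise images). Once that is verified, an easy induction on $\ell$ completes the transitivity argument.
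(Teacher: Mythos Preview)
Your proof is correct and follows essentially the same approach as the paper: reflexivity and transitivity are read off from Definition~\ref{def:UM}, and antisymmetry is obtained by applying Lemma~\ref{lem:lex} and using the antisymmetry of the lexicographic order. The only difference is that you spell out the transitivity argument in more detail than the paper, which simply asserts it follows from the definition.
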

\begin{proof}
From Definition~\ref{def:UM}, the relation $\preceq_{\texttt{UM}}$ is reflexive and transitive. 
Suppose $\bm{u} \preceq_{\texttt{UM}} \bm{v}$ and $\bm{v} \preceq_{\texttt{UM}} \bm{u}$.
It follows from Lemma~\ref{lem:lex} that $\bm{u} \preceq_{\texttt{lex}} \bm{v}$ and $\bm{v} \preceq_{\texttt{lex}} \bm{u}$. 
Since the relation $\preceq_{\texttt{lex}}$ is antisymmetric, we have $\bm{u}=\bm{v}$, thus proving that the relation $\preceq_{\texttt{UM}}$ is also antisymmetric. 
%$\hfill \square$
\end{proof}

We can similarly prove the following theorem for operations \texttt{Up} and \texttt{Swap}:
\begin{thm} \label{thm:US}
The pair $(\Gamma,\preceq_{\normalfont \texttt{US}})$ is a poset. 
\end{thm}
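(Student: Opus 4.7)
The plan is to mirror the proof of Theorem~\ref{thm:posetUM} essentially verbatim, substituting $\texttt{US}$ for $\texttt{UM}$ throughout, since Lemma~\ref{lem:lex} already delivers the lexicographic domination in both cases.

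First, I would check reflexivity and transitivity directly from Definition~\ref{def:US}. Reflexivity is immediate because the definition of $\preceq_{\texttt{US}}$ explicitly includes the $\bm{u} = \bm{v}$ case. For transitivity, suppose $\bm{u} \preceq_{\texttt{US}} \bm{v}$ and $\bm{v} \preceq_{\texttt{US}} \bm{w}$. If either relation is an equality, transitivity is trivial; otherwise there exist $k_1, k_2 \ge 1$ with $\bm{v} \in \texttt{US}^{k_1}(\{\bm{u}\})$ and $\bm{w} \in \texttt{US}^{k_2}(\{\bm{v}\})$, and composing the operations gives $\bm{w} \in \texttt{US}^{k_1+k_2}(\{\bm{u}\})$, so $\bm{u} \prec_{\texttt{US}} \bm{w}$.

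The only nontrivial property is antisymmetry, and here I would reuse exactly the same trick as in Theorem~\ref{thm:posetUM}. Assume $\bm{u} \preceq_{\texttt{US}} \bm{v}$ and $\bm{v} \preceq_{\texttt{US}} \bm{u}$. By Lemma~\ref{lem:lex}, this yields $\bm{u} \preceq_{\texttt{lex}} \bm{v}$ and $\bm{v} \preceq_{\texttt{lex}} \bm{u}$, so antisymmetry of $\preceq_{\texttt{lex}}$ (which is a linear order by Definition~\ref{def:lex}) forces $\bm{u} = \bm{v}$.

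There is essentially no obstacle here, since all of the real work was done in Lemma~\ref{lem:lex}, which was already stated uniformly for both $\preceq_{\texttt{UM}}$ and $\preceq_{\texttt{US}}$. The only minor thing worth double-checking is that each single application of $\texttt{Swap}$ does strictly increase the lexicographic order: if $\texttt{Swap}(\bm{v},s,t) = \bm{v}'$ with $s < t$ and $v_s < v_t$, then the first coordinate where $\bm{v}$ and $\bm{v}'$ differ is position $s$, and at that position $v_s < v_t = v'_s$, so $\bm{v} \prec_{\texttt{lex}} \bm{v}'$; this underlies Lemma~\ref{lem:lex} and causes no difficulty.
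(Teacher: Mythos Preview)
Your proposal is correct and follows exactly the paper's approach: the paper simply states that the proof is identical to that of Theorem~\ref{thm:posetUM} with $\texttt{US}$ in place of $\texttt{UM}$, relying on Lemma~\ref{lem:lex} for antisymmetry. Your write-up actually supplies a bit more detail (spelling out transitivity and verifying that \texttt{Swap} strictly increases the lexicographic order) than the paper does.
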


\subsection{Shape-restricted optimization model}

Let $n_{\bm{v}}$ be the number of user--item pairs that have the PV sequence $\bm{v} \in \Gamma$.
Also, $q_{\bm{v}}$ is the number of choices arising from user--item pairs having $\bm{v} \in \Gamma$ on the base date. 
Similarly to Eq.~\eqref{eq:emp1}, we can calculate empirical item-choice probabilities as 
\begin{align} \label{eq:emp2}
\hat{x}_{\bm{v}} := \frac{q_{\bm{v}}}{n_{\bm{v}}} \qquad (\bm{v} \in \Gamma). 
\end{align}

Our shape-restricted optimization model minimizes the weighted sum of squared errors subject to the monotonicity constraint: 
\begin{align}
\mathop{\mbox{minimize}}_{(x_{\bm{v}})_{\bm{v} \in \Gamma}} 
& \quad \sum_{\bm{v} \in \Gamma} n_{\bm{v}} (x_{\bm{v}} - \hat{x}_{\bm{v}})^2 \label{obj:PVS} \\
\mbox{subject~to} 
& \quad x_{\bm{u}} \le x_{\bm{v}} \qquad (\bm{u},\bm{v} \in \Gamma \mbox{~with~} \bm{u} \prec \bm{v}), \label{con1:PVS} \\
& \quad 0 \le x_{\bm{v}} \le 1 ~~~(\bm{v} \in \Gamma), \label{con2:PVS}
\end{align}
where $\bm{u} \prec \bm{v}$ in Eq.~\eqref{con1:PVS} is defined by one of the partial orders $\prec_{\texttt{UM}}$ or $\prec_{\texttt{US}}$. 

The monotonicity constraint~\eqref{con1:PVS} enhances the estimation accuracy of item-choice probabilities. 
In addition, our shape-restricted optimization model can be used in a post-processing step to improve prediction performance of other machine learning methods. 
Specifically, we first compute item-choice probabilities using a machine learning method and 
then substitute the computed values into $(\hat{x}_{\bm{v}})_{\bm{v} \in \Gamma}$ to solve the optimization model~\eqref{obj:PVS}--\eqref{con2:PVS}. 
Consequently, we can obtain item-choice probabilities corrected by the monotonicity constraint~\eqref{con1:PVS}. 
Section~\ref{sec:6.4} illustrates the usefulness of this approach. 

However, since $|\Gamma| = (m+1)^n$, it follows that the number of constraints in Eq.~\eqref{con1:PVS} is $\mathcal{O}((m+1)^{2n})$, which can be extremely large. 
When $(n,m) = (5,6)$, for instance, we have $(m+1)^{2n} = 282{,}475{,}249$. 
The next section describes how we mitigate this difficulty by removing redundant constraints in Eq.~\eqref{con1:PVS}.

\begin{figure*}[h]
\centering
\begin{tabular}{c}
\includegraphics[keepaspectratio, scale=0.42, bb=100 30 1100 700]{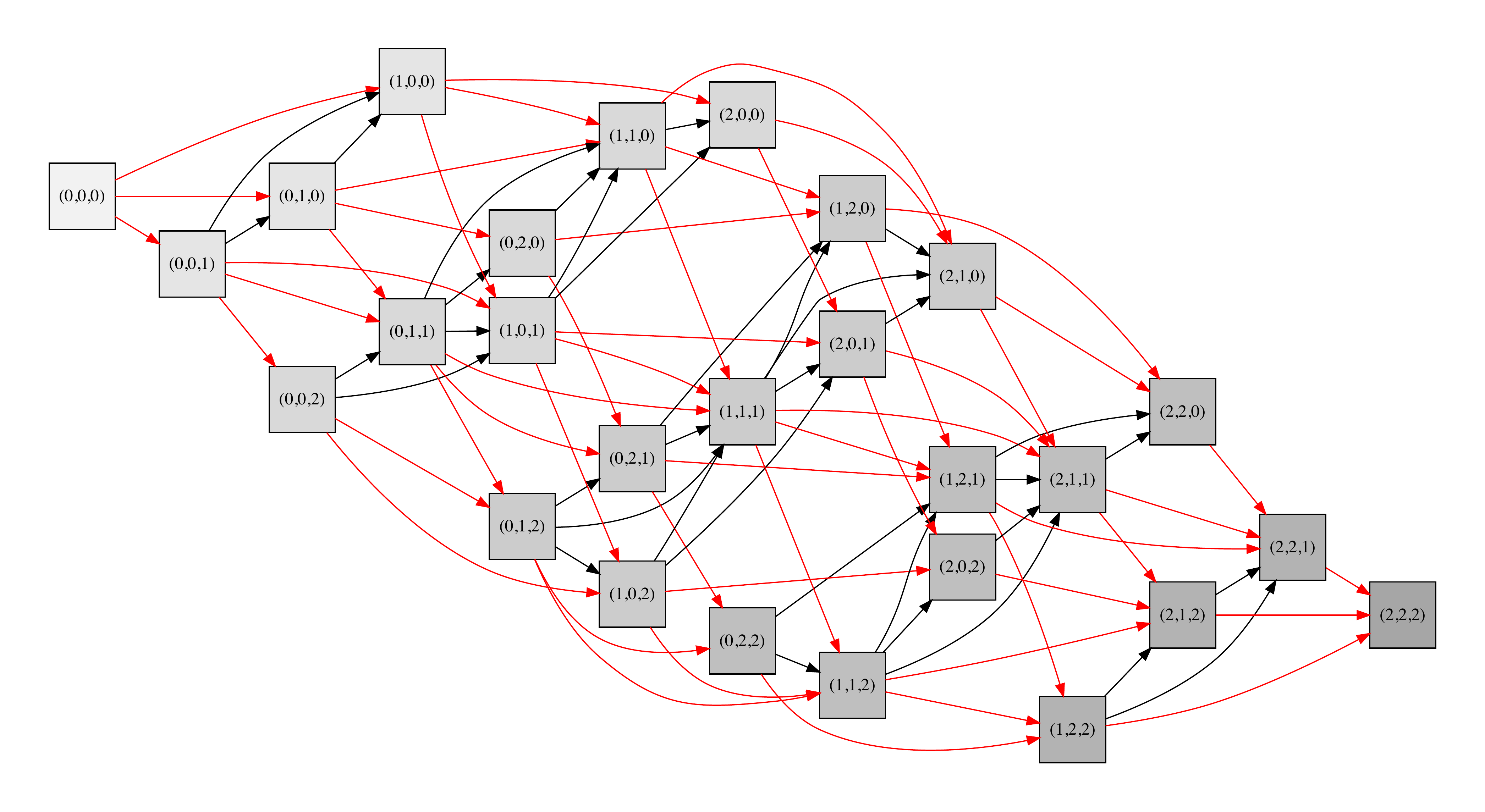}\\[.05in]
(a)~Operation-based graph\\[.2in]
\includegraphics[keepaspectratio, scale=0.42, bb=100 30 1100 250]{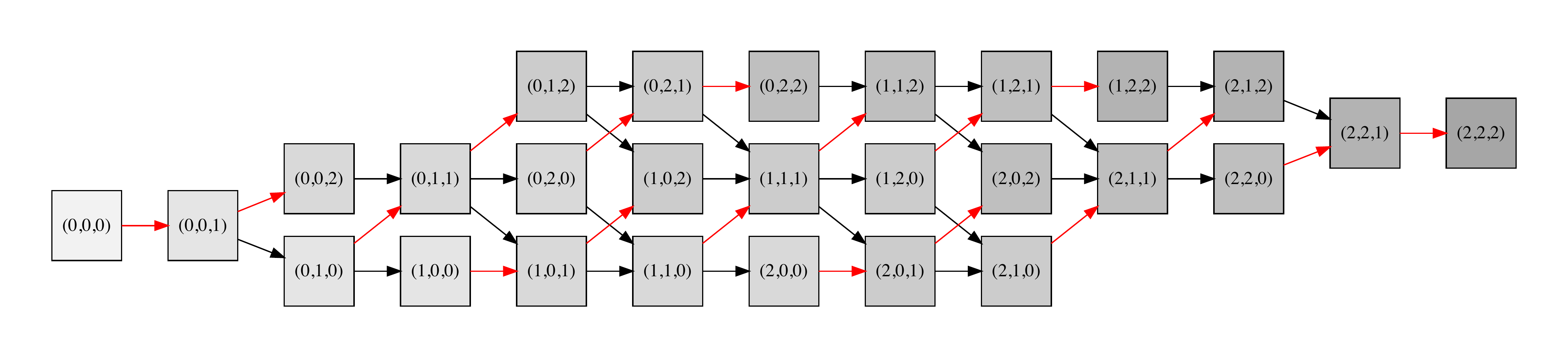}\\[.05in]
(b)~Transitive reduction
\end{tabular}
\caption{Directed graph representations of the poset $(\Gamma, \preceq_{\texttt{UM}})$ with $(n,m)=(3,2)$.}
\label{fig:hasse1_3}
\end{figure*}
\begin{figure*}[h]
\centering
\begin{tabular}{c}
\includegraphics[keepaspectratio, scale=0.42, bb=100 30 1100 700]{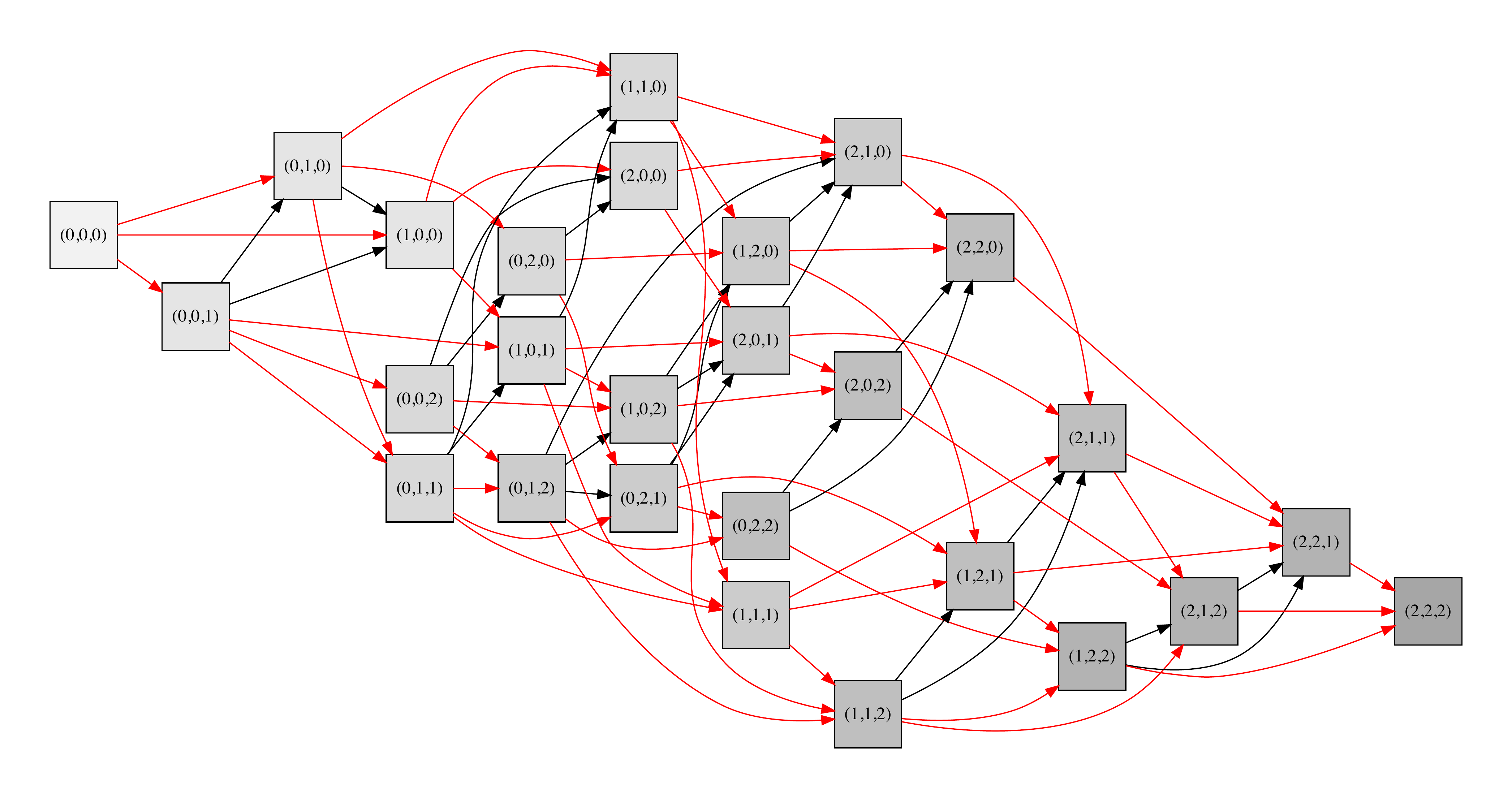}\\[.05in]
(a)~Operation-based graph\\[.2in]
\includegraphics[keepaspectratio, scale=0.42, bb=100 30 1100 450]{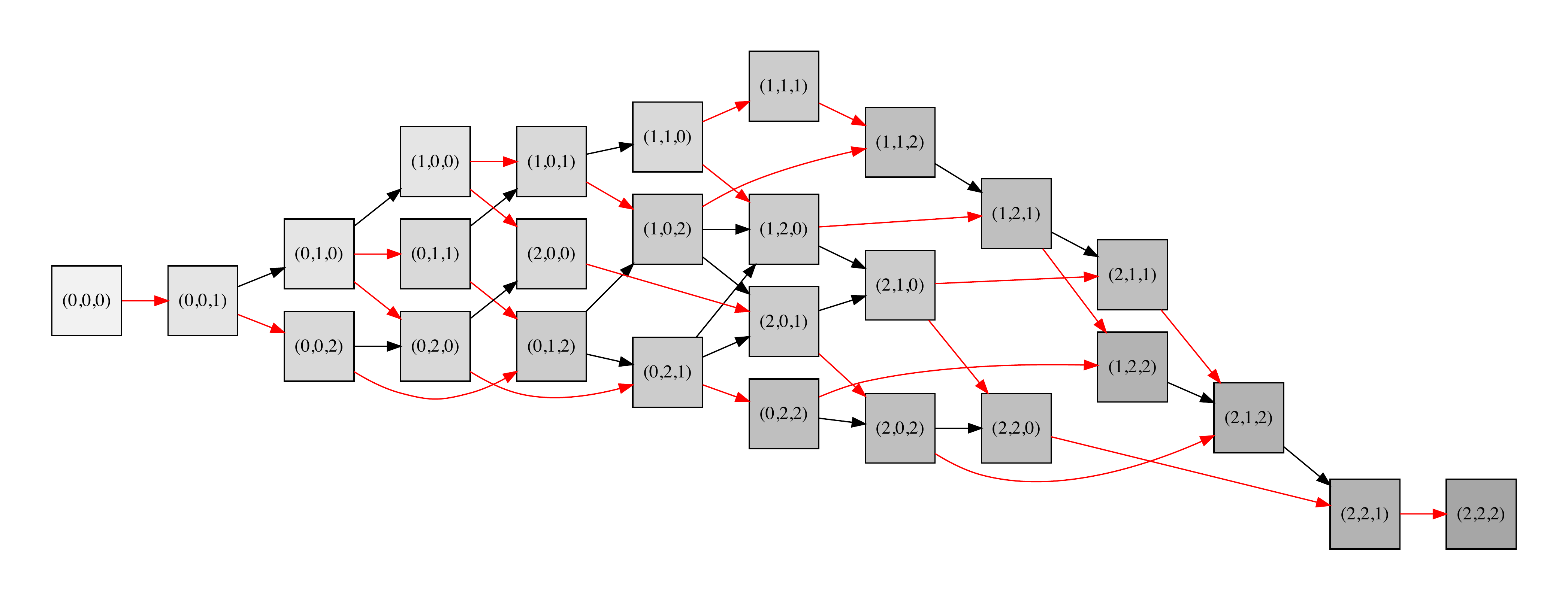}\\[.05in]
(b)~Transitive reduction
\end{tabular}
\caption{Directed graph representations of the poset $(\Gamma, \preceq_{\texttt{US}})$ with $(n,m)=(3,2)$.}\label{fig:hasse2_3}
\end{figure*}

\section{Algorithms for transitive reduction}
\label{sec:5}

This section describes our constructive algorithms for transitive reduction to decrease the problem size in our shape-restricted optimization model. 

\subsection{Transitive reduction}

A poset $(\Gamma,\preceq)$ can be represented by a directed graph $(\Gamma, E)$, where $\Gamma$ and $E \subseteq \Gamma \times \Gamma$ are  sets of nodes and directed edges, respectively. 
Each directed edge $(\bm{u},\bm{v}) \in E$ in this graph corresponds to the order relation $\bm{u} \prec \bm{v}$, so the number of directed edges coincides with the number of constraints in Eq.~\eqref{con1:PVS}. 

Figs.~\ref{fig:hasse1_3} and \ref{fig:hasse2_3} show directed graph representations of posets $(\Gamma, \preceq_{\texttt{UM}})$ and $(\Gamma, \preceq_{\texttt{US}})$, respectively.
Each edge in Figs.~\ref{fig:hasse1_3}(a) and \ref{fig:hasse2_3}(a) corresponds to one of the operations \texttt{Up}, \texttt{Move}, or \texttt{Swap}. Edge $(\bm{u},\bm{v})$ is red if $\bm{v} \in \texttt{Up}(\{\bm{u}\})$ and black if $\bm{v} \in \texttt{Move}(\{\bm{u}\})$ or $\bm{v} \in \texttt{Swap}(\{\bm{u}\})$. 
The directed graphs in Figs.~\ref{fig:hasse1_3}(a) and \ref{fig:hasse2_3}(a) can be easily created. 

Suppose there are three edges \[(\bm{u},\bm{w}),(\bm{w},\bm{v}),(\bm{u},\bm{v}) \in E.\]
In this case, edge $(\bm{u},\bm{v})$ is implied by the other edges due to the transitivity of partial order
\[
\langle \bm{u} \prec \bm{w},~\bm{w} \prec \bm{v} \rangle ~\Rightarrow~ \bm{u} \prec \bm{v}, 
\]
or, equivalently, 
\[
\langle x_{\bm{u}} \le x_{\bm{w}},~x_{\bm{w}} \le x_{\bm{v}} \rangle ~\Rightarrow~ x_{\bm{u}} \le x_{\bm{v}}.
\]
As a result, edge $(\bm{u},\bm{v})$ is redundant and can be removed from the directed graph.  

A \emph{transitive reduction}, also known as a Hasse diagram, of a directed graph $(\Gamma, E)$ is its subgraph $(\Gamma, E^*)$ such that all redundant edges are removed using the transitivity of partial order~\cite{AhGa72}. 
Figs.~\ref{fig:hasse1_3}(b) and \ref{fig:hasse2_3}(b) show transitive reductions of the directed graphs shown in Figs.~\ref{fig:hasse1_3}(a)~and~\ref{fig:hasse2_3}(a), respectively. 
By computing transitive reductions, the number of edges is reduced from 90 to 42 in Fig.~\ref{fig:hasse1_3}, and from 81 to 46 in Fig.~\ref{fig:hasse2_3}. 
This transitive reduction is known to be unique~\cite{AhGa72}. 

\subsection{General-purpose algorithms}

The transitive reduction $(\Gamma,E^*)$ is characterized by the following lemma~\cite{Sc16}: 
\begin{lem} \label{lem:rdc}
Suppose $(\bm{u},\bm{v}) \in \Gamma \times \Gamma$.
Then, $(\bm{u},\bm{v}) \in E^*$ holds if and only if both of the following conditions are fulfilled: 
\begin{itemize}
\item[] {\rm (C1)} $\bm{u} \prec \bm{v}$, and 
\item[] {\rm (C2)} if $\bm{w} \in \Gamma$ satisfies $\bm{u} \preceq \bm{w} \preceq \bm{v}$, then $\bm{w} \in \{\bm{u},\bm{v}\}$.
\end{itemize}
\end{lem}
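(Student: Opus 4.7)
The plan is to establish both directions of the equivalence by directly invoking the definition of transitive reduction as the (unique) subgraph $(\Gamma, E^*)$ from which every redundant edge has been removed. Throughout, I will use the fact that $(\Gamma, \preceq)$ is a poset (Theorems~\ref{thm:posetUM} and \ref{thm:US}), so the relation is reflexive, antisymmetric, and transitive.

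For the forward direction, I assume $(\bm{u},\bm{v}) \in E^*$. Since $E^* \subseteq E$, condition (C1) follows immediately. For (C2), I argue by contradiction: suppose some $\bm{w} \in \Gamma \setminus \{\bm{u},\bm{v}\}$ satisfies $\bm{u} \preceq \bm{w} \preceq \bm{v}$. Because $\bm{w} \ne \bm{u}$ and $\bm{w} \ne \bm{v}$, antisymmetry upgrades both relations to strict ones, so $\bm{u} \prec \bm{w}$ and $\bm{w} \prec \bm{v}$. Then the edge $(\bm{u},\bm{v})$ is implied via transitivity by the pair $(\bm{u},\bm{w}),(\bm{w},\bm{v}) \in E$, making it redundant; this contradicts $(\bm{u},\bm{v}) \in E^*$.

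For the backward direction, I assume (C1) and (C2) and need to show $(\bm{u},\bm{v}) \in E^*$. By (C1), the edge $(\bm{u},\bm{v})$ is in $E$. Suppose for contradiction that it is redundant, i.e. $(\bm{u},\bm{v}) \notin E^*$. Since transitive reduction preserves the partial order, the relation $\bm{u} \prec \bm{v}$ must still be derivable from the remaining edges, i.e. from a chain of strict relations $\bm{u} = \bm{w}_0 \prec \bm{w}_1 \prec \cdots \prec \bm{w}_k = \bm{v}$ with $k \ge 2$. Taking $\bm{w} := \bm{w}_1$ gives $\bm{u} \prec \bm{w} \preceq \bm{v}$ (via transitivity along $\bm{w}_1 \prec \cdots \prec \bm{w}_k$) with $\bm{w} \notin \{\bm{u},\bm{v}\}$ (since $\bm{w} \ne \bm{u}$ by strictness, and $\bm{w} = \bm{v}$ would force $k = 1$). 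This contradicts (C2).

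The main delicacy is the backward direction, because the paper's description of transitive reduction is informal: it is defined as the subgraph obtained by removing all edges that are redundant under transitivity, and uniqueness is cited. The key fact I rely on is that $E^*$ must still generate the same partial order as $E$, so any edge $(\bm{u},\bm{v}) \in E \setminus E^*$ is witnessed by a chain of length at least two in $E^*$; extracting an intermediate element from that chain is then straightforward and produces the $\bm{w}$ needed to violate (C2). Once this observation is isolated, the proof reduces to two short contrapositive arguments.
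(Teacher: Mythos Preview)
The paper does not actually supply a proof of Lemma~\ref{lem:rdc}; it simply states the result and attributes it to the reference~\cite{Sc16}. So there is no in-paper argument to compare against.

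Your argument is the standard one for the characterization of the covering relation (Hasse diagram) in a finite poset, and it is correct. Both directions are handled cleanly: in the forward direction you use that an edge of $E^*$ cannot be decomposable through an intermediate vertex, and in the backward direction you use that $E^*$ has the same transitive closure as $E$, so a removed edge must be realized by a nontrivial chain, which yields the forbidden intermediate $\bm{w}$. The only point worth making explicit is that finiteness of $\Gamma$ (which holds here since $|\Gamma|=(m+1)^n$) is what guarantees the transitive reduction exists and is unique with the same transitive closure; in an infinite poset such as $(\mathbb{Q},\le)$ the backward direction would fail because no covering pairs exist. Since the paper works only with finite $\Gamma$, your reliance on this is justified.
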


The basic strategy in general-purpose algorithms for transitive reduction involves the following steps: 
\begin{itemize}
\item Step 1: An exhaustive directed graph $(\Gamma, E)$ is generated from a given poset $(\Gamma,\preceq)$. 
\item Step 2: The transitive reduction $(\Gamma,E^*)$ is computed from the directed graph $(\Gamma, E)$ using Lemma~\ref{lem:rdc}. 
\end{itemize}

Various algorithms for speeding up the computation in Step~2 have been proposed. 
Recall that $|\Gamma| = (m+1)^n$ in our situation. 
Warshall's algorithm~\cite{Wa62} has time complexity $\mathcal{O}((m+1)^{3n})$ for completing Step~2~\cite{Sc16}. 
By using a sophisticated algorithm for fast matrix multiplication, this time complexity can be reduced to $\mathcal{O}((m+1)^{2.3729n})$~\cite{Le14}. 

However, such general-purpose algorithms are clearly inefficient, especially when $n$ is very large, 
and Step~1 requires a huge number of computations. 
To resolve this difficulty, we devised specialized algorithms for directly constructing a transitive reduction. 

\subsection{Constructive algorithms}

Let $(\Gamma, E^*_{\texttt{UM}})$ be a transitive reduction of a directed graph $(\Gamma, E_{\texttt{UM}})$ representing the poset $(\Gamma, \preceq_{\texttt{UM}})$. 
Then, the transitive reduction can be characterized by the following theorem: 

\begin{thm} \label{thm:iffUM}
Suppose $(\bm{u},\bm{v}) \in \Gamma \times \Gamma$. 
Then, $(\bm{u},\bm{v}) \in E^*_{\normalfont \texttt{UM}}$ holds if and only if any one of the following conditions is fulfilled:   
\begin{itemize}
\item[] {\rm (UM1)} $\bm{v} = {\normalfont \texttt{Up}}(\bm{u}, n)$, or
\item[] {\rm (UM2)} %there exists 
$\exists s \in [1,n]$ such that $\bm{v} = {\normalfont \texttt{Move}}(\bm{u},s,s+1)$. 
\end{itemize}
\end{thm}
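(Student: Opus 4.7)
The plan is to apply Lemma~\ref{lem:rdc} and verify conditions (C1) and (C2) for both directions of the equivalence. The central device will be the potential
\[
\Phi(\bm{v}) := \sum_{i=1}^{n}(n-i+1)\,v_i,
\]
which is strictly monotone along $\preceq_{\texttt{UM}}$: a single $\texttt{Up}$ at position $s$ changes $\Phi$ by exactly $n-s+1$, and a single $\texttt{Move}$ from $s$ to $t$ changes $\Phi$ by exactly $t-s$. In particular, every single operation raises $\Phi$ by at least $1$, with equality precisely when the operation is of type (UM1) or (UM2).

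For the ``if'' direction, suppose $\bm{v}$ is produced from $\bm{u}$ by (UM1) or (UM2); then (C1) is immediate and $\Phi(\bm{v})-\Phi(\bm{u})=1$. If some $\bm{w}\notin\{\bm{u},\bm{v}\}$ were to satisfy the hypothesis of~(C2), then $\bm{u}\prec_{\texttt{UM}}\bm{w}\prec_{\texttt{UM}}\bm{v}$ would force single-step chains of length $\ge 1$ on both sides, yielding $\Phi(\bm{v})-\Phi(\bm{u})\ge 2$, a contradiction; this establishes~(C2).

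For the ``only if'' direction, assume $(\bm{u},\bm{v})\in E^*_{\texttt{UM}}$. By~(C1) there is a chain $\bm{u}=\bm{w}_0,\bm{w}_1,\ldots,\bm{w}_k=\bm{v}$ of single $\texttt{Up}/\texttt{Move}$ steps; if $k\ge 2$, then $\bm{w}_1$ violates~(C2), so $k=1$ and $\bm{v}=\texttt{Up}(\bm{u},s)$ or $\bm{v}=\texttt{Move}(\bm{u},s,t)$. It remains to rule out $\texttt{Up}$ with $s<n$ and $\texttt{Move}$ with $t>s+1$ by exhibiting an intermediate $\bm{w}$.

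The obstruction I anticipate as trickiest lies in the $\texttt{Up}(\bm{u},s)$ case with $s<n$ when $u_n=m$, because then the natural choice $\bm{w}=\texttt{Up}(\bm{u},n)$ is inadmissible. When $u_n\le m-1$, I simply take $\bm{w}=\texttt{Up}(\bm{u},n)$ and note $\bm{v}=\texttt{Move}(\bm{w},s,n)$. When $u_n=m$, I let $j:=\min\{\,i>s\mid u_i\ge 1\,\}$ (nonempty since $u_n=m\ge 1$) and set $\bm{w}:=\texttt{Move}(\bm{u},j-1,j)$; this is admissible because $u_{j-1}\le m-1$ in both subcases ($j=s+1$ uses $u_s\le m-1$; $j>s+1$ uses $u_{j-1}=0$ by minimality of $j$), and $\bm{v}$ is reached from $\bm{w}$ via $\texttt{Up}(\bm{w},s+1)$ when $j=s+1$, or via $\texttt{Move}(\bm{w},s,j-1)$ followed by $\texttt{Up}(\cdot,j)$ when $j>s+1$. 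In the $\texttt{Move}(\bm{u},s,t)$ case with $t>s+1$, I distinguish whether some $j\in[s+1,t-1]$ has $u_j\ge 1$: if so, $\bm{w}=\texttt{Move}(\bm{u},s,j)$ yields $\bm{v}=\texttt{Move}(\bm{w},j,t)$; otherwise all such $u_j=0\le m-1$ (the case $m=0$ being vacuous), so any such $j$ gives $\bm{w}=\texttt{Move}(\bm{u},j,t)$ with $\bm{v}=\texttt{Move}(\bm{w},s,j)$.
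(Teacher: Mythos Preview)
Your proof is correct, and the ``if'' direction is in fact cleaner than the paper's. The key difference is your grading $\Phi(\bm{v})=\sum_i (n-i+1)v_i$: since every single $\texttt{Up}$ or $\texttt{Move}$ step raises $\Phi$ by at least $1$, and (UM1)/(UM2) raise it by exactly $1$, the covering relation is immediate and uniform for both cases. The paper instead treats (UM1) via the lexicographic order (noting $\bm{u}$ and $\bm{v}$ are lex-adjacent) and (UM2) via an ad-hoc argument about which operations can be applied without leaving the lex-interval $[\bm{u},\bm{v}]$ and without changing $\sum_j w_j$; your potential subsumes both in one line.

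For the ``only if'' direction, the two proofs share the same idea---factor a non-minimal step through an intermediate $\bm{w}$---and your $\texttt{Move}$ case matches the paper's almost verbatim. Your $\texttt{Up}$ case is correct but more elaborate than necessary: the paper simply picks \emph{any} $j\in(s,n]$ and branches on $u_j>0$ versus $u_j=0$, which avoids the separate treatment of $u_n=m$ that you flagged as ``trickiest.'' Your workaround via $j=\min\{i>s\mid u_i\ge 1\}$ is valid, but the concern arose only because you committed to $j=n$ first; allowing arbitrary $j$ sidesteps it entirely.
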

\begin{proof}
See Appendix~\ref{app:a1}. 
%$\hfill \square$
\end{proof}

Theorem~\ref{thm:iffUM} gives a constructive algorithm that directly computes the transitive reduction $(\Gamma, E^*_{\texttt{UM}})$ without generating an exhaustive directed graph $(\Gamma, E)$. 
Our algorithm is based on the breadth-first search~\cite{CoLe09}. 
Specifically, we start with a node list $L=\{(0,0,\ldots,0)\} \subseteq \Gamma$. 
At each iteration of the algorithm, we choose $\bm{u} \in L$, enumerate $\bm{v} \in \Gamma$ such that $(\bm{u},\bm{v}) \in E^*_{\texttt{UM}}$, and add these nodes to $L$. 

Table~\ref{tab:TR-UM} shows this enumeration process for $\bm{u} = (0,2,1)$ with $(n,m)=(3,2)$. 
The operations \texttt{Up} and \texttt{Move} generate \[\bm{v} \in \{(1,2,1),(0,2,2),(1,1,1),(1,2,0)\},\] which amounts to searching edges $(\bm{u},\bm{v})$ in Fig.~\ref{fig:hasse1_3}(a). 
We next check whether each $\bm{v}$ satisfies conditions (UM1) or (UM2) in Theorem~\ref{thm:iffUM}. 
As shown in Table~\ref{tab:TR-UM}, we choose \[\bm{v} \in \{(0,2,2),(1,1,1)\}\] and add them to list $L$; this amounts to enumerating edges $(\bm{u},\bm{v})$ in Fig.~\ref{fig:hasse1_3}(b). 

\begin{table}[t]
\renewcommand{\arraystretch}{1.3}
\centering
%\footnotesize (as it is automatically so)
\caption{Process of enumerating $\bm{v} \in \Gamma$ such that $(\bm{u},\bm{v}) \in E^*_{\texttt{UM}}$}
\label{tab:TR-UM}
\begin{tabular}{ccccc} \toprule
%          &                             &           & \multicolumn{2}{c}{Theorem~\ref{thm:iffUM}} \\ \cmidrule(lr){4-5}
$\bm{u}$  & Operation                   & $\bm{v}$  & (UM1)    & (UM2)    \\ \midrule
$(0,2,1)$ & $\texttt{Up}(\bm{u},1)$     & $(1,2,1)$ & $\times$ & ---      \\ 
          & $\texttt{Up}(\bm{u},3)$     & $(0,2,2)$ & $\surd$  & ---      \\
          & $\texttt{Move}(\bm{u},1,2)$ & $(1,1,1)$ & ---      & $\surd$  \\
          & $\texttt{Move}(\bm{u},1,3)$ & $(1,2,0)$ & ---      & $\times$ \\ \bottomrule
\end{tabular}
\end{table}

Appendix~\ref{app1um} presents pseudocode for our constructive algorithm (Algorithm~\ref{app:b1}).
%this enumeration process, named Algorithm~\ref{app:b1}, 
Recalling the time complexity analysis of the breadth-first search~\cite{CoLe09}, one readily sees that the time complexity of Algorithm~\ref{app:b1} is $\mathcal{O}(n(m+1)^n)$, which is much smaller than $\mathcal{O}((m+1)^{2.3729n})$ as achieved by the general-purpose algorithm~\cite{Le14}, especially when $n$ is very large. 
%Our algorithm for transitive reduction $(\Gamma, E^*_{\texttt{UM}})$ is 
%summarized as follows:%in Algorithm~\ref{app:b1}. 
%described in Algorithm~\ref{app:b1} . 

%Each enumeration process can be implemented in $\mathcal{O}(n)$ steps, so the total time complexity of the breadth-first search algorithm is $\mathcal{O}(n(m+1)^n)$. 
%This time complexity is much smaller than $\mathcal{O}((m+1)^{2.3729n})$ achieved by the general-purpose algorithm~\cite{Le14} especially when $n$ is very large. 

Next, we focus on the transitive reduction $(\Gamma, E^*_{\texttt{US}})$ of a directed graph $(\Gamma, E_{\texttt{US}})$ representing the poset $(\Gamma, \preceq_{\texttt{US}})$. 
The transitive reduction can then be characterized by the following theorem: 
\begin{thm} \label{thm:iffUS}
Suppose $(\bm{u},\bm{v}) \in \Gamma \times \Gamma$.
Then, $(\bm{u},\bm{v}) \in E^*_{\texttt{US}}$ holds if and only if any one of the following conditions is fulfilled:   
\begin{itemize}
\item[] {\rm (US1)} $\exists s \in [1,n]$ such that $\bm{v} = {\normalfont \texttt{Up}}(\bm{u},s)$ and $u_j \not\in \{u_s, u_s + 1\}$ for all $j \in [s+1,n]$, or
\item[] {\rm (US2)} $\exists (s,t) \in [1,n] \times [1,n]$ such that $\bm{v} = {\normalfont \texttt{Swap}}(\bm{u},s,t)$ and $u_j \not\in [u_s, u_t]$ for all $j \in [s+1,t-1]$. 
\end{itemize}
\end{thm}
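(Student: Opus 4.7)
The plan is to apply Lemma~\ref{lem:rdc}, which reduces the claim to showing that $\bm u\prec_{\texttt{US}}\bm v$ and no strict $\preceq_{\texttt{US}}$-intermediate exists if and only if (US1) or (US2) holds. A key preliminary observation is that any Up/Swap path from $\bm u$ to $\bm v$ of length $\geq 2$ produces a strict intermediate after its first step, so the minimality condition forces a path of length exactly one: either $\bm v=\texttt{Up}(\bm u,s)$ or $\bm v=\texttt{Swap}(\bm u,s,t)$. Hence the proof splits cleanly into two cases, each becoming an equivalence between the single-step form and the accompanying side condition.

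For the forward ($\Rightarrow$) direction, assume $\bm v=\texttt{Up}(\bm u,s)$ and that (US1) fails, i.e., some $j\in[s+1,n]$ satisfies $u_j\in\{u_s,u_s+1\}$; the task is to exhibit an intermediate $\bm w$. If $u_j=u_s+1$, set $\bm w=\texttt{Swap}(\bm u,s,j)$ and check that $\bm v=\texttt{Up}(\bm w,j)$; if $u_j=u_s$, set $\bm w=\texttt{Up}(\bm u,j)$ and check that $\bm v=\texttt{Swap}(\bm w,s,j)$. In both subcases $\bm w\notin\{\bm u,\bm v\}$, contradicting membership in $E^*_{\texttt{US}}$. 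Similarly, when $\bm v=\texttt{Swap}(\bm u,s,t)$ (so $u_s<u_t$) and (US2) fails at some $j\in[s+1,t-1]$, one constructs $\bm w$ by an auxiliary Swap tailored to the three subcases $u_j=u_s$, $u_j=u_t$, and $u_s<u_j<u_t$. In the last, $\bm w=\texttt{Swap}(\bm u,j,t)$ is reached by one Swap from $\bm u$, and the two further Swaps $\texttt{Swap}(\cdot,s,j)$ then $\texttt{Swap}(\cdot,j,t)$ lead to $\bm v$, giving $\bm u\prec_{\texttt{US}}\bm w\prec_{\texttt{US}}\bm v$ with $\bm w$ distinct from both endpoints.

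For the backward ($\Leftarrow$) direction, (US1) or (US2) immediately yields $\bm u\prec_{\texttt{US}}\bm v$ via a single operation. Suppose toward contradiction that some $\bm w\notin\{\bm u,\bm v\}$ satisfies $\bm u\preceq_{\texttt{US}}\bm w\preceq_{\texttt{US}}\bm v$. Lemma~\ref{lem:lex} gives $\bm u\preceq_{\texttt{lex}}\bm w\preceq_{\texttt{lex}}\bm v$, and since Up raises the coordinate sum by one while Swap preserves it, the sums satisfy $|\bm u|\leq|\bm w|\leq|\bm v|$. In case (US1) we have $|\bm v|=|\bm u|+1$, and lex comparison with $\bm u$ and $\bm v$ (which agree on $[1,s-1]$ and differ by one at position $s$) pins down $w_j=u_j$ for $j<s$ and $w_s\in\{u_s,u_s+1\}$; the side condition that no later $u_j$ equals $u_s$ or $u_s+1$ then precludes any consistent rearrangement of the tail, forcing $\bm w\in\{\bm u,\bm v\}$. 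In case (US2), $|\bm v|=|\bm u|$, so both segments of the path use Swaps only and $\bm w$ shares the multiset of $\bm u$; again lex pinning on $[1,s-1]$ plus the side condition that no $u_j$ for $j\in[s+1,t-1]$ lies in $[u_s,u_t]$ rules out any nontrivial intermediate permutation.

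The main obstacle will be the backward direction in case (US2): one must rule out every chain of Swaps, including ones touching positions outside $\{s,t\}$, that could produce a $\bm w$ both lex-above $\bm u$ and lex-below $\bm v$ with the same multiset. The argument will track how the first position of disagreement between $\bm u$ and $\bm w$ propagates through the lex comparison with $\bm v$, and use the side condition that values in $[u_s,u_t]$ do not appear at intermediate positions of $\bm u$ to show that any Swap altering the prefix must overshoot $\bm v$ lexicographically, while any Swap among later positions cannot change the prefix at all.
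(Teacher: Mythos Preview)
Your proposal is correct and follows essentially the same route as the paper's proof: the same reduction to single-step edges via Lemma~\ref{lem:rdc}, the same case split into \texttt{Up} and \texttt{Swap}, the identical intermediate constructions for the forward direction (including the three-swap chain when $u_s<u_j<u_t$), and the same use of Lemma~\ref{lem:lex} together with the coordinate-sum invariant for the backward direction. Your acknowledgement that the backward (US2) case is the delicate one is apt; the paper handles it by the same lex-pinning on $[1,s-1]$ and then arguing that any first Swap other than $\texttt{Swap}(\cdot,s,t)$ either violates (US2) directly, or forces a later Swap touching $[t+1,n]$ from which $\bm v$ is unreachable---exactly the mechanism you sketch.
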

\begin{proof}
See Appendix~\ref{app:a2}. 
%$\hfill \square$
\end{proof}

Theorem~\ref{thm:iffUS} also gives a constructive algorithm for computing the transitive reduction $(\Gamma, E^*_{\texttt{US}})$. 
Let us again consider $\bm{u} = (0,2,1)$ as an example with $(n,m)=(3,2)$. 
As shown in Table~\ref{tab:TR-US}, operations \texttt{Up} and \texttt{Swap} generate \[\bm{v} \in \{(1,2,1),(0,2,2),(2,0,1),(1,2,0)\},\] and we choose \[\bm{v} \in \{(0,2,2),(2,0,1),(1,2,0)\}\]
(see also Figs.~\ref{fig:hasse2_3}(a)~and~\ref{fig:hasse2_3}(b)).

\begin{table}[t]
%as recommended
\renewcommand{\arraystretch}{1.3}
\centering
%\footnotesize (as it is automatically so)
\caption{Process of enumerating $\bm{v} \in \Gamma$ such that $(\bm{u},\bm{v}) \in E^*_{\texttt{US}}$}
\label{tab:TR-US}
\begin{tabular}{ccccc} \toprule
%          &                             &           & \multicolumn{2}{c}{Theorem~\ref{thm:iffUS}} \\ \cmidrule(lr){4-5}
$\bm{u}$  & Operation                   & $\bm{v}$  & (US1)    & (US2)   \\ \midrule
$(0,2,1)$ & $\texttt{Up}(\bm{u},1)$     & $(1,2,1)$ & $\times$ & ---     \\ 
          & $\texttt{Up}(\bm{u},3)$     & $(0,2,2)$ & $\surd$  & ---     \\
          & $\texttt{Swap}(\bm{u},1,2)$ & $(2,0,1)$ & ---      & $\surd$ \\
          & $\texttt{Swap}(\bm{u},1,3)$ & $(1,2,0)$ & ---      & $\surd$ \\ \bottomrule
\end{tabular}
\end{table}

%Our algorithm for transitive reduction $(\Gamma, E^*_{\texttt{US}})$ is summarized as follows:%in \ref{app:b2}. 
Appendix~\ref{app1us} presents pseudocode %this enumeration process, named Algorithm~\ref{app:b2}, 
for our constructive algorithm (Algorithm~\ref{app:b2}). 
Its time complexity is estimated to be $\mathcal{O}(n^2(m+1)^n)$, which is larger than that of Algorithm~\ref{app:b1} but much smaller than that of the general-purpose algorithm~\cite{Le14}, especially when $n$ is very large. 

\if0
In this case, the total time complexity of the breadth-first search algorithm is $\mathcal{O}(n^2(m+1)^n)$. 
Note here that we can check whether $u_t \in O_s$ in $O(1)$ by preparing an array of length $m+1$ whose $i$-th element takes a value of one if $i$ is in the set and zero otherwise. 
Similarly, $\textsc{Append}(O_s, u_t)$ can be done in $O(1)$. 
\fi

\section{Experiments}
\label{sec:6}
The experimental results reported in this section evaluate the effectiveness of our method for estimating item-choice probabilities. 

We used real-world clickstream data collected from a Chinese e-commerce website, Tmall\footnote{\url{https://tianchi.aliyun.com/dataset/}}. 
We used a dataset\footnote{\url{https://www.dropbox.com/sh/dbzmtq4zhzbj5o9/AACldzQWbw-igKjcPTBI6ZPAa?dl=0}} preprocessed by Ludewig and Jannach~\cite{LuJa18}.  
Each record corresponds to one PV and contains information such as user ID, item ID, and a timestamp. 
The dataset includes 28,316,459 unique user--item pairs composed from 422,282 users and 624,221 items. 

\subsection{Methods for comparison}
\label{sec:6.1}
\begin{table}[t]
\renewcommand{\arraystretch}{1.3}
\centering
\caption{Methods for comparison}
\label{tbl:methods}
\footnotesize
\begin{tabular}{ll} \toprule
Abbr.  & Method \\ \midrule
2dimEmp  & Empirical probability table~\eqref{eq:emp1}~\cite{IwNi16} \\
2dimMono & Two-dimensional monotonicity model~\eqref{obj:Mono}--\eqref{con3:Mono}~\cite{IwNi16} \\
SeqEmp   & Empirical probabilities~\eqref{eq:emp2} for PV sequences \\
SeqUM    & Our PV sequence model~\eqref{obj:PVS}--\eqref{con2:PVS} using $(\Gamma,\preceq_{\texttt{UM}})$ \\
SeqUS    & Our PV sequence model~\eqref{obj:PVS}--\eqref{con2:PVS} using $(\Gamma,\preceq_{\texttt{US}})$ \\
LR       & $L_2$-regularized logistic regression \\
ANN      & Artificial neural networks for regression \\
RF       & Random forest of regression trees \\
\bottomrule
\end{tabular}
\end{table}

We compared the performance of the methods listed in Table~\ref{tbl:methods}.
All computations were performed on an Apple MacBook Pro computer with an Intel Core i7-5557U CPU (3.10~GHz) and 16~GB of memory.

The optimization models~\eqref{obj:Mono}--\eqref{con3:Mono} and \eqref{obj:PVS}--\eqref{con2:PVS} were solved using OSQP\footnote{\url{https://osqp.org/docs/index.html}}~\cite{StBaXX}, a numerical optimization package for solving convex quadratic optimization problems. 
As in Table~\ref{tab:PV}, daily-PV sequences were 
calculated for each user--item pair, where $m$ is the maximum number of daily PVs and $n$ is the number of terms (past days) in the PV sequence. 
In this process, all PVs from more than $n$ days earlier %was 
were 
added to the number of PVs $n$ days earlier, and numbers of daily PVs exceeding $m$ were rounded down to $m$. 
Similarly, the recency--frequency combinations $(r,f) \in R \times F$ were calculated using daily PVs as in Table~\ref{tab:PV}, where $(|R|,|F|) = (n,m)$. 

Other machine learning methods (LR, ANN, and RF) were respectively implemented using the \textsf{LogisticRegressionCV}, \textsf{MLPRegressor}, and \textsf{RandomForestRegressor} functions in scikit-learn, a Python library of machine learning tools. 
Related hyperparameters were tuned through 3-fold cross-validation according to the parameter settings in a benchmark study~\cite{OrCa18}. 
These machine learning methods employed the PV sequence $(v_1,v_2,\ldots,v_n)$ as $n$ input variables for computing item-choice probabilities. 
We standardized each input variable and performed undersampling to improve prediction performance. 

\subsection{Performance evaluation methodology}
\label{sec:6.2}

There are five pairs of training and validation sets of clickstream data in the preprocessed dataset~\cite{LuJa18}. 
As shown in Table~\ref{tbl:periods}, each training period is 90 days, and the next day is the validation period. 
The first four pairs of training and validation sets, which we call the \emph{training set}, were used for model estimation, and the fifth pair was used for performance evaluation. 
To examine how sample size affects prediction performance, we prepared small-sample training sets by randomly choosing user--item pairs from the original training set. 
Here, the sampling rates are 0.1\%, 1\%, and 10\%, and the original training set is referred to as the full sample. 
Note that the results were averaged over 10 trials for the sampled training sets. 

\begin{table}[t]
\renewcommand{\arraystretch}{1.3}
\centering
\caption{Training and validation periods}
\label{tbl:periods}
\footnotesize
\begin{tabular}{clll} \toprule
        & \multicolumn{2}{c}{Training}     &                   \\ \cmidrule(lr){2-3}
Pair ID & \multicolumn{1}{c}{Start} & \multicolumn{1}{c}{End} & \multicolumn{1}{c}{Validation} \\ \midrule
1       & 21 May 2015  & 18 August 2015    & 19 August 2015    \\ 
2       & 31 May 2015  & 28 August 2015    & 29 August 2015    \\ 
3       & 10 June 2015 &  7 September 2015 &  8 September 2015 \\ 
4       & 20 June 2015 & 17 September 2015 & 18 September 2015 \\ 
5       & 30 June 2015 & 27 September 2015 & 28 September 2015 \\ 
\bottomrule
\end{tabular}
\end{table}
~
\begin{table*}[t]
\renewcommand{\arraystretch}{1.3}
\centering
\caption{Problem size of our PV sequence model~\eqref{obj:PVS}--\eqref{con2:PVS}}
\label{tbl:problem_size}
\footnotesize
\begin{tabular}{ccrrrrrrr} \toprule
    &     &        & \multicolumn{6}{c}{\#Cons in Eq.~\eqref{con1:PVS}} \\ \cmidrule(lr){4-9}
    &     &        & \multicolumn{2}{c}{Enumeration} & \multicolumn{2}{c}{Operation} & \multicolumn{2}{c}{Reduction} \\ \cmidrule(lr){4-5}\cmidrule(lr){6-7}\cmidrule(lr){8-9}
$n$ & $m$ & \#Vars & SeqUM      & SeqUS      & SeqUM   & SeqUS   & SeqUM  & SeqUS  \\ \midrule
5   & 1   & 32     & 430        & 430        & 160     & 160     & 48     & 48     \\
5   & 2   & 243    & 21,383     & 17,945     & 1,890   & 1,620   & 594    & 634    \\
5   & 3   & 1,024  & 346,374    & 255,260    & 9,600   & 7,680   & 3,072  & 3,546  \\
5   & 4   & 3,125  & 3,045,422  & 2,038,236  & 32,500  & 25,000  & 10,500 & 12,898 \\
5   & 5   & 7,776  & 18,136,645 & 11,282,058 & 86,400  & 64,800  & 28,080 & 36,174 \\
5   & 6   & 16,807 & 82,390,140 & 48,407,475 & 195,510 & 144,060 & 63,798 & 85,272 \\ \midrule
1   & 6   & 7      & 21         & 21         & 6       & 6       & 6      & 6      \\
2   & 6   & 49     & 1,001      & 861        & 120     & 105     & 78     & 93     \\
3   & 6   & 343    & 42,903     & 32,067     & 1,638   & 1,323   & 798    & 1,018  \\
4   & 6   & 2,401  & 1,860,622  & 1,224,030  & 18,816  & 14,406  & 7,350  & 9,675  \\
5   & 6   & 16,807 & 82,390,140 & 48,407,475 & 195,510 & 144,060 & 63,798 & 85,272 \\ \bottomrule
\end{tabular}
\end{table*}

We considered the \emph{top-$N$ selection} task to evaluate prediction performance. 
Specifically, we focused on items that were viewed by a particular user during a training period.
From among these items, we selected $I_{\rm sel}$, a set of top-$N$ items for the user according to estimated item-choice probabilities. 
The most-recently viewed items were selected when two or more items had the same choice probability. 
Let $I_{\rm view}$ be the set of items viewed by the user in the validation period. 
Then, the \emph{F1 score} is defined by the harmonic average of $\mbox{\emph{Recall}} := |I_{\rm sel} \cap I_{\rm view}|/|I_{\rm view}|$ and $\mbox{\emph{Precision}} := |I_{\rm sel} \cap I_{\rm view}|/|I_{\rm sel}|$ as
\[
\mathrm{F1~score} := \frac{2 \cdot \mathrm{Recall} \cdot \mathrm{Precision}}{\mathrm{Recall} + \mathrm{Precision}}.
\]
In the following sections, we examine F1 scores averaged over all users. 
The percentage of user--item pairs leading to item choices is only 0.16\%.

\subsection{Effects of the transitive reduction}
\label{sec:6.3}

We generated constraints in Eq.~\eqref{con1:PVS} based on the following three directed graphs:  
\begin{itemize}
\item Case 1~(Enumeration): All edges $(\bm{u},\bm{v})$ satisfying $\bm{u} \prec \bm{v}$ were enumerated.  
\item Case 2~(Operation): Edges corresponding to operations \texttt{Up}, \texttt{Move}, and \texttt{Swap} were generated as in Figs.~\ref{fig:hasse1_3}(a) and \ref{fig:hasse2_3}(a). 
\item Case 3~(Reduction): Transitive reduction was computed using our algorithms as in Figs.~\ref{fig:hasse1_3}(b) and \ref{fig:hasse2_3}(b). 
\end{itemize}

\begin{table*}[h]
\renewcommand{\arraystretch}{1.3}
\centering
\caption{Computation times for our PV sequence model~\eqref{obj:PVS}--\eqref{con2:PVS}}
\label{tbl:comp_time}
\begin{tabular}{ccrrrrrrr} \toprule
    &     &        & \multicolumn{6}{c}{Time [s]} \\ \cmidrule(lr){4-9}
    &     &        & \multicolumn{2}{c}{Enumeration} & \multicolumn{2}{c}{Operation} & \multicolumn{2}{c}{Reduction} \\ \cmidrule(lr){4-5}\cmidrule(lr){6-7}\cmidrule(lr){8-9}
$n$ & $m$ & \#Vars & SeqUM  & SeqUS & SeqUM  & SeqUS  & SeqUM  & SeqUS  \\ \midrule
5   & 1   & 32     & 0.00   & 0.01  & 0.00   & 0.00   & 0.00   & 0.00   \\
5   & 2   & 243    & 2.32   & 1.66  & 0.09   & 0.07   & 0.03   & 0.02   \\
5   & 3   & 1,024  & 558.22 & 64.35 & 3.41   & 0.71   & 0.13   & 0.26   \\
5   & 4   & 3,125  & OM     & OM    & 24.07  & 13.86  & 1.72   & 5.80   \\
5   & 5   & 7,776  & OM     & OM    & 180.53 & 67.34  & 9.71   & 36.94  \\
5   & 6   & 16,807 & OM     & OM    & 906.76 & 522.84 & 86.02  & 286.30 \\ \midrule
1   & 6   & 7      & 0.00   & 0.00  & 0.00   & 0.00   & 0.00   & 0.00   \\
2   & 6   & 49     & 0.03   & 0.01  & 0.01   & 0.00   & 0.00   & 0.00   \\
3   & 6   & 343    & 12.80  & 1.68  & 0.20   & 0.03   & 0.05   & 0.02   \\
4   & 6   & 2,401  & OM     & OM    & 8.07   & 4.09   & 2.12   & 2.87   \\
5   & 6   & 16,807 & OM     & OM    & 906.76 & 522.84 & 86.02  & 286.30 \\ \bottomrule
\end{tabular}
\end{table*}
\begin{table*}[h]
\renewcommand{\arraystretch}{1.3}
\centering
\caption{Computational performance of our PV sequence model~\eqref{obj:PVS}--\eqref{con2:PVS}}
\label{tbl:perform}
\begin{tabular}{ccrrrrrrrr} \toprule
    &     &        & \multicolumn{2}{c}{\#Cons in Eq.~\eqref{con1:PVS}} & \multicolumn{2}{c}{Time [s]} & \multicolumn{3}{c}{F1 score [\%], $N=3$}    \\ \cmidrule(lr){4-5} \cmidrule(lr){6-7} \cmidrule(lr){8-10}
$n$ & $m$ & \#Vars & SeqUM  & SeqUS   & SeqUM  & SeqUS  & SeqEmp & SeqUM & SeqUS \\ \midrule
3   & 30  & 29,791 & 84,630 & 118,850 &  86.72 & 241.46 & 12.25  & 12.40 & 12.40 \\
4   & 12  & 28,561 & 99,372 & 142,800 & 198.82 & 539.76 & 12.68  & 12.93 & 12.95 \\
5   & 6   & 16,807 & 63,798 & 85,272  &  86.02 & 286.30 & 12.90  & 13.18 & 13.18 \\
6   & 4   & 15,625 & 62,500 & 76,506  &  62.92 & 209.67 & 13.14  & 13.49 & 13.48 \\
7   & 3   & 16,384 & 67,584 & 76,818  &  96.08 & 254.31 & 13.23  & 13.52 & 13.53 \\
8   & 2   &  6,561 & 24,786 & 25,879  &  19.35 &  17.22 & 13.11  & 13.37 & 13.35 \\
9   & 2   & 19,683 & 83,106 & 86,386  & 244.15 & 256.42 & 13.07  & 13.40 & 13.37 \\ \bottomrule
\end{tabular}
\end{table*}

Table~\ref{tbl:problem_size} shows the problem size of our PV sequence model~\eqref{obj:PVS}--\eqref{con2:PVS} for some $(n,m)$ settings of the PV sequence. 
Here, the ``\#Vars'' column shows the number of decision variables (i.e., $(m+1)^n$), and the subsequent columns show the number of constraints in Eq.~\eqref{con1:PVS} for the three cases mentioned above. 

The number of constraints grew rapidly as $n$ and $m$ increased in the enumeration case. 
In contrast, the number of constraints was always kept smallest by the transitive reduction among the three cases. 
When $(n,m) = (5,6)$, for instance, transitive reductions reduced the number of constraints in the operation case to $63798/195510 \approx 32.6\%$ for SeqUM and $85272/144060 \approx 59.2\%$ for SeqUS. 

The number of constraints was larger for SeqUM than for SeqUS in the enumeration and operation cases. 
In contrast, the number of constraints was often smaller for SeqUM than for SeqUS in the reduction case. 
Thus, the transitive reduction had a greater impact on SeqUM than on SeqUS in terms of the number of constraints. 

Table~\ref{tbl:comp_time} lists the computation times required for solving the optimization problem~\eqref{obj:PVS}--\eqref{con2:PVS} for some $(n,m)$ settings of the PV sequence. 
Here, ``OM'' indicates that computation was aborted due to a lack of memory. 
The enumeration case often caused out-of-memory errors because of the huge number of constraints (see Table~\ref{tbl:problem_size}), but
the operation and reduction cases completed the computations for all $(n,m)$ settings for the PV sequence. 
Moreover, the transitive reduction made computations faster. 
A notable example is SeqUM with $(n,m) = (5,6)$, for which the computation time in the reduction case (86.02~s) was only one-tenth of that in the operation case (906.76~s). 
These results demonstrate that transitive reduction improves efficiency in terms of both computation time and memory usage.

%\clearpage

Table~\ref{tbl:perform} shows the computational performance of our optimization model~\eqref{obj:PVS}--\eqref{con2:PVS} for some $(n,m)$ settings of PV sequences. 
Here, for each $n \in \{3,4\ldots,9\}$, the largest $m$ was chosen such that the computation finished within 30~min. 
Both SeqUM and SeqUS always delivered higher F1 scores than SeqEmp did. 
This indicates that our monotonicity constraint~\eqref{con1:PVS} works well for improving prediction performance. 
The F1 scores provided by SeqUM and SeqUS were very similar and largest with $(n,m) = (7,3)$. 
In light of these results, we use the setting $(n,m) \in \{(7,3),(5,6)\}$ in the following sections.

\subsection{Prediction performance of our PV sequence model}
\label{sec:6.4}

%%%%%%%%%%%%%%%%%%%%%%%
\begin{figure*}[h]
\tabcolsep = 5pt
\centering
\begin{tabular}{ccc}
\includegraphics[keepaspectratio, scale=0.42, bb=50 0 450 330]{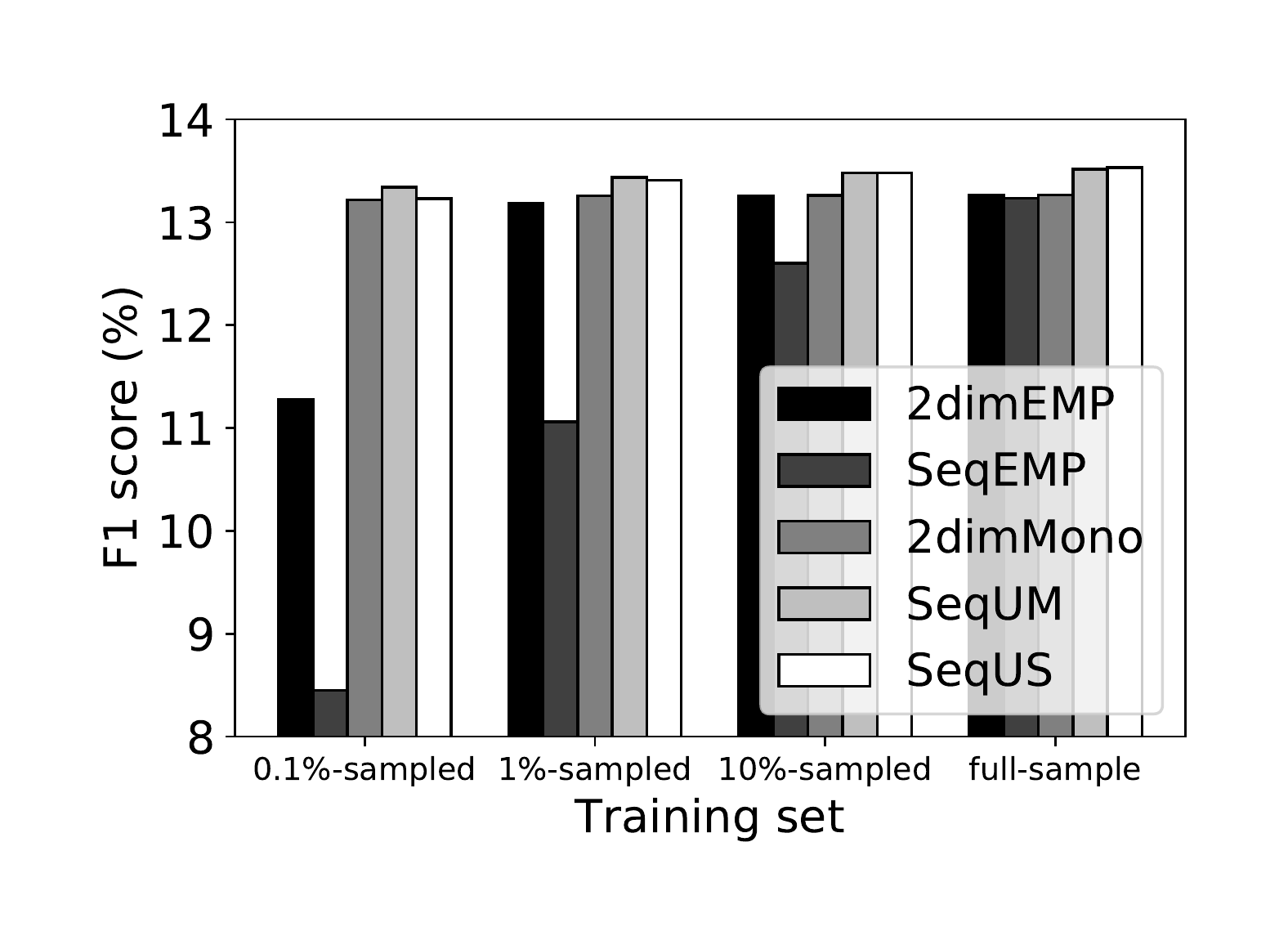}
&\includegraphics[keepaspectratio, scale=0.42, bb=50 0 450 330]{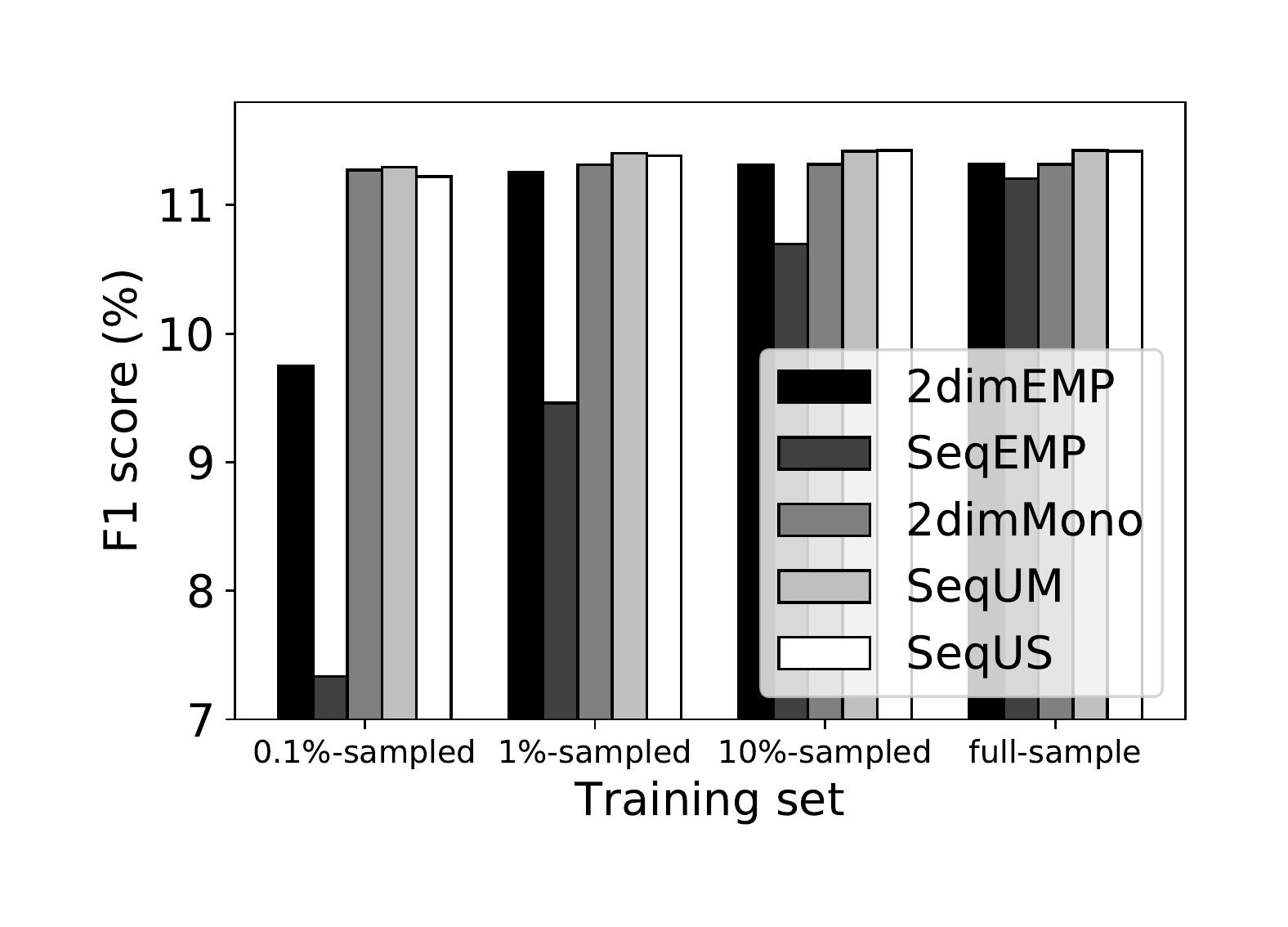}
&\includegraphics[keepaspectratio, scale=0.42, bb=50 0 450 330]{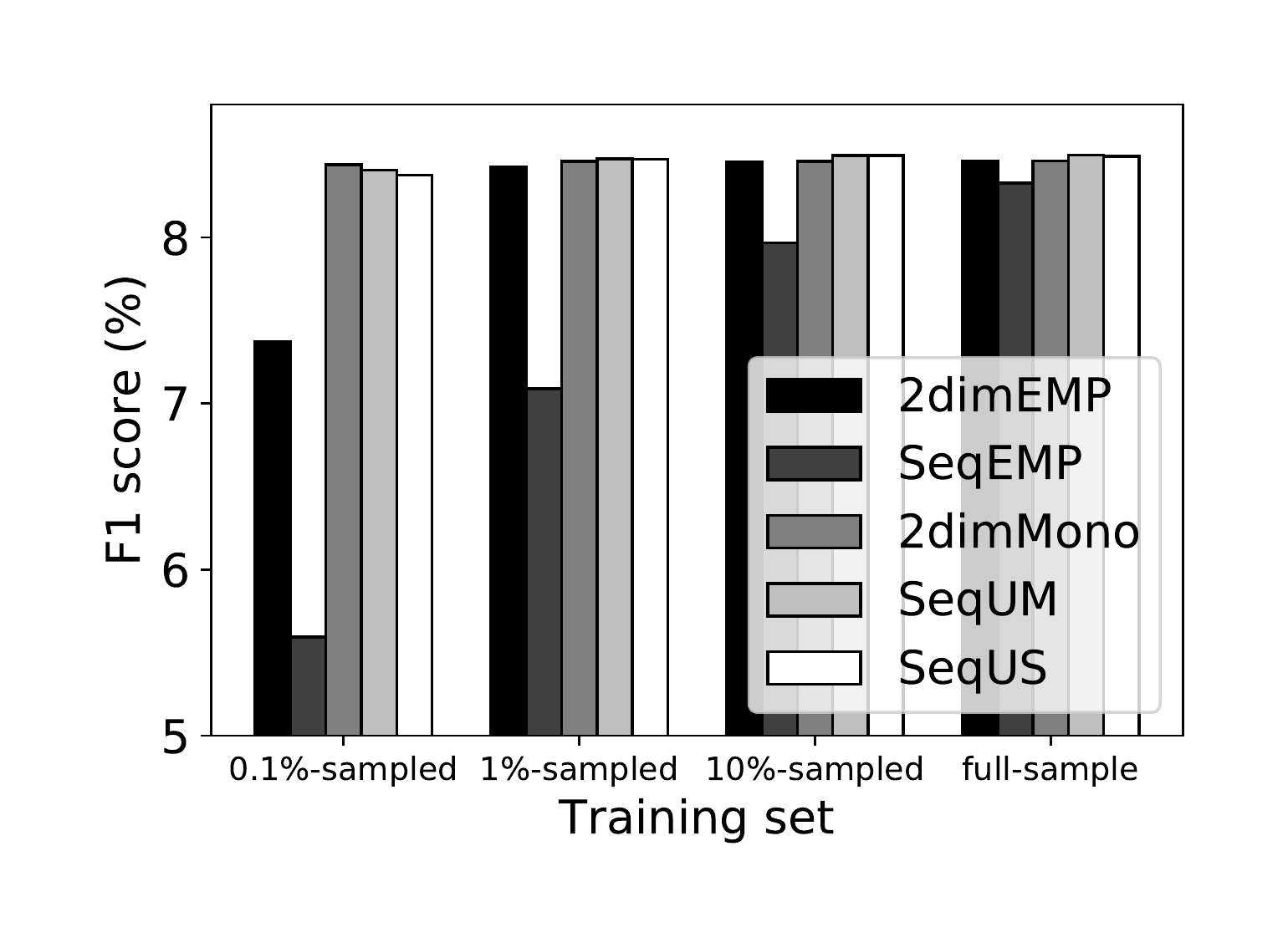}
\\[-.1in]
(a)~$N = 3$, $(n,m)=(7,3)$
&(b)~$N = 5$, $(n,m)=(7,3)$
&(c)~$N = 10$, $(n,m)=(7,3)$\\[-.00in]
\includegraphics[keepaspectratio, scale=0.42, bb=50 0 450 330]{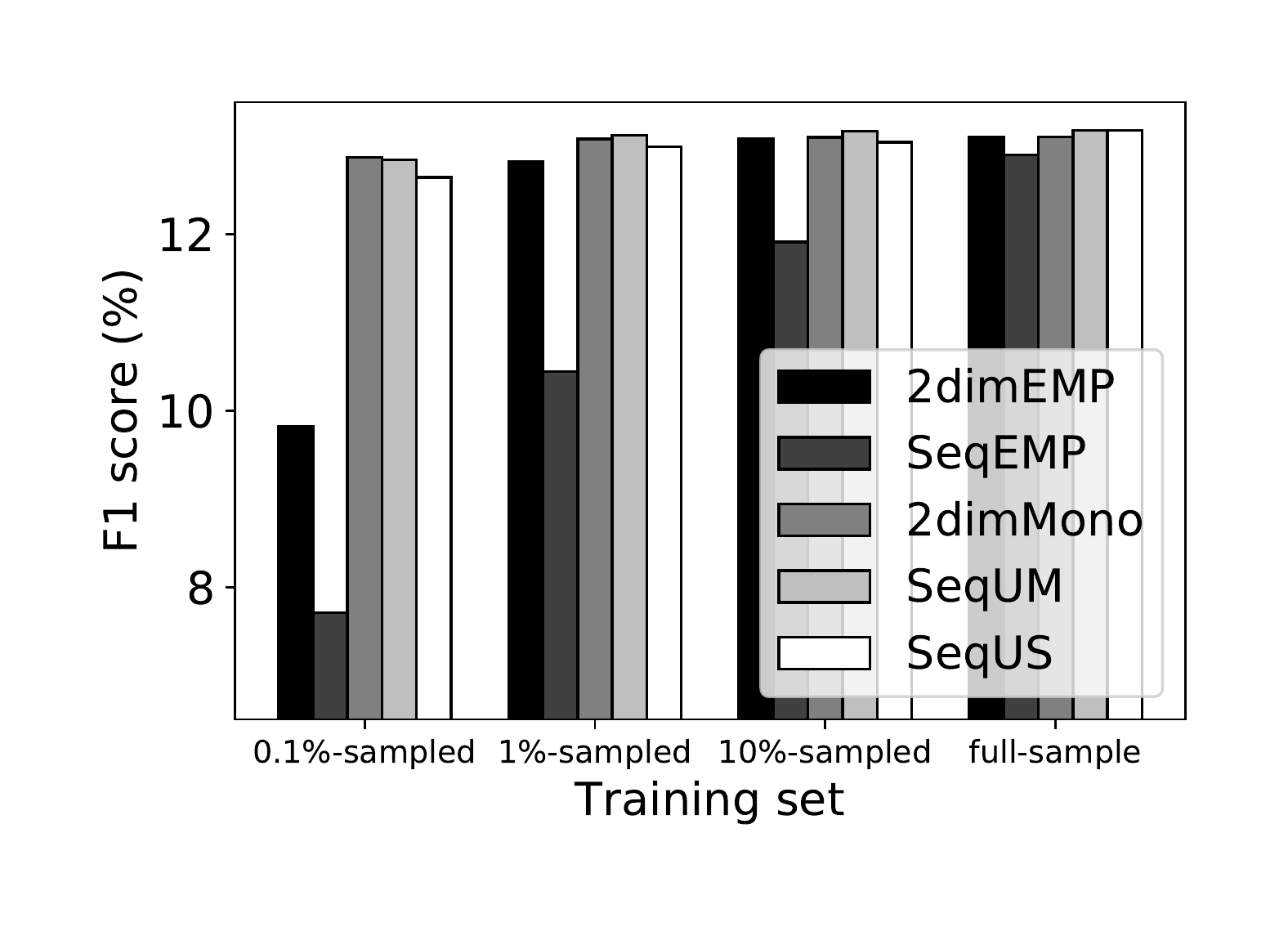}
&\includegraphics[keepaspectratio, scale=0.42, bb=50 0 450 330]{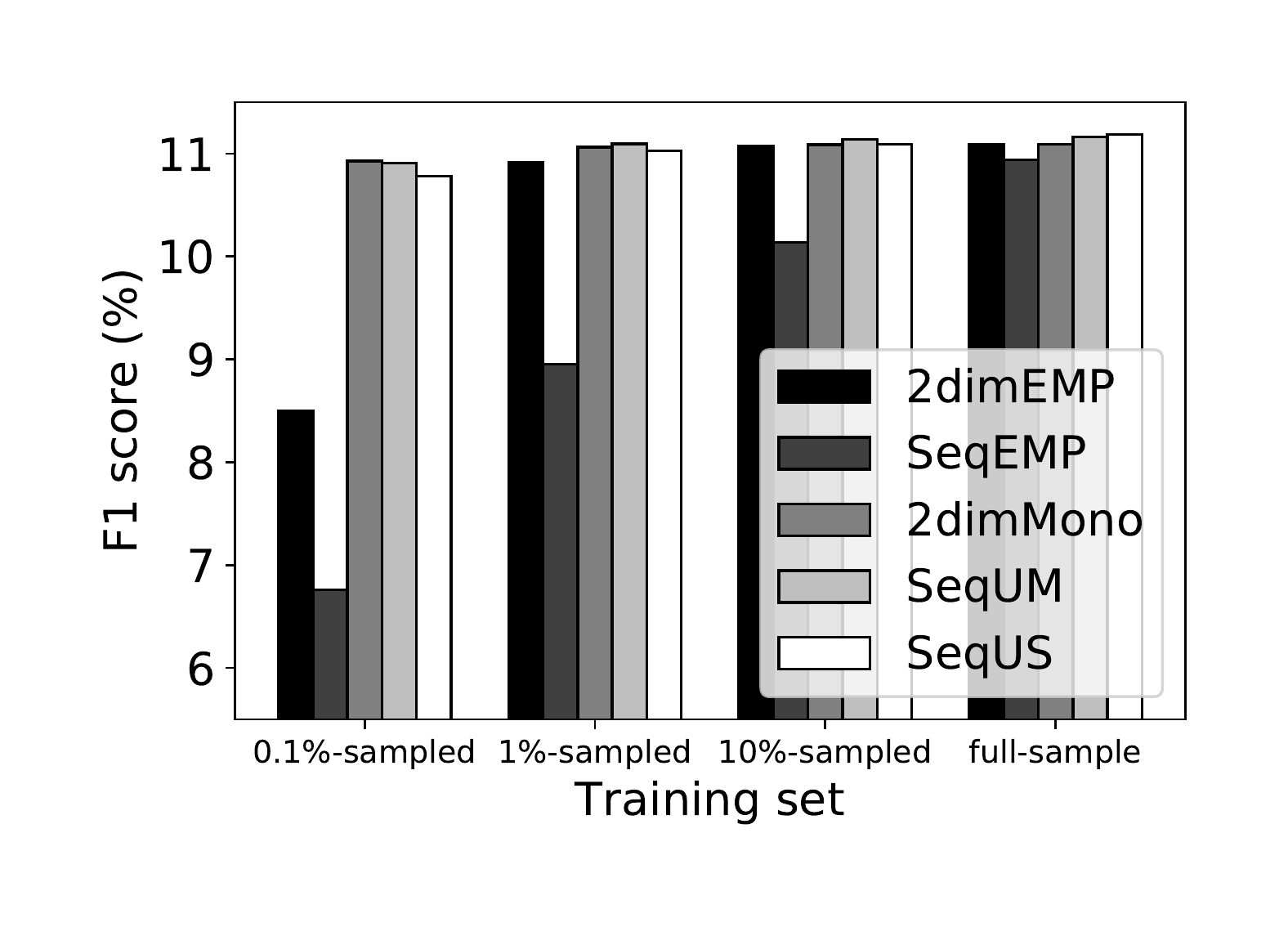}
&\includegraphics[keepaspectratio, scale=0.42, bb=50 0 450 330]{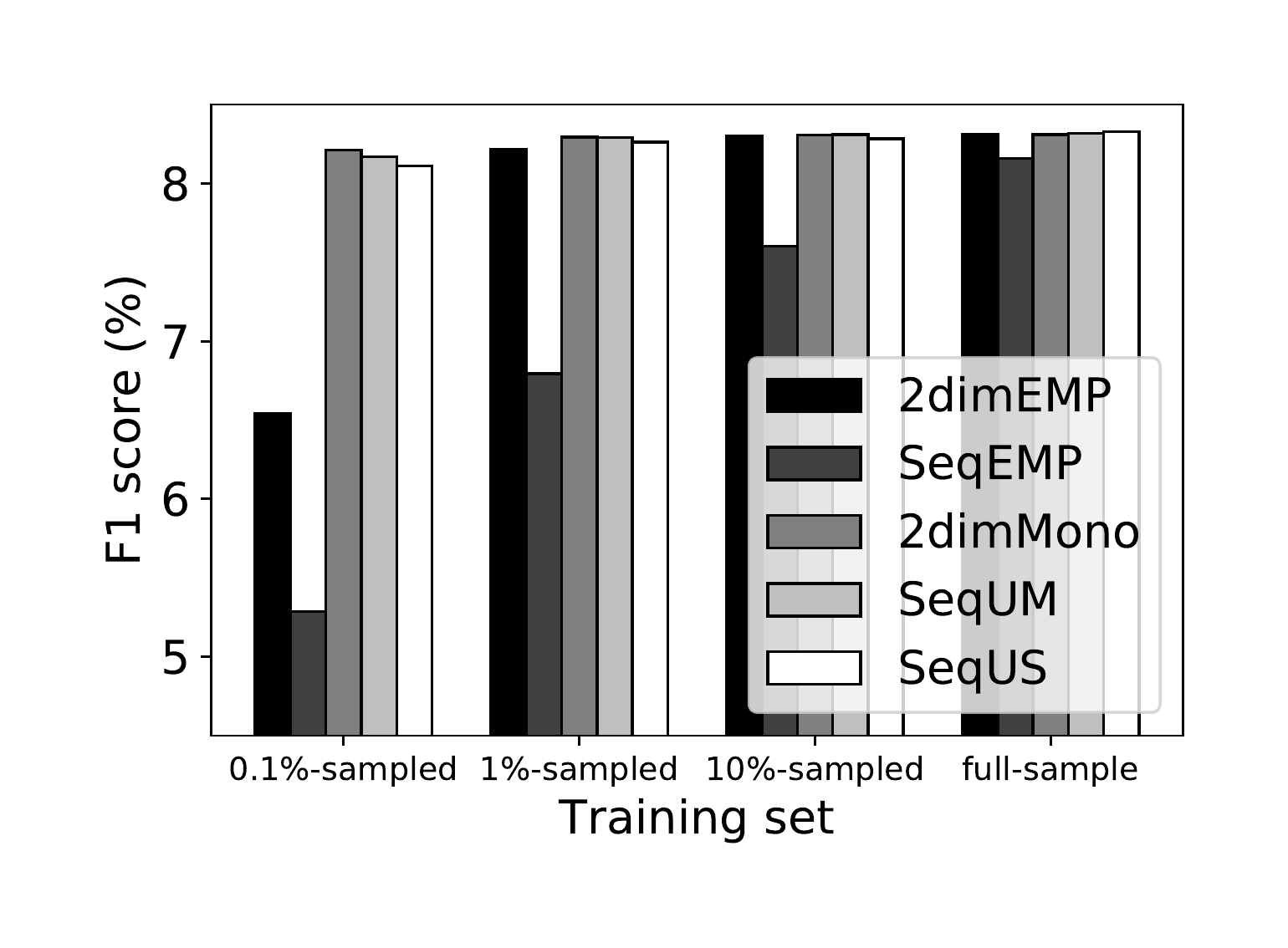}
\\[-.1in]
(d)~$N = 3$, $(n,m)=(5,6)$
&(e)~$N = 5$, $(n,m)=(5,6)$
&(f)~$N = 10$, $(n,m)=(5,6)$
\end{tabular}
\caption{Comparison of prediction performance versus the two-dimensional probability table.}
\label{fig:hasse12_sampling}
\end{figure*}
\begin{figure*}[t]
\centering
\tabcolsep = 5pt
\begin{tabular}{ccc}
\includegraphics[keepaspectratio, scale=0.39, bb=40 20 470 410]{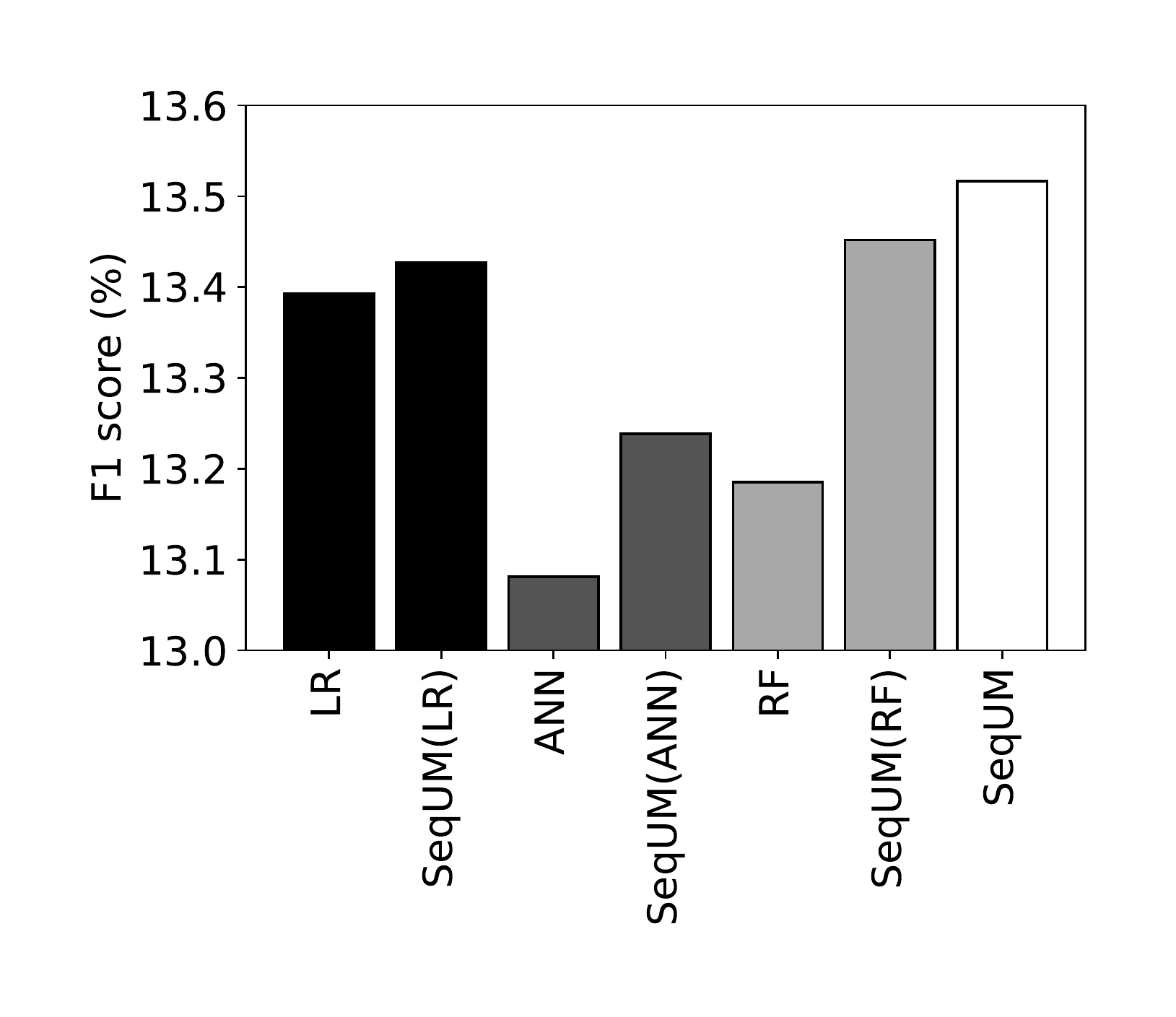}& 
\includegraphics[keepaspectratio, scale=0.39, bb=40 20 470 410]{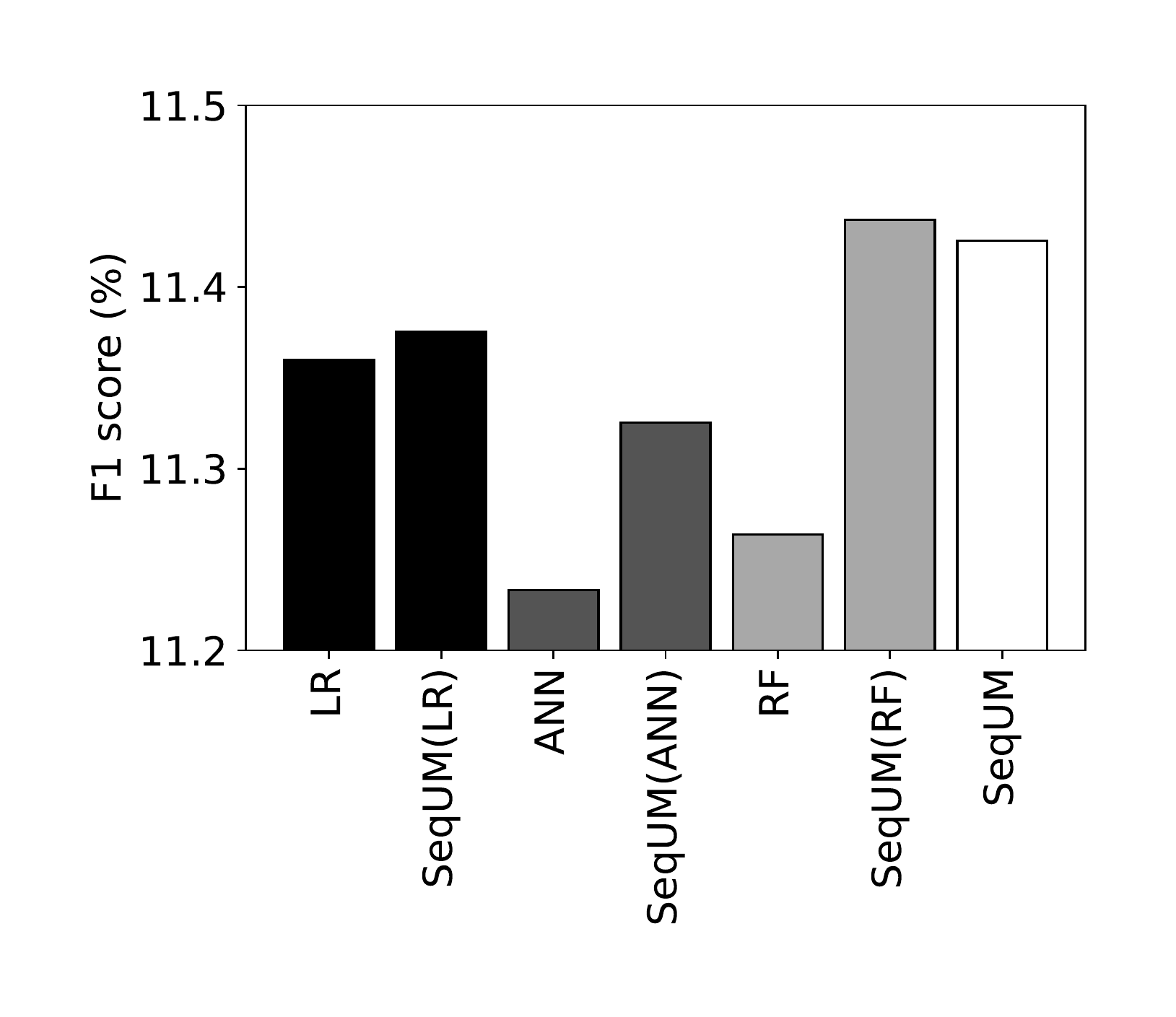}& 
\includegraphics[keepaspectratio, scale=0.39, bb=40 20 470 410]{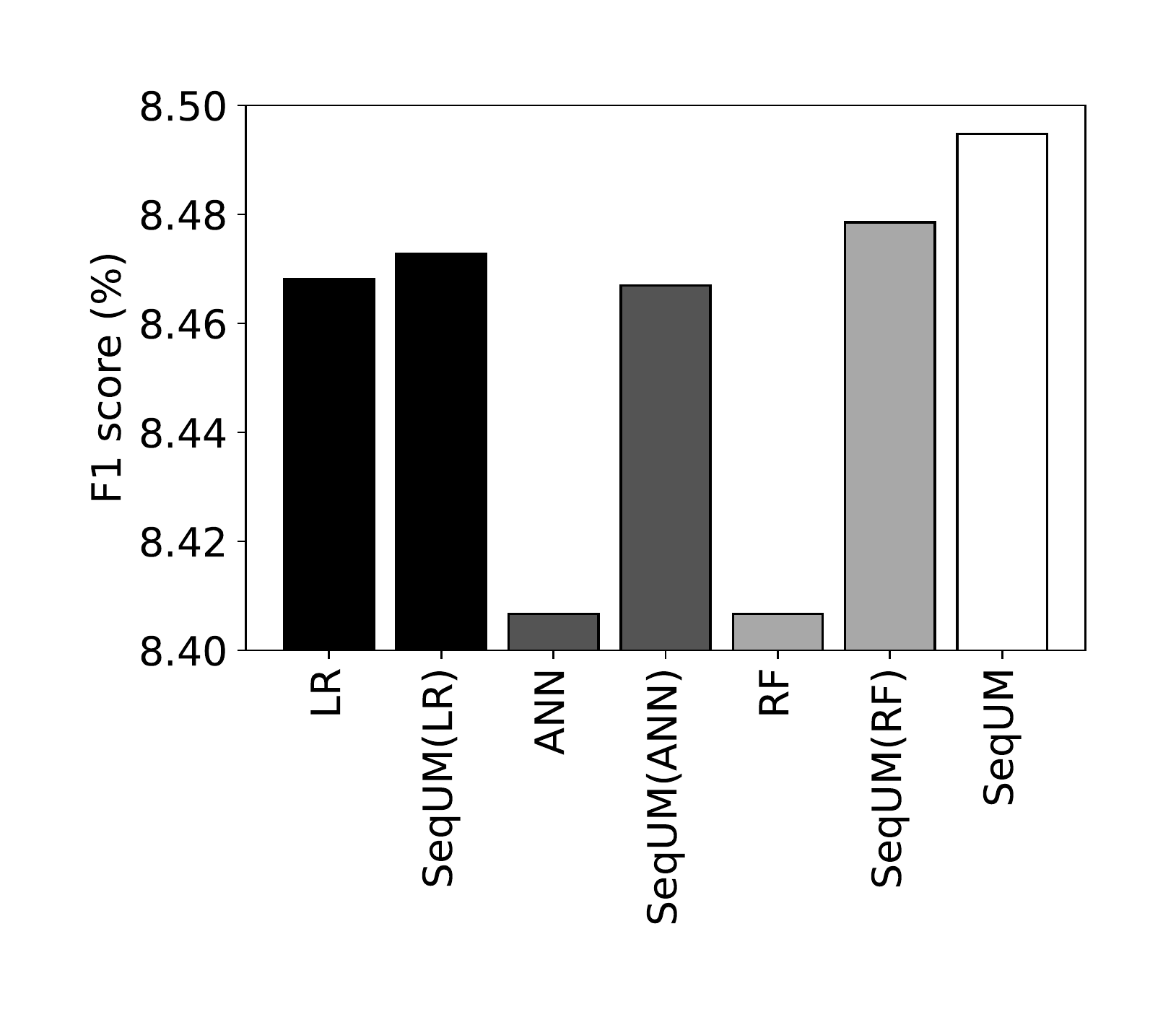}
\\[-.05in]
(a) $N = 3$, $(n,m)=(7,3)$&
(b) $N = 5$, $(n,m)=(7,3)$&
(c) $N = 10$, $(n,m)=(7,3)$\\[-.00in]
\includegraphics[keepaspectratio, scale=0.39, bb=40 20 470 410]{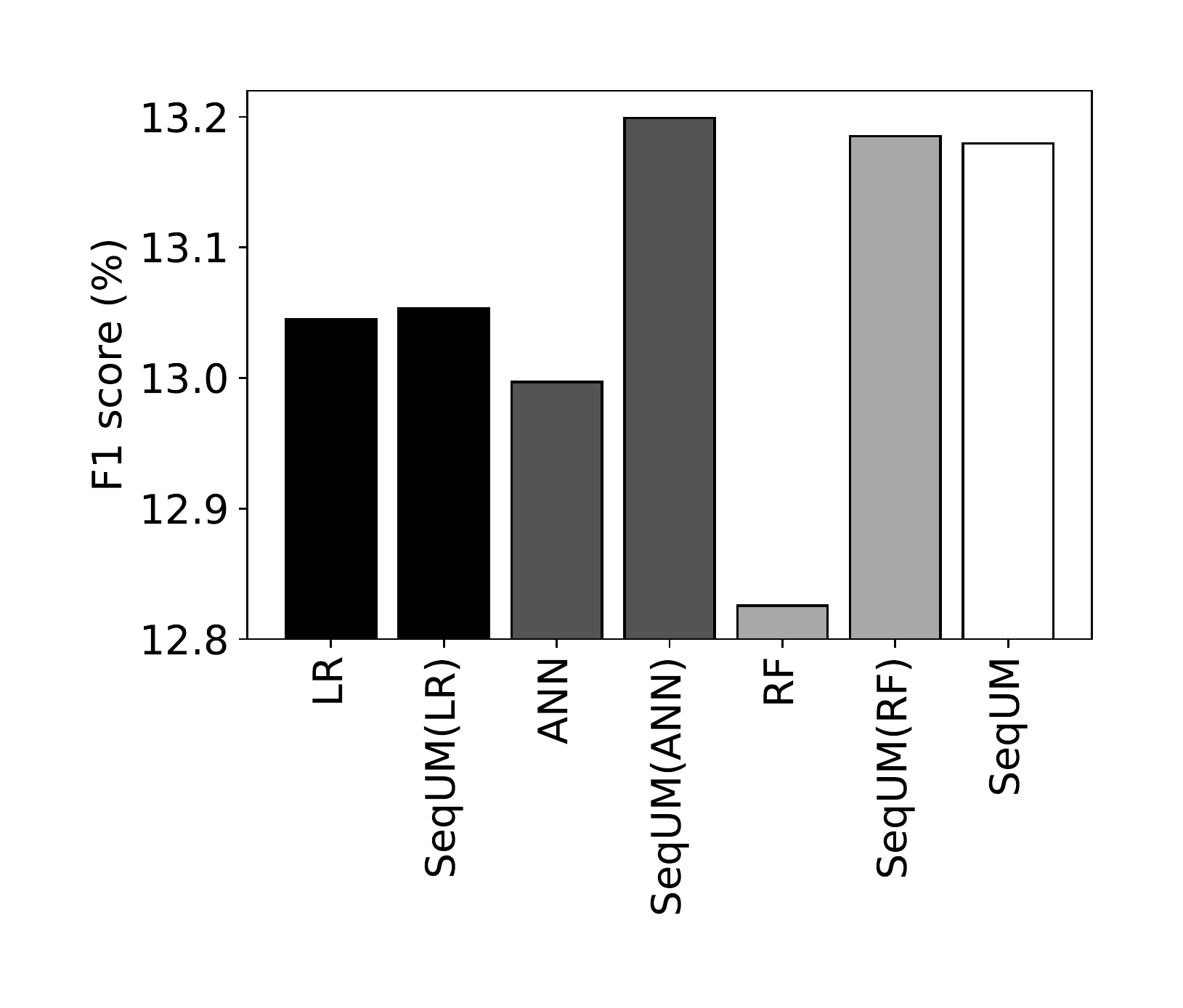}&
\includegraphics[keepaspectratio, scale=0.39, bb=40 20 470 410]{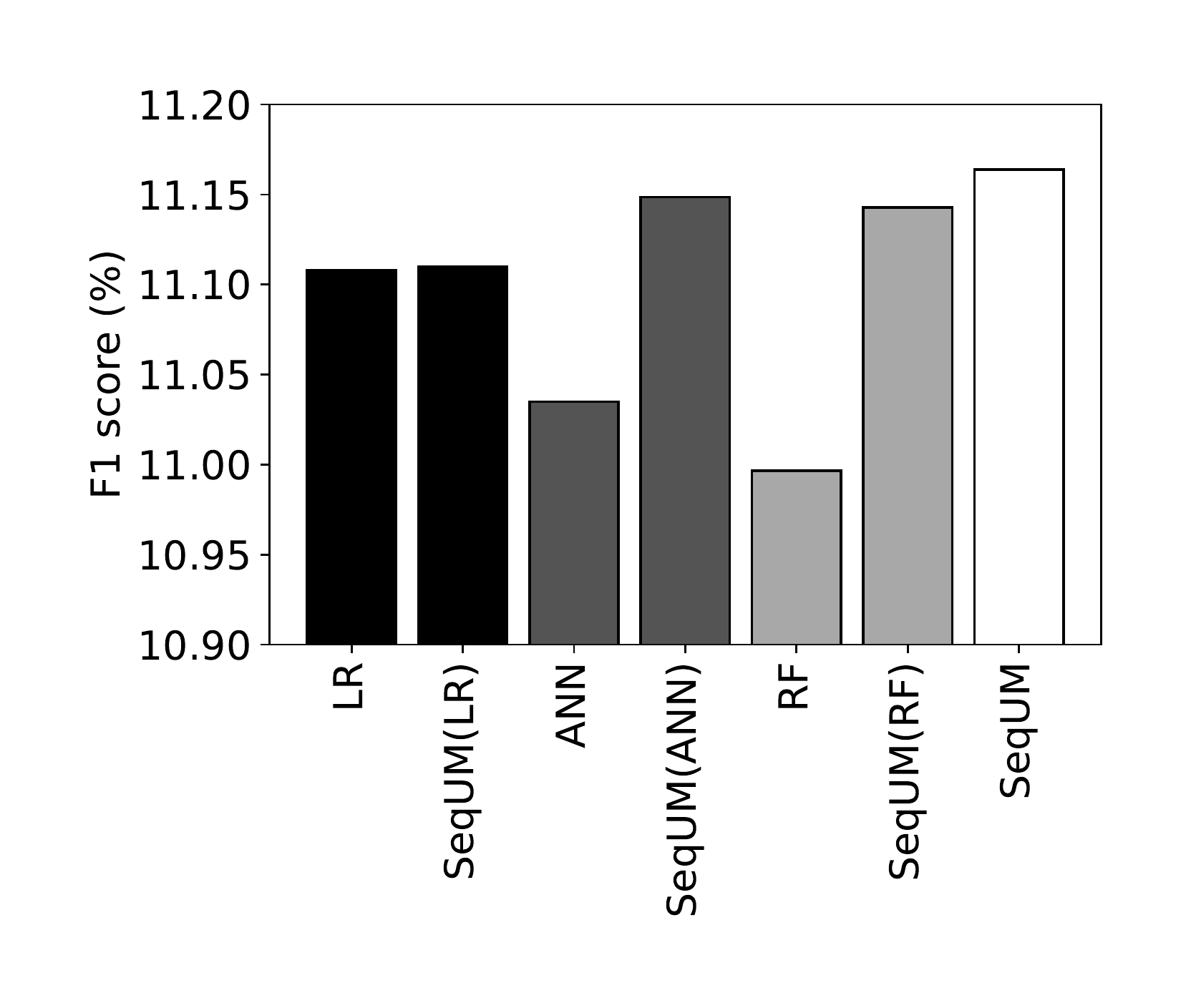}&
\includegraphics[keepaspectratio, scale=0.39, bb=40 20 470 410]{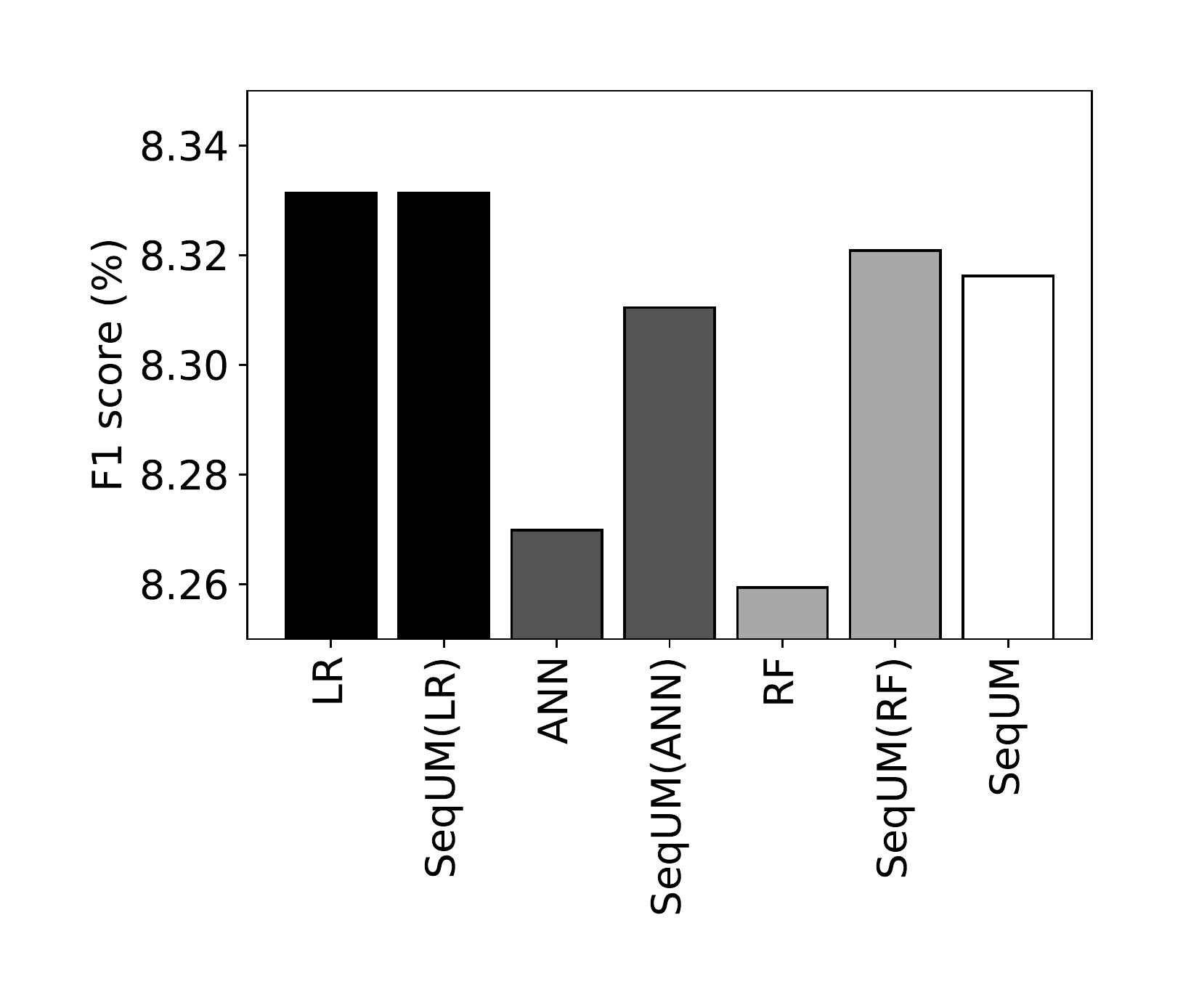}
\\[-.05in]
(d) $N = 3$, $(n,m)=(5,6)$&
(e) $N = 5$, $(n,m)=(5,6)$&
(f) $N = 10$, $(n,m)=(5,6)$
\end{tabular}
\caption{Comparison of prediction performance versus machine learning methods.}
\label{fig:ml_hasse}
\end{figure*}

\begin{figure*}[t]
\centering
\tabcolsep = 10pt
\begin{tabular}{ccc}
\includegraphics[width=5.50cm,bb=70 0 400 250,clip]{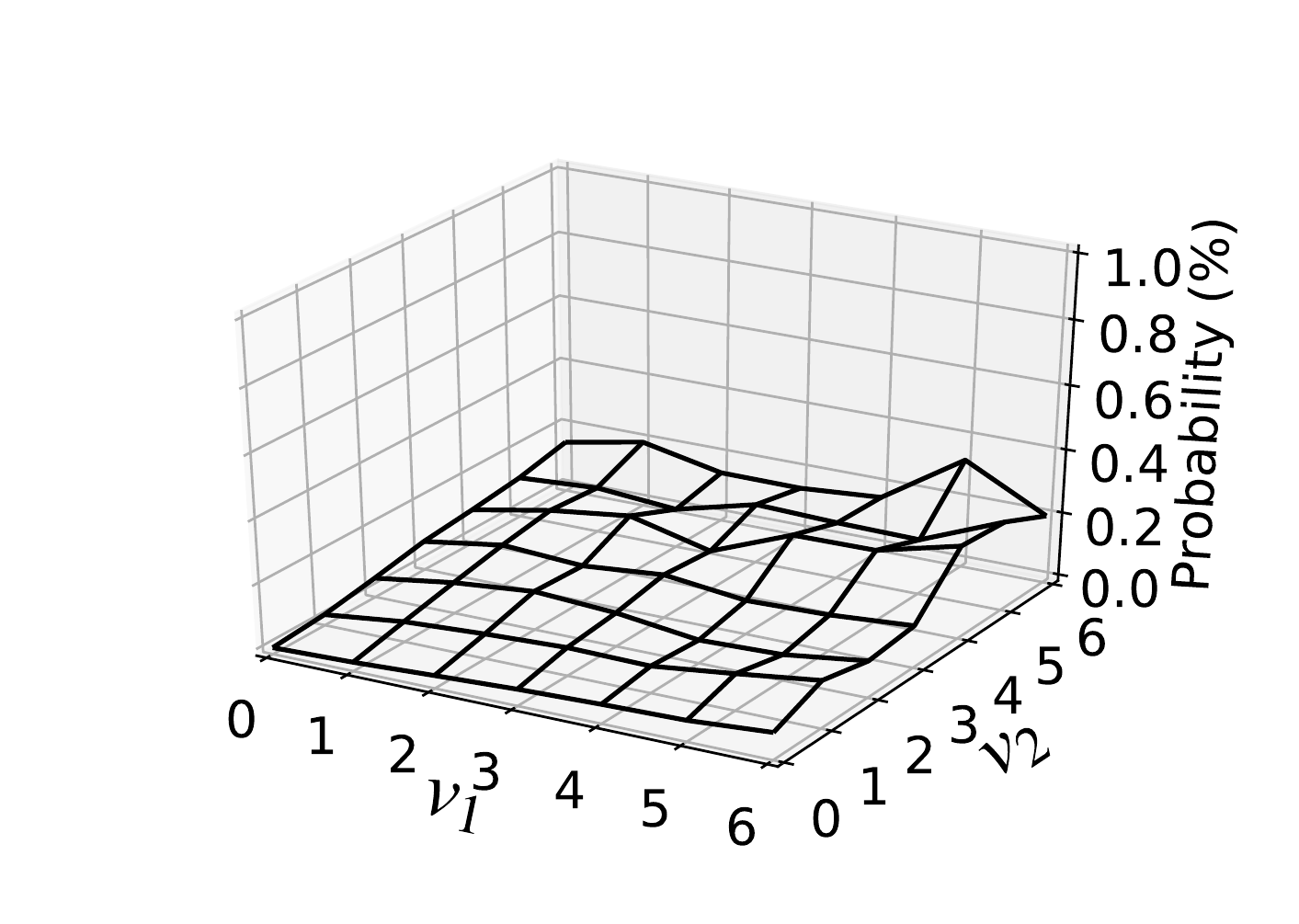} & 
\includegraphics[width=5.50cm,bb=70 0 400 250,clip]{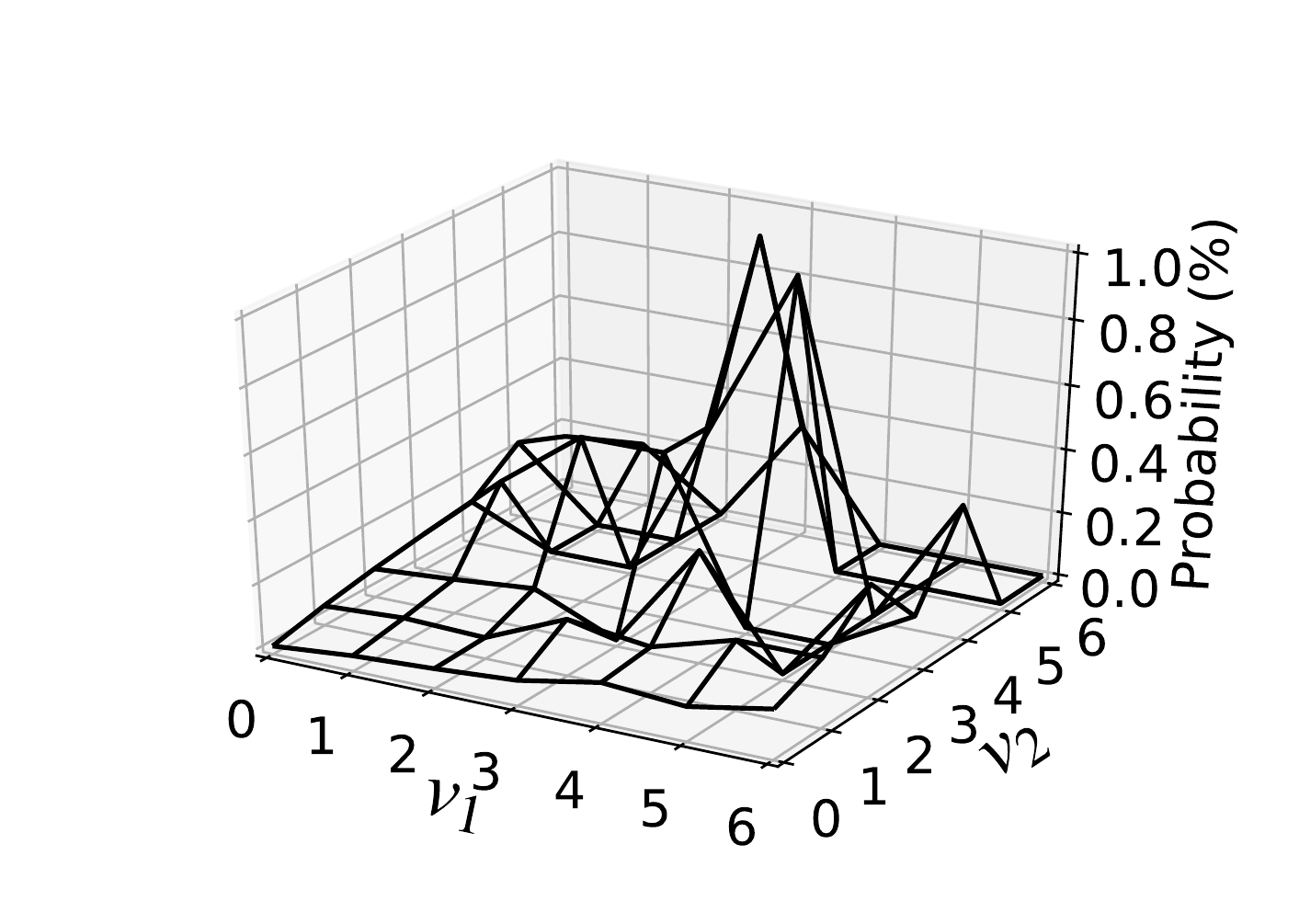} & 
\includegraphics[width=5.50cm,bb=70 0 400 250,clip]{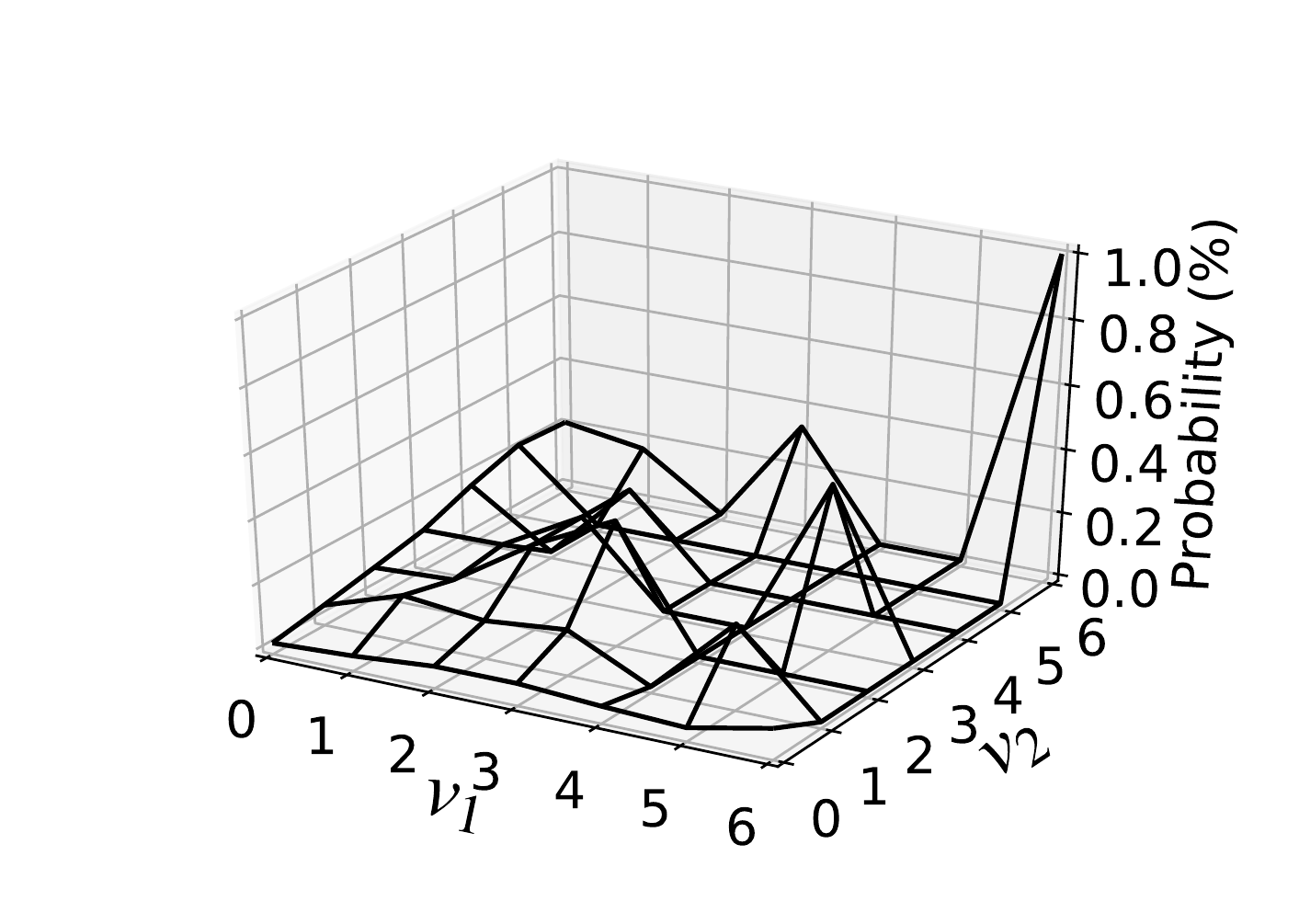}
\\[-.1in]
(a) SeqEmp, $v_3=0$ & 
(b) SeqEmp, $v_3=1$ & 
(c) SeqEmp, $v_3=2$
\\[.1in]
\includegraphics[width=5.50cm,bb=70 0 400 250,clip]{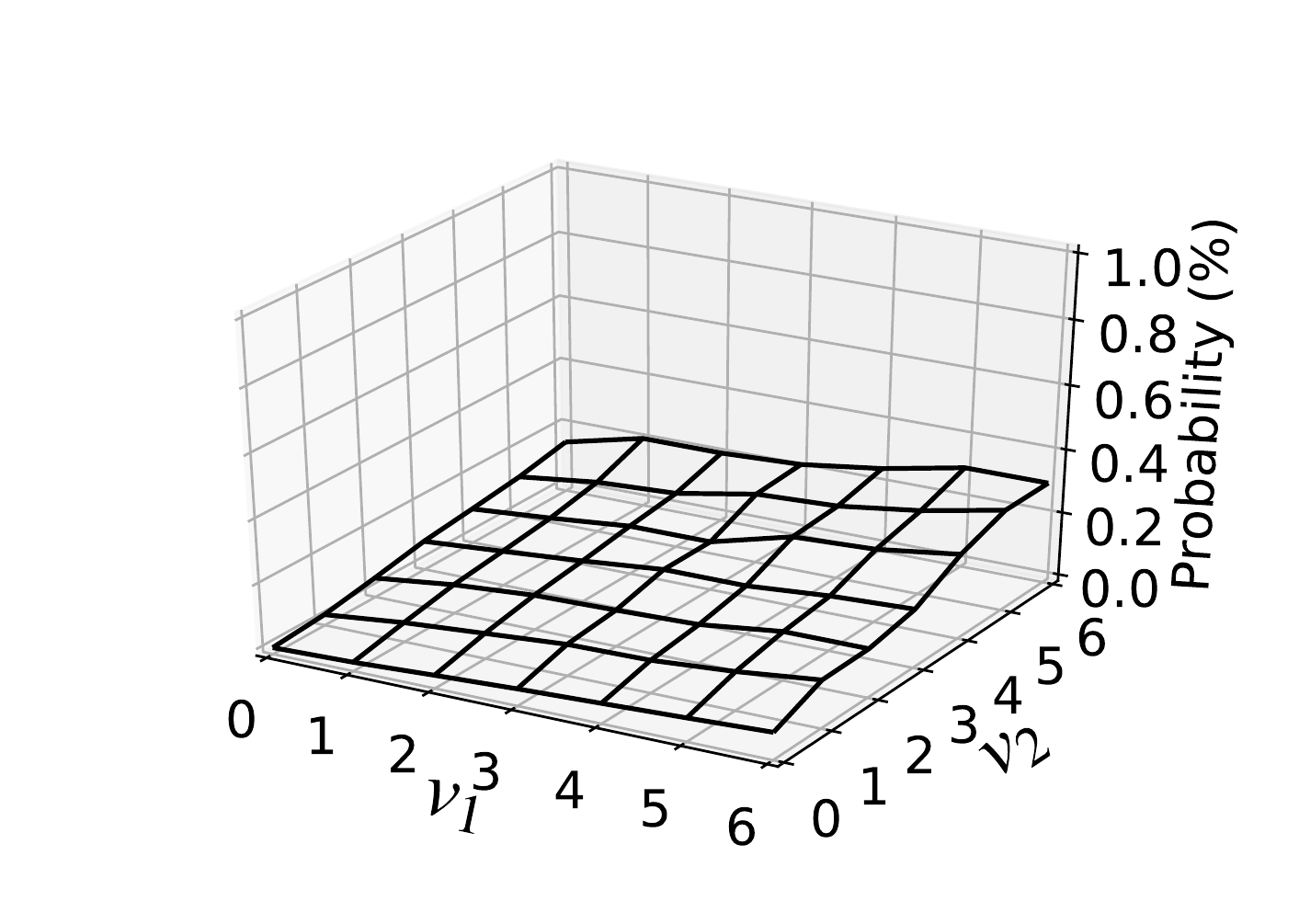} &
\includegraphics[width=5.50cm,bb=70 0 400 250,clip]{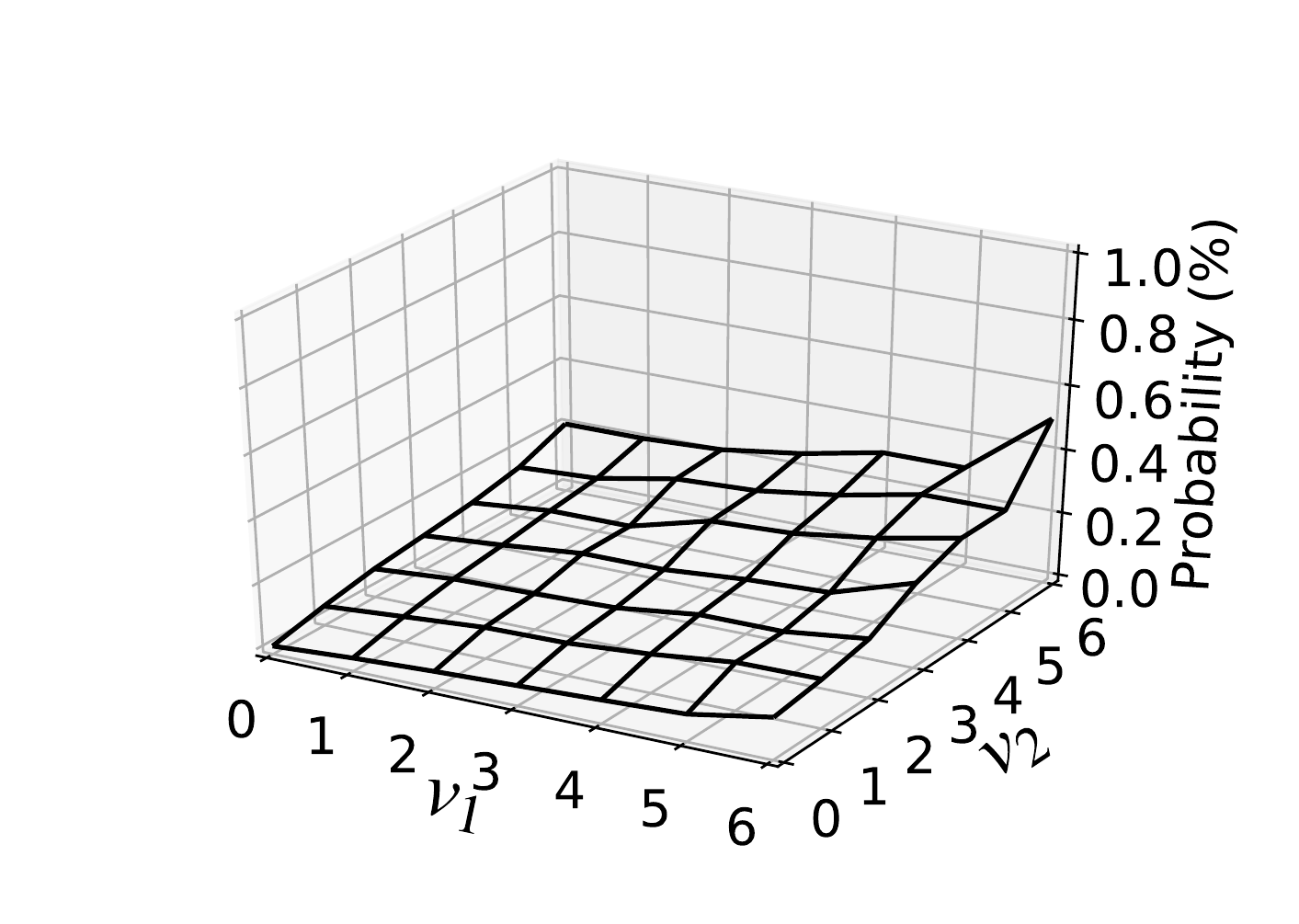} &
\includegraphics[width=5.50cm,bb=70 0 400 250,clip]{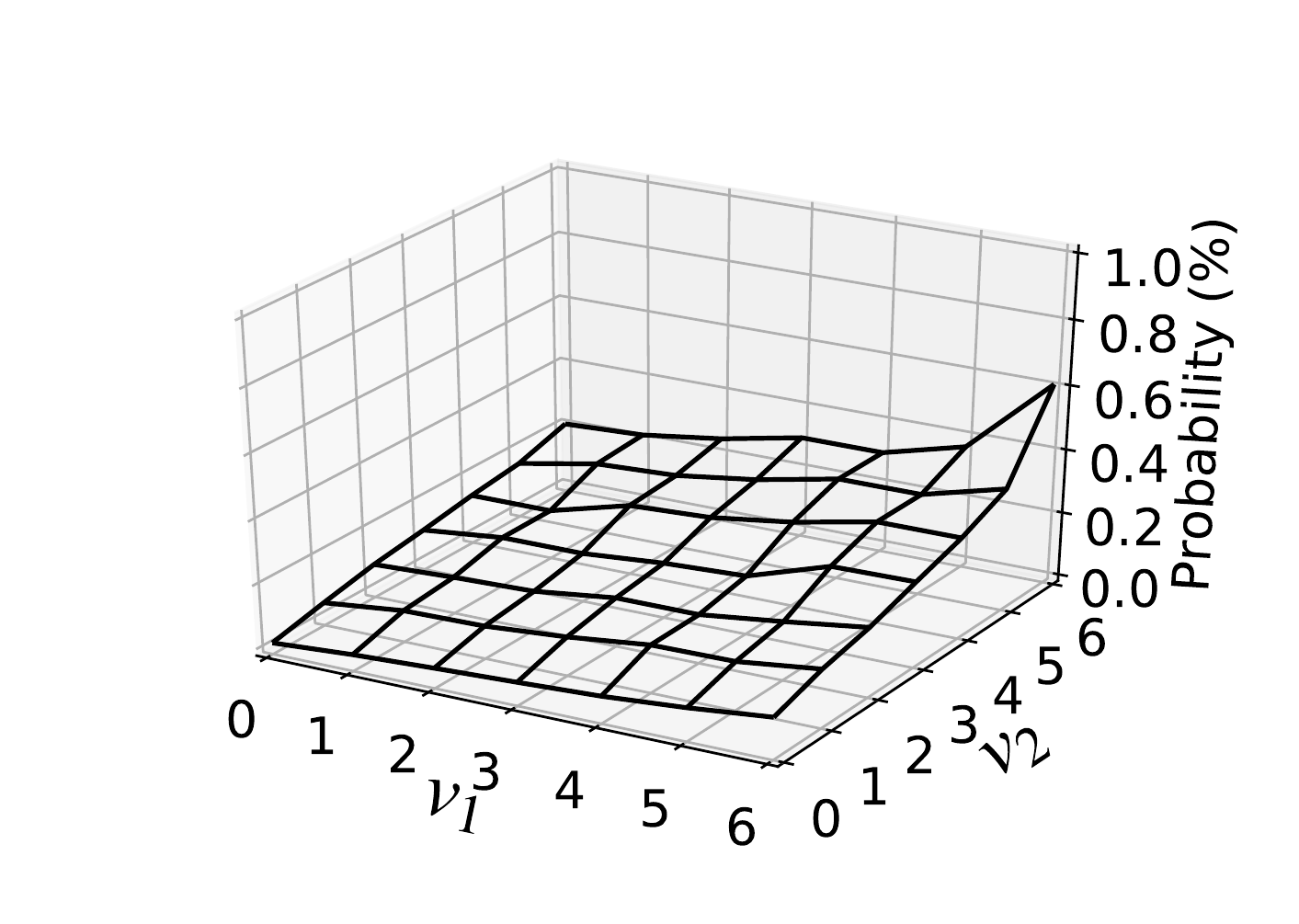}
\\[-.1in]
(d) SeqUM, $v_3=0$ & 
(e) SeqUM, $v_3=1$ & 
(f) SeqUM, $v_3=2$
\\[.1in]
\includegraphics[width=5.50cm,bb=70 0 400 250,clip]{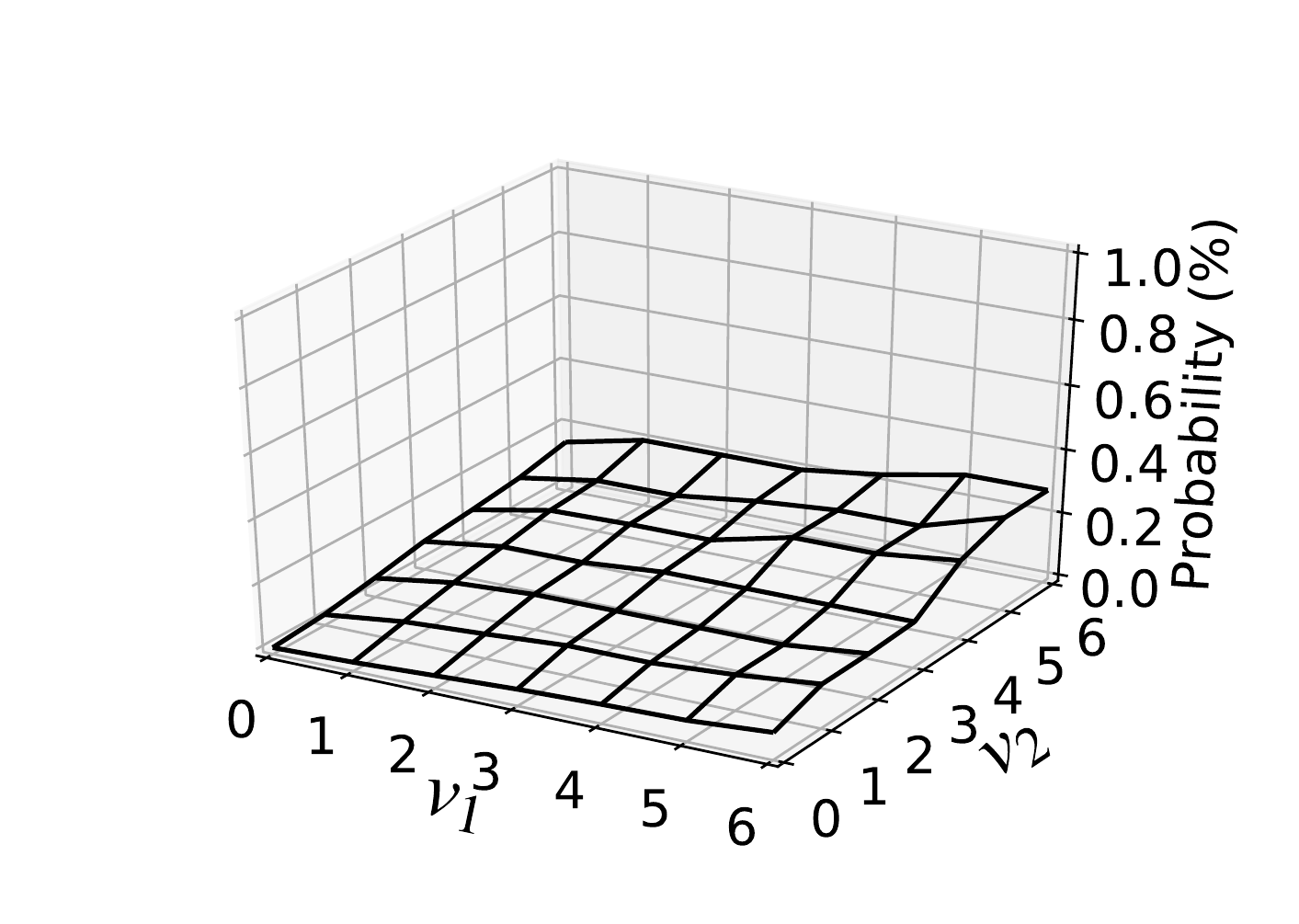} & 
\includegraphics[width=5.50cm,bb=70 0 400 250,clip]{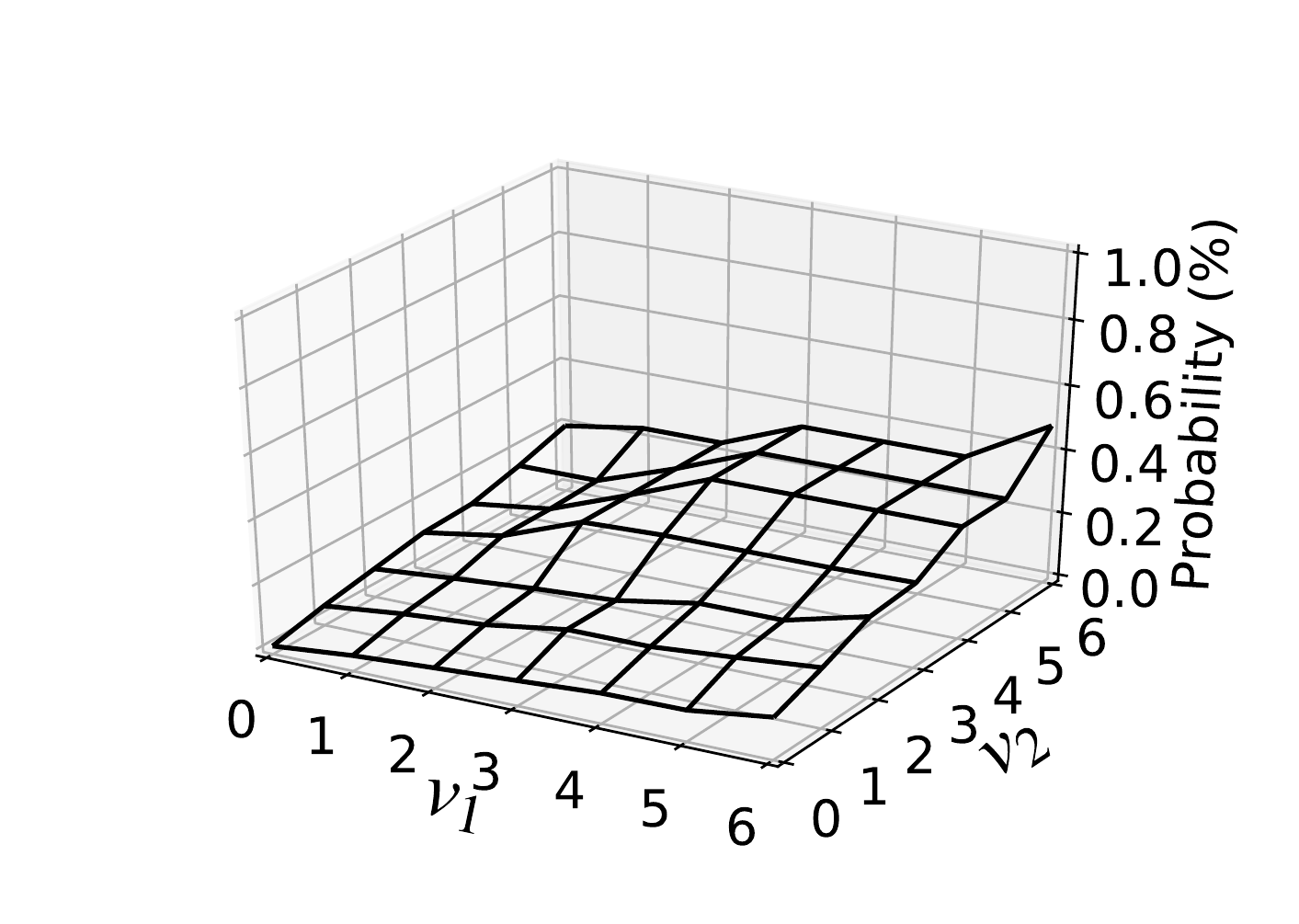} & 
\includegraphics[width=5.50cm,bb=70 0 400 250,clip]{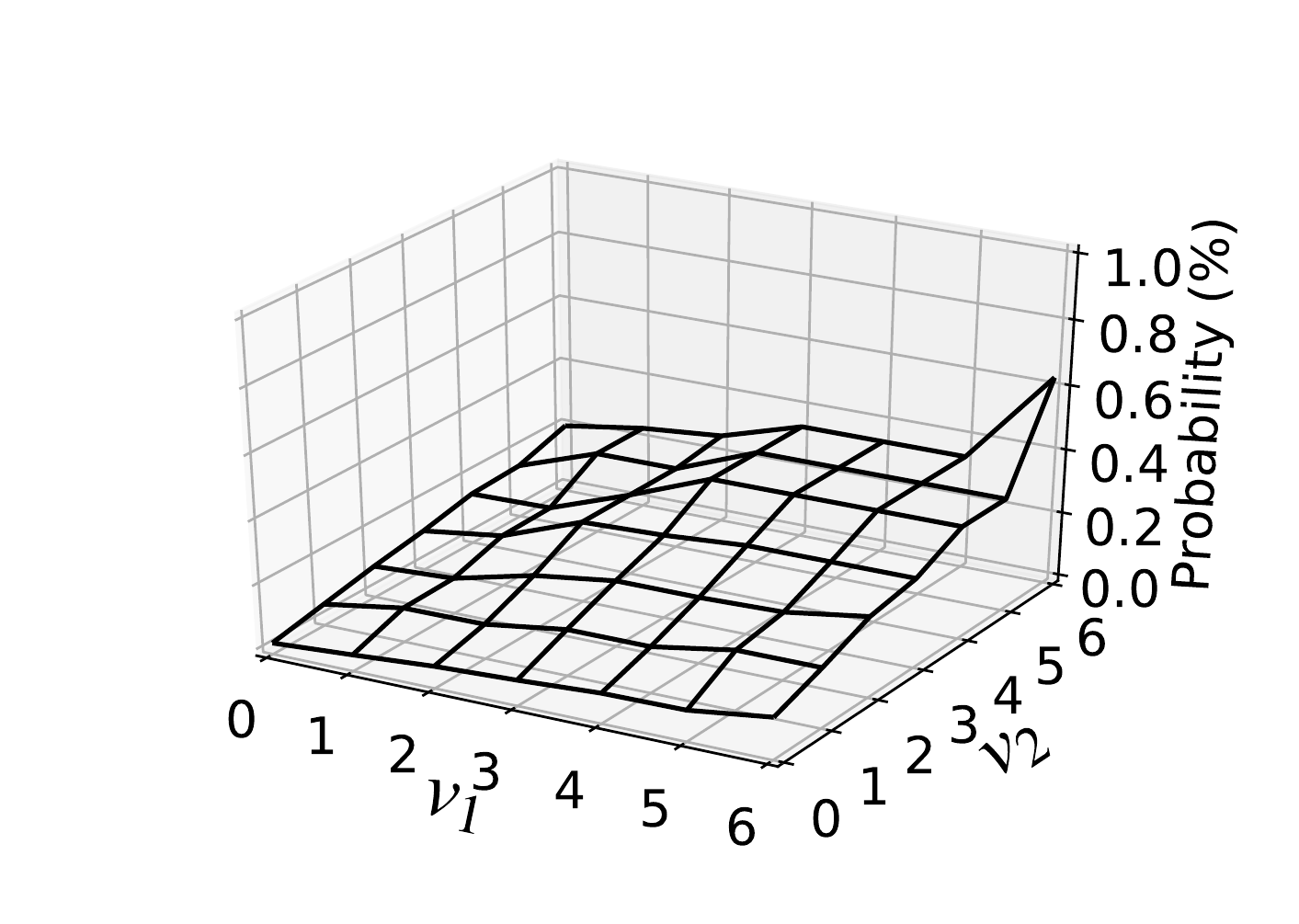}
\\[-.1in]
(g) SeqUS, $v_3=0$ & 
(h) SeqUS, $v_3=1$ & 
(i) SeqUS, $v_3=2$
\end{tabular}
\caption{Item-choice probabilities estimated from the full-sample training set with $(n,m)=(5,6)$.}
\label{fig:seq_visualize_full}
\end{figure*}
%%%%%%%%%%%%%%%%%%%%%%%%%%%%%%%%%%%%%%%%%%%

Fig.~\ref{fig:hasse12_sampling} shows F1 scores of the two-dimensional probability table and our PV sequence model using the sampled training sets, where the number of selected items is $N \in \{3,5,10\}$, and the setting of the PV sequence is $(n,m) \in \{(7,3),(5,6)\}$.

When the full-sample training set was used, SeqUM and SeqUS always delivered better prediction performance than the other methods did. 
When the 1\%- and 10\%-sampled training sets were used, the prediction performance of SeqUS decreased slightly, whereas SeqUM still performed best among all the methods. 
When the 0.1\%-sampled training set was used, 2dimMono always performed better than SeqUS did, and  2dimMono also had the best prediction performance in the case $(n,m)=(5,6)$.
These results suggest that our PV sequence model performs very well, especially when the sample size is sufficiently large. 

The prediction performance of SeqEmp deteriorated rapidly as the sampling rate decreased, and this performance was always much worse than that of 2dimEmp. 
Meanwhile, SeqUM and SeqUS maintained high prediction performance even when the 0.1\%-sampled training set was used. 
This suggests that the monotonicity constraint~\eqref{con1:PVS} in our PV sequence model is more effective than the monotonicity constraints~\eqref{con1:Mono}--\eqref{con2:Mono} in the two-dimensional monotonicity model. 

Fig.~\ref{fig:ml_hasse} shows F1 scores for the machine learning methods (LR, ANN, and RF) and our PV sequence model (SeqUM) using the full-sample training set, where the number of selected items is $N \in \{3,5,10\}$, and the PV sequence setting is $(n,m) \in \{(7,3),(5,6)\}$.
Note that in this figure, SeqUM(\,$\ast$\,) represents the optimization model~\eqref{obj:PVS}--\eqref{con2:PVS}, where the item-choice probabilities computed by each machine learning method were substituted into $(\hat{x}_{\bm{v}})_{\bm{v} \in \Gamma}$ (see Section 4.4).

Prediction performance was better for SeqUM than for all the machine learning methods, except in the case of Fig.~\ref{fig:ml_hasse}(f), where LR showed better prediction performance. 
Moreover, SeqUM(\,$\ast$\,) improved the prediction performance of the machine leaning methods, particularly for ANN and RF. 
This suggests that our monotonicity constraint~\eqref{con1:PVS} is also very helpful in correcting prediction values from other machine learning methods. 

\subsection{Analysis of estimated item-choice probabilities}

Fig.~\ref{fig:seq_visualize_full} shows item-choice probabilities estimated by our PV sequence model using the full-sample training set, where the PV sequence setting is $(n,m) = (5,6)$. 
Here, we focus on PV sequences in the form $\bm{v} = (v_1,v_2,v_3,0,0) \in \Gamma$ and depict estimates of item-choice probabilities on $(v_1,v_2) \in [0,m] \times [0,m]$ for each $v_3 \in \{0,1,2\}$. 
Note also that the number of associated user--item pairs decreased as the value of $v_3$ increased.  

%%%%%%%%%%%%%%%%%%%%%%%

Because SeqEmp does not consider the monotonicity constraint~\eqref{con1:PVS}, item-choice probabilities estimated by SeqEmp have irregular shapes for $v_3 \in \{1,2\}$. 
In contrast, item-choice probabilities estimated with the monotonicity constraint~\eqref{con1:PVS} are relatively smooth.
Because of the \texttt{Up} operation, item-choice probabilities estimated by SeqUM and SeqUS increase as $(v_1,v_2)$ moves from $(0,0)$ to $(6,6)$. 
Because of the \texttt{Move} operation, item-choice probabilities estimated by SeqUM also increase as $(v_1,v_2)$ moves from $(0,6)$ to $(6,0)$. 
Item-choice probabilities estimated by SeqUS are relatively high around $(v_1,v_2) = (3,3)$. 
This highlights the difference in the monotonicity constraint~\eqref{con1:PVS} between posets $(\Gamma,\preceq_{\texttt{UM}})$ and $(\Gamma,\preceq_{\texttt{US}})$. 

%\clearpage

Fig.~\ref{fig:seq_visualize_sampled} shows item-choice probabilities estimated by our PV sequence model using the 10\%-sampled training set, where the PV sequence setting is $(n,m) = (5,6)$. 
Since the sample size was reduced in this case, item-choice probabilities estimated by SeqEmp are highly unstable. 
In particular, item-choice probabilities were estimated to be zero for all $(v_1,v_2)$ with $v_1 \ge 3$ in Fig.~\ref{fig:seq_visualize_sampled}(c), but this is unreasonable from the perspective of frequency.   
In contrast, SeqUM and SeqUS estimated item-choice probabilities that increase monotonically with respect to $(v_1,v_2)$.  

%%%%%%%%%%%%%%%%%%%%%%%%%%%%%%%%%%%%%%%%%%%
\begin{figure*}[t]
\centering
\tabcolsep = 10pt
\begin{tabular}{ccc}
\includegraphics[width=5.50cm,bb=70 0 400 250,clip]{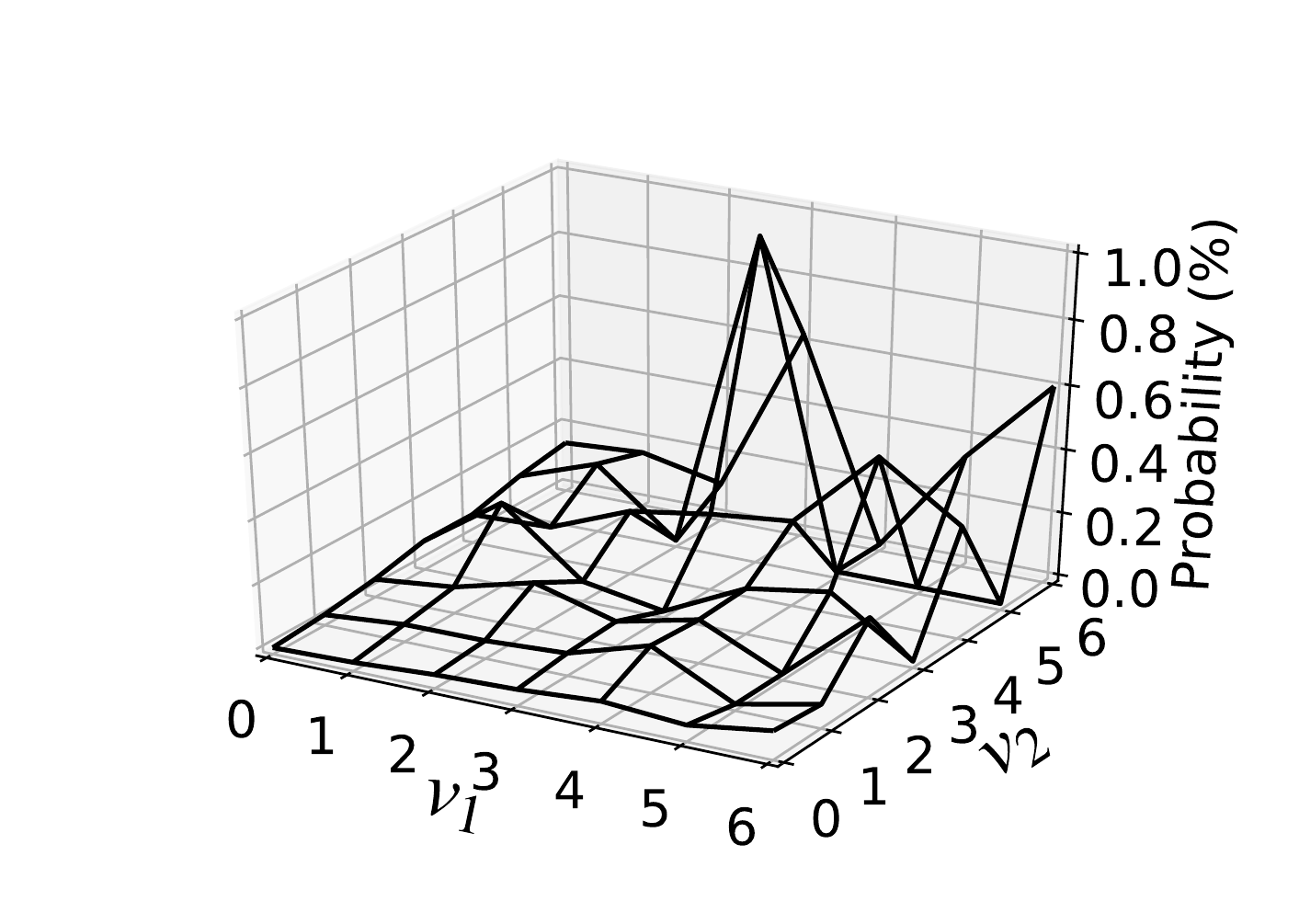} & 
\includegraphics[width=5.50cm,bb=70 0 400 250,clip]{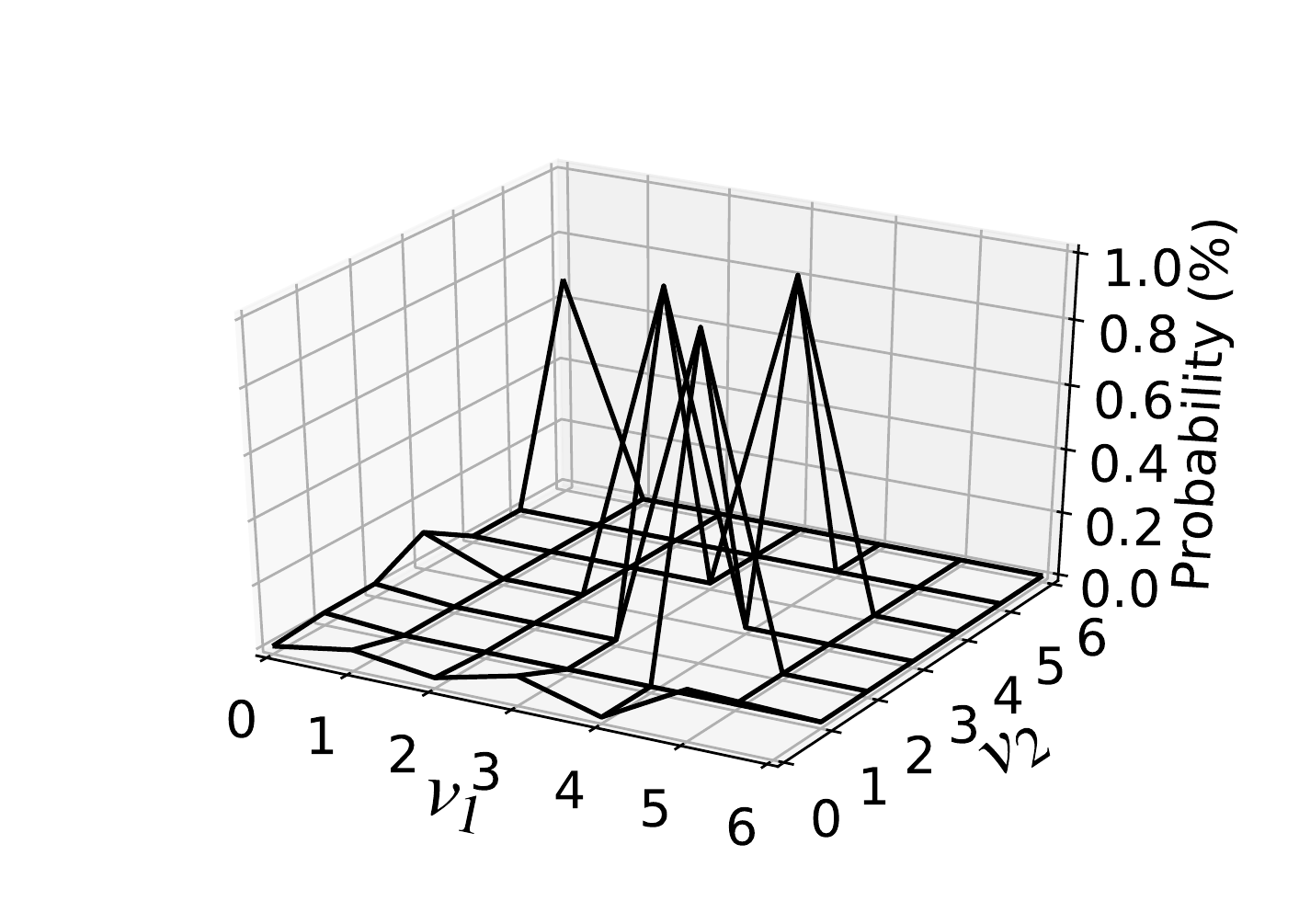} & 
\includegraphics[width=5.50cm,bb=70 0 400 250,clip]{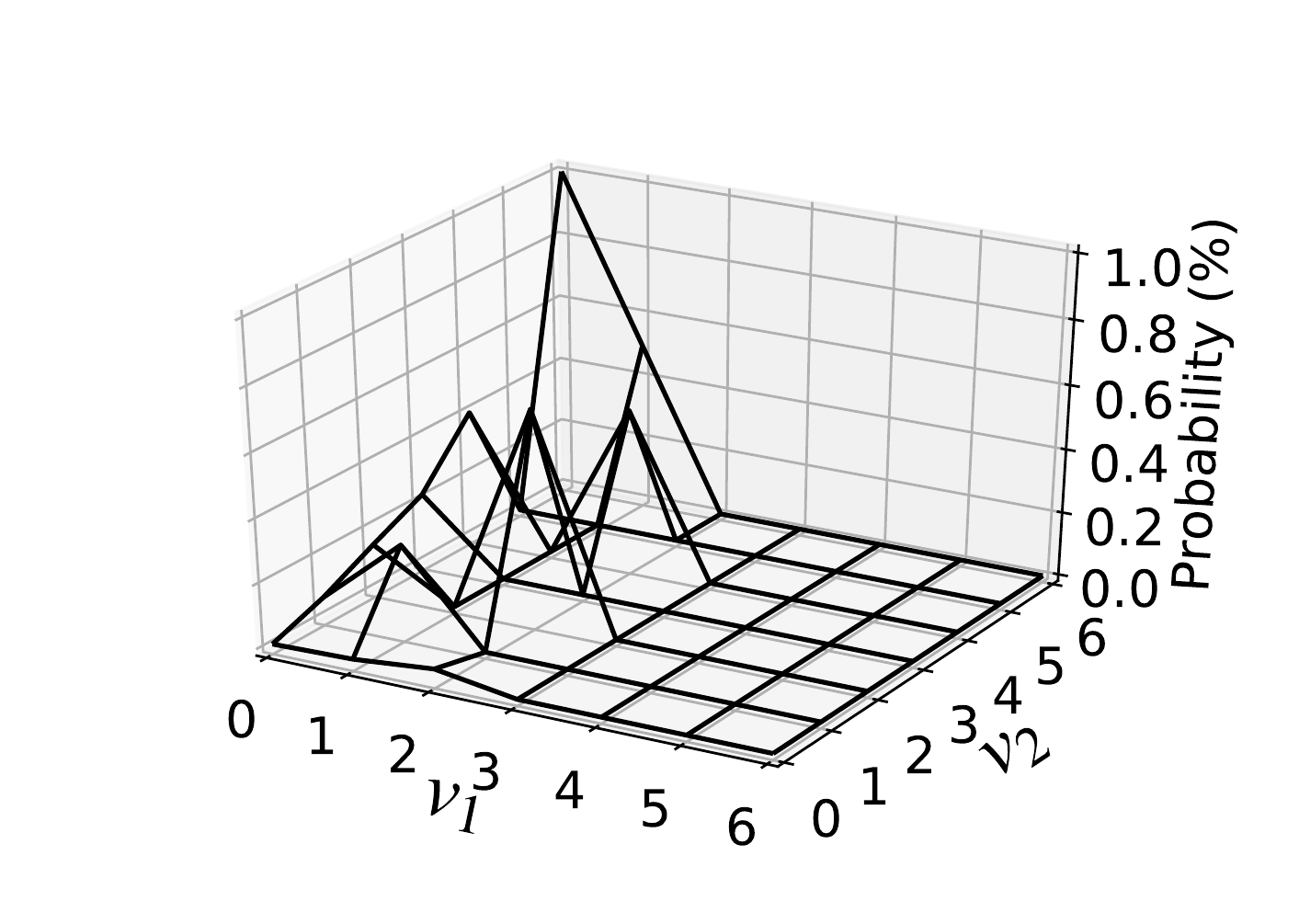}
\\[-.1in]
(a) SeqEmp, $v_3=0$ & 
(b) SeqEmp, $v_3=1$ & 
(c) SeqEmp, $v_3=2$
\\[.1in]
\includegraphics[width=5.50cm,bb=70 0 400 250,clip]{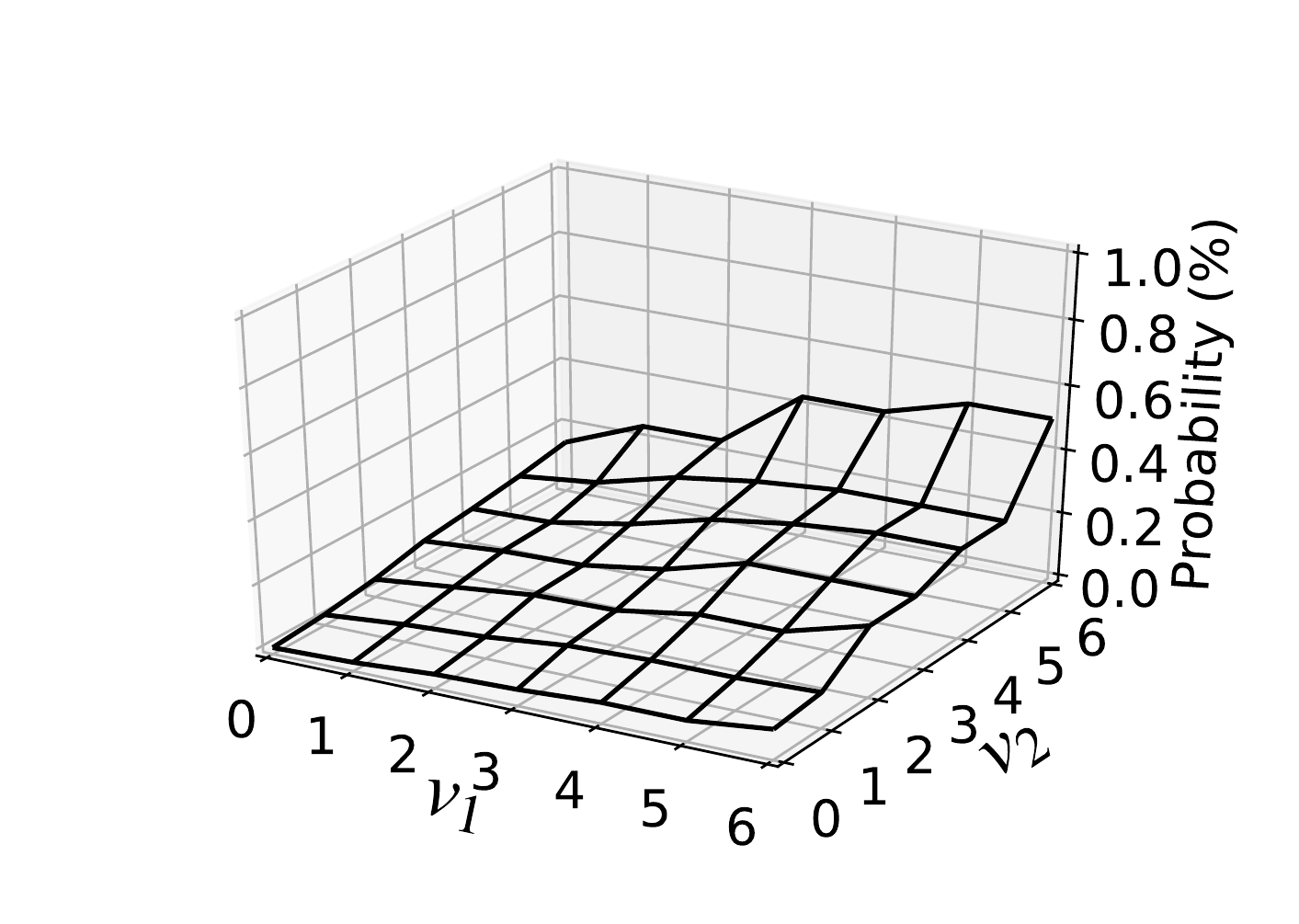} &
\includegraphics[width=5.50cm,bb=70 0 400 250,clip]{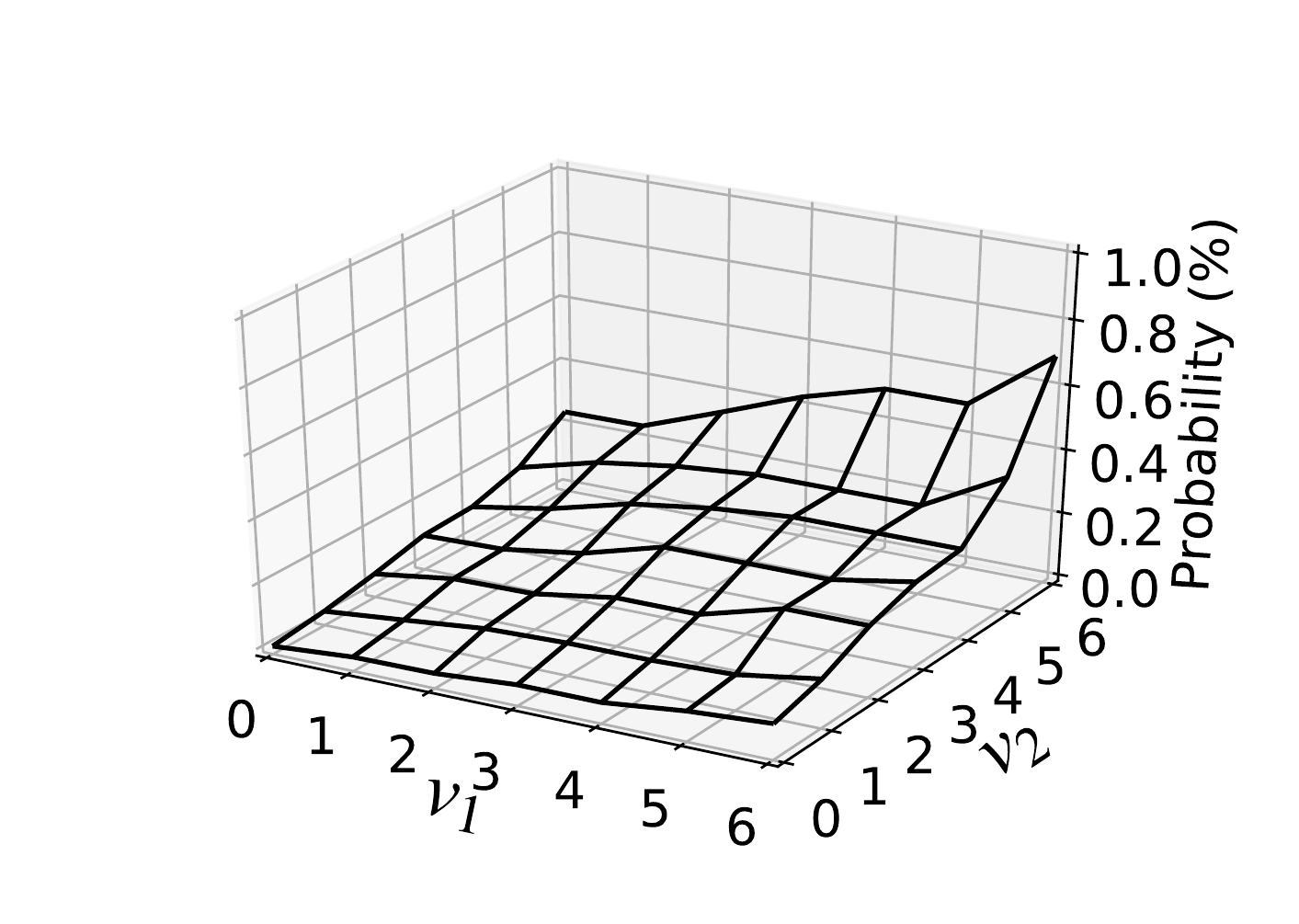} &
\includegraphics[width=5.50cm,bb=70 0 400 250,clip]{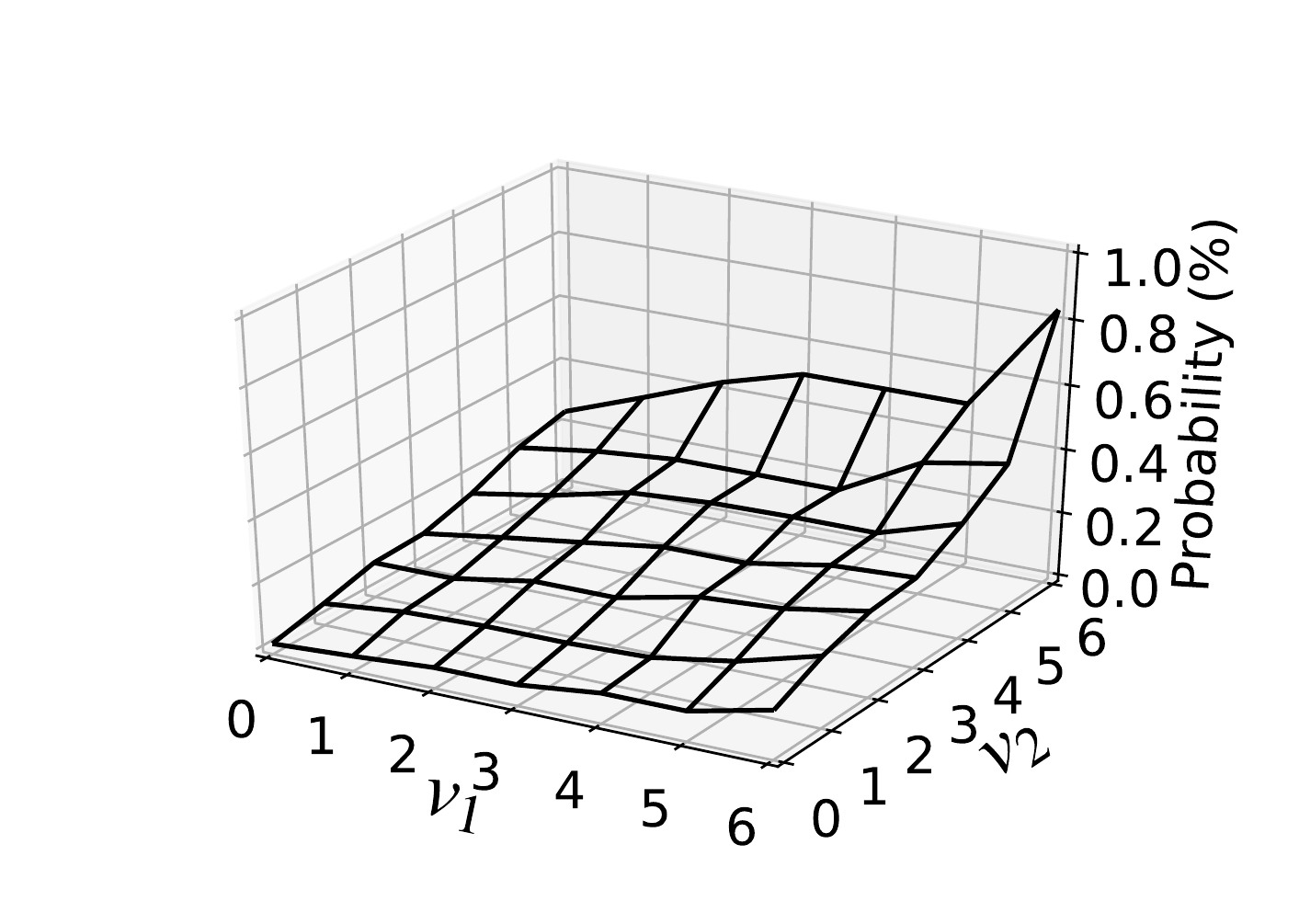}
\\[-.1in]
(d) SeqUM, $v_3=0$ & 
(e) SeqUM, $v_3=1$ & 
(f) SeqUM, $v_3=2$
\\[.1in]
\includegraphics[width=5.50cm,bb=70 0 400 250,clip]{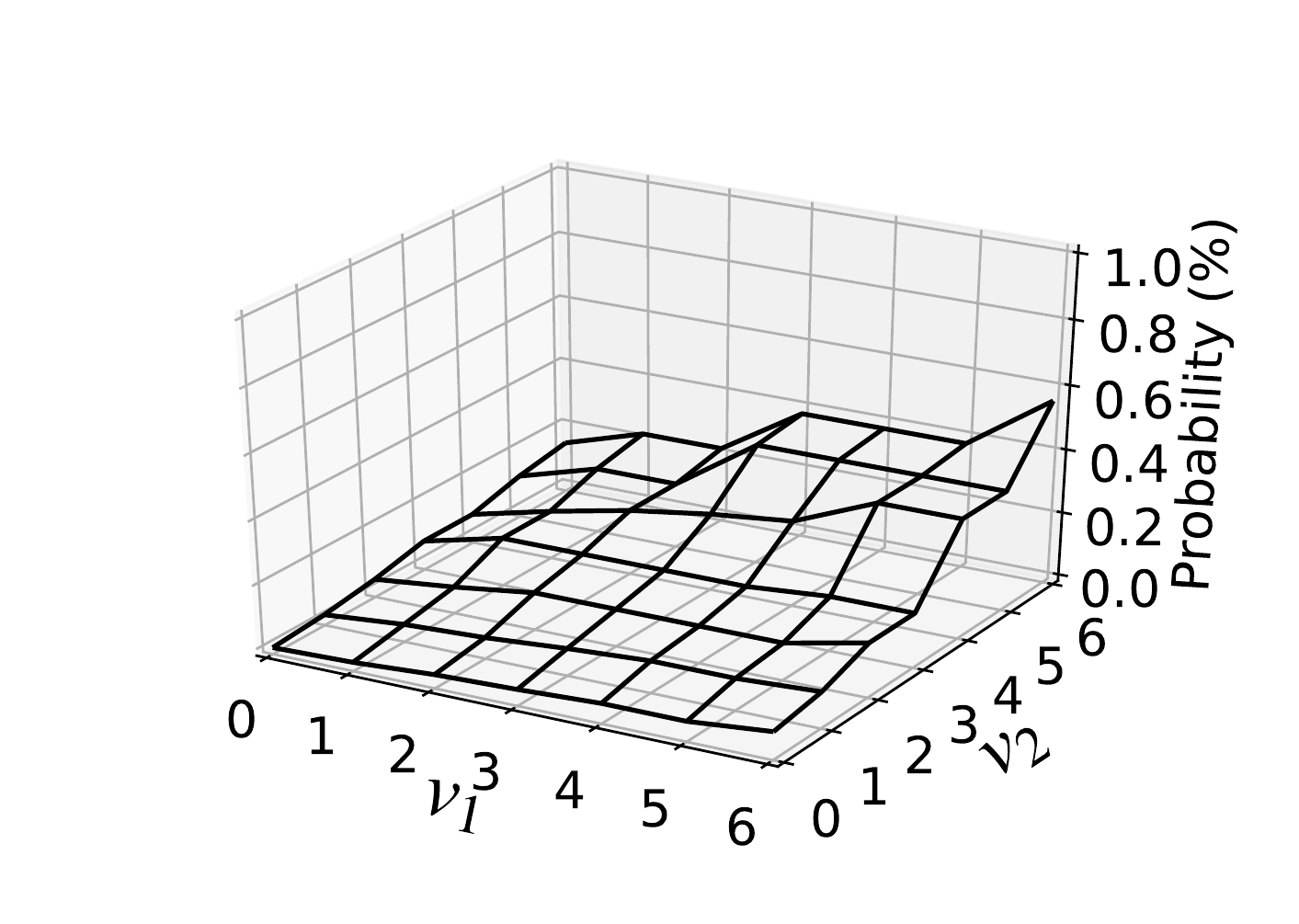} & 
\includegraphics[width=5.50cm,bb=70 0 400 250,clip]{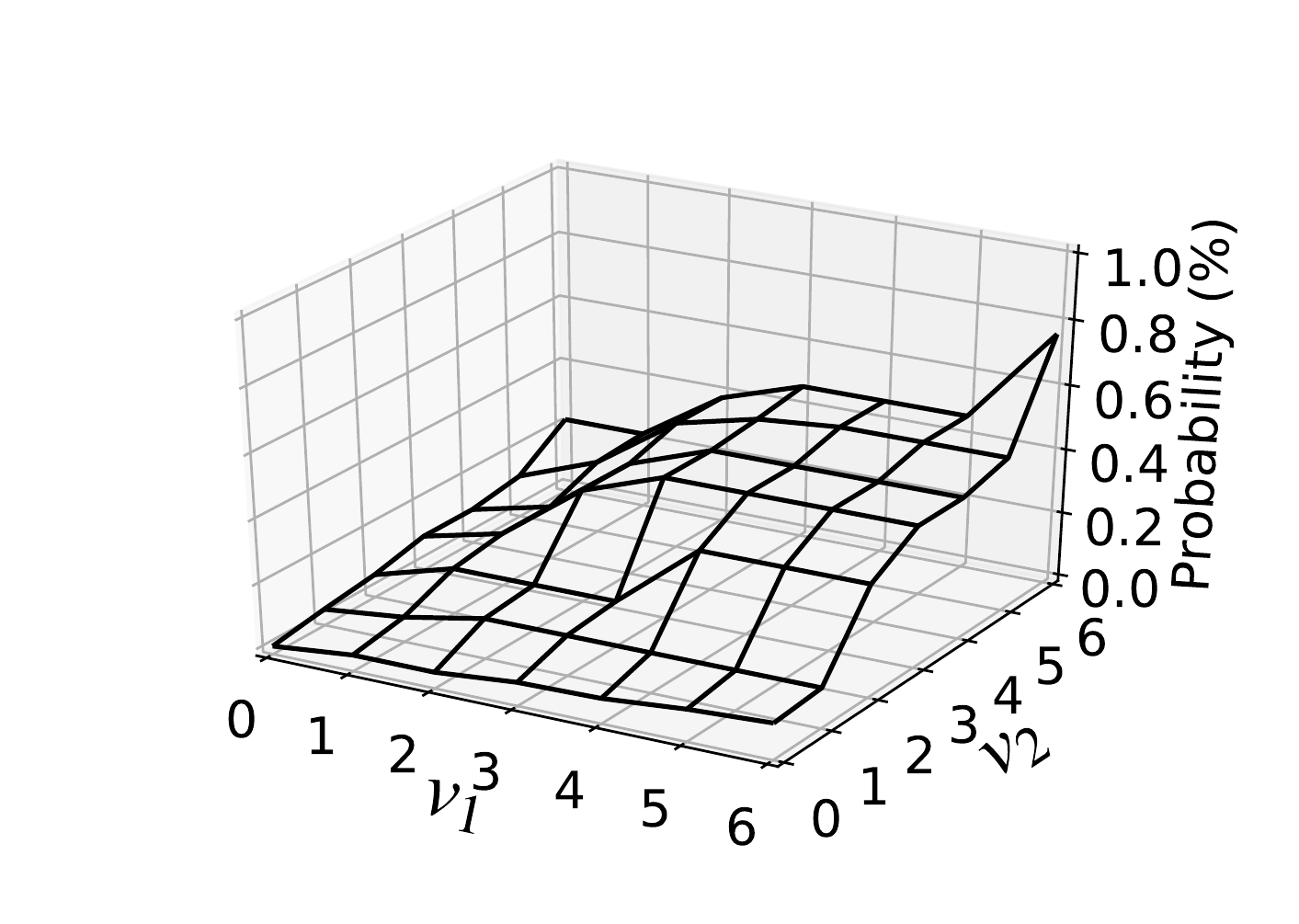} & 
\includegraphics[width=5.50cm,bb=70 0 400 250,clip]{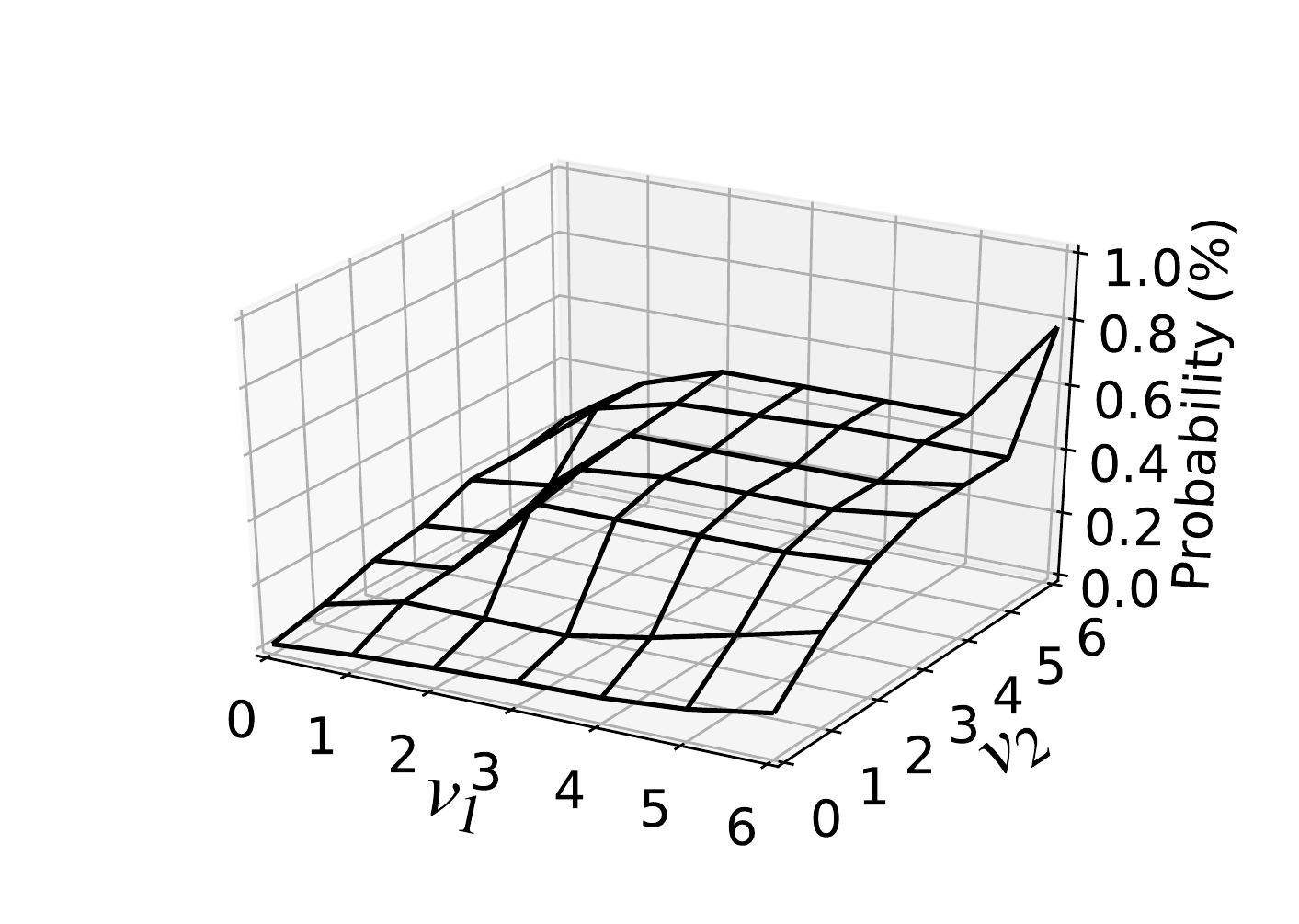}
\\[-.1in]
(g) SeqUS, $v_3=0$ & 
(h) SeqUS, $v_3=1$ & 
(i) SeqUS, $v_3=2$
\end{tabular}
\caption{Item-choice probabilities estimated from the 10\%-sampled training set with $(n,m)=(5,6)$. \label{fig:seq_visualize_sampled}}
\end{figure*}
%%%%%%%%%%%%%%%%%%%%%%%%%%%%%%%%%%%%%%%%%%%

%visualize_sample_EMP_v3_0.pdf

\section{Conclusion}
\label{sec:7}

We presented a shape-restricted optimization model for estimating item-choice probabilities on an e-commerce website. 
Our monotonicity constraints based on tailored order relations could better estimate item-choice probabilities for all possible PV sequences. 
To improve computational efficiency of our optimization model, we devised constructive algorithms for transitive reduction that remove all redundant constraints from the optimization model. 

We assessed the effectiveness of our method through experiments using real-world clickstream data. 
Experimental results demonstrated that transitive reduction enhanced the efficiency of our optimization model in terms of both computation time and memory usage. 
In addition, our method delivered better prediction performance than did the two-dimensional monotonicity model~\cite{IwNi16} and common machine learning methods. 
Our method was also helpful in correcting prediction values computed by other machine learning methods. 

This study made three main contributions.
First, we derived two types of posets by exploiting the properties of recency and frequency of a user's previous PVs. 
These posets allow us to place appropriate monotonicity constraints on item-choice probabilities.  
Next, we developed algorithms for transitive reduction of our posets. 
These algorithms are more efficient than general-purpose algorithms in terms of time complexity for transitive reduction. 
Finally, our method further expanded the potential of shape-restricted regression for predicting user behavior on e-commerce websites. 

Once the optimization model for estimating item-choice probabilities has been solved, the obtained results can easily be put into practical use on e-commerce websites. 
Accurate estimates of item-choice probabilities will be useful for customizing sales promotions according to the needs of a particular user. 
In addition, our method can estimate user preferences from clickstream data, therefore aiding creation of high-quality user--item rating matrices for recommendation algorithms~\cite{IwNi19}. 

In future studies, we will develop new posets that further improve the prediction performance of our PV sequence model.  
Another direction of future research will be to incorporate user--item heterogeneity into our optimization model, as in the case of latent class modeling with a two-dimensional probability table~\cite{NiSu18}.

%\clearpage
\appendices
\section{Proofs}
\label{sec:8}

\subsection{Proof of Theorem~\ref{thm:iffUM}}
\label{app:a1}

\subsubsection*{The ``only if'' part}
Suppose $(\bm{u},\bm{v}) \in E^*_{\texttt{UM}}$. 
We then have $\bm{v} \in \texttt{UM}(\{\bm{u}\})$ from Definition~\ref{def:UM} and Lemma~\ref{lem:rdc}. 
We therefore consider the following two cases: 

\noindent
\underline{Case 1: $\bm{v} = \texttt{Up}(\bm{u},s)$ for some $s \in [1,n]$}\\[.03in]
For the sake of contradiction, assume that $s \not= n$ (i.e., $s \le n-1$).   
Then there exists an index $j$ such that $s < j \le n$. 
If $u_j > 0$, then $\bm{w} = \texttt{Move}(\bm{u},s,j)$ and $\bm{v} = \texttt{Up}(\bm{w},j)$. 
If $u_j = 0$, then $\bm{w} = \texttt{Up}(\bm{u},j)$ and $\bm{v} = \texttt{Move}(\bm{w},s,j)$. 
This implies that $\bm{u} \prec_{\texttt{UM}} \bm{w} \prec_{\texttt{UM}} \bm{v}$, which contradicts $(\bm{u},\bm{v}) \in E^*_{\texttt{UM}}$ due to condition~(C2) of Lemma~\ref{lem:rdc}. 

\noindent
\underline{Case 2: $\bm{v} = \texttt{Move}(\bm{u},s,t)$ for some $(s,t) \in [1,n] \times [1,n]$}\\[.03in]
For the sake of contradiction, assume that $t \not= s+1$ (i.e., $t \ge s+2$). 
Then there exists an index $j$ such that $s < j < t$. 
If $u_j > 0$, then $\bm{w} = \texttt{Move}(\bm{u},s,j)$ and $\bm{v} = \texttt{Move}(\bm{w},j,t)$. 
If $u_j = 0$, then $\bm{w} = \texttt{Move}(\bm{u},j,t)$ and $\bm{v} = \texttt{Move}(\bm{w},s,j)$. 
This implies that $\bm{u} \prec_{\texttt{UM}} \bm{w} \prec_{\texttt{UM}} \bm{v}$, which contradicts $(\bm{u},\bm{v}) \in E^*_{\texttt{UM}}$ due to condition~(C2) of Lemma~\ref{lem:rdc}. 

\subsubsection*{The ``if'' part}
Next, we show that $(\bm{u},\bm{v}) \in E^*_{\texttt{UM}}$ in the following two cases:

\noindent
\underline{Case 1: Condition~(UM1) is fulfilled}\\[.03in]
Condition~(C1) of Lemma~\ref{lem:rdc} is clearly satisfied. 
To satisfy condition~(C2), we consider $\bm{w} \in \Gamma$ such that $\bm{u} \preceq_{\texttt{UM}} \bm{w} \preceq_{\texttt{UM}} \bm{v}$. 
From Lemma~\ref{lem:lex}, we have $\bm{u} \preceq_{\texttt{lex}} \bm{w} \preceq_{\texttt{lex}} \bm{v}$. 
Since $\bm{u}$ is next to $\bm{v}$ in the lexicographic order, we have $\bm{w} \in \{\bm{u},\bm{v}\}$.

\noindent
\underline{Case 2: Condition~(UM2) is fulfilled}\\[.03in]
Condition~(C1) of Lemma~\ref{lem:rdc} is clearly satisfied. 
To satisfy condition~(C2), we consider $\bm{w} \in \Gamma$ such that $\bm{u} \preceq_{\texttt{UM}} \bm{w} \preceq_{\texttt{UM}} \bm{v}$. 
From Lemma~\ref{lem:lex}, we have $\bm{u} \preceq_{\texttt{lex}} \bm{w} \preceq_{\texttt{lex}} \bm{v}$, which implies that $w_j = u_j$ for all $j \in [1,s-1]$. 
Therefore, we cannot apply any operations to $w_j$ for $j \in [1,s-1]$ in the process of transforming $\bm{w}$ from $\bm{u}$ into $\bm{v}$.
To keep the value of $\sum_{j=1}^n w_j$ constant, we can apply only the \texttt{Move} operation. 
However, once the \texttt{Move} operation is applied to $w_j$ for $j \in [s+2,n]$, the resultant sequence cannot be converted into $\bm{v}$. 
As a result, only $\texttt{Move}(\,\cdot\,,s,s+1)$ can be performed, 
and therefore $\bm{w} = \bm{u}$ or $\bm{w} = \texttt{Move}(\bm{u},s,s+1) = \bm{v}$. 

\subsection{Proof of Theorem~\ref{thm:iffUS}}
\label{app:a2}

\subsubsection*{The ``only if'' part}
Suppose that $(\bm{u},\bm{v}) \in E^*_{\texttt{US}}$. 
We then have $\bm{v} \in \texttt{US}(\{\bm{u}\})$ from Definition~\ref{def:US} and Lemma~\ref{lem:rdc}. 
Thus, we consider the following two cases: 

\noindent
\underline{Case 1: $\bm{v} = \texttt{Up}(\bm{u},s)$ for some $s \in [1,n]$}\\[.03in]
For the sake of contradiction, assume $u_j \in \{u_s, u_s + 1\}$ for some $j \in [s+1,n]$. 
If $u_j = u_s$, then $\bm{w} = \texttt{Up}(\bm{u},j)$ and $\bm{v} = \texttt{Swap}(\bm{w},s,j)$. 
If $u_j = u_s + 1$, then $\bm{w} = \texttt{Swap}(\bm{u},s,j)$ and $\bm{v} = \texttt{Up}(\bm{w},j)$. 
This implies that $\bm{u} \prec_{\texttt{US}} \bm{w} \prec_{\texttt{US}} \bm{v}$, which contradicts $(\bm{u},\bm{v}) \in E^*_{\texttt{US}}$ due to condition~(C2) of Lemma~\ref{lem:rdc}. 

\noindent
\underline{Case 2: $\bm{v} = \texttt{Swap}(\bm{u},s,t)$ for some $(s,t) \in [1,n] \times [1,n]$}\\[.03in]
For the sake of contradiction, assume $u_j \in [u_s, u_t]$ for some $j \in [s+1,t-1]$. 
If $u_s < u_j < u_t$, then $\bm{w}_1 = \texttt{Swap}(\bm{u},j,t)$, $\bm{w}_2 = \texttt{Swap}(\bm{w}_1,s,j)$, and $\bm{v} = \texttt{Swap}(\bm{w}_2,j,t)$. 
If $u_j = u_s$, then $\bm{w} = \texttt{Swap}(\bm{u},j,t)$ and $\bm{v} = \texttt{Swap}(\bm{w},s,j)$. 
If $u_j = u_t$, then $\bm{w} = \texttt{Swap}(\bm{u},s,j)$ and $\bm{v} = \texttt{Swap}(\bm{w},j,t)$.
Each of these results contradicts $(\bm{u},\bm{v}) \in E^*_{\texttt{US}}$ due to condition~(C2) of Lemma~\ref{lem:rdc}. 

\subsubsection*{The ``if'' part}
Next, we show that $(\bm{u},\bm{v}) \in E^*_{\texttt{US}}$ in the following two cases:

\noindent
\underline{Case 1: Condition~(US1) is fulfilled}\\[.03in]
Condition~(C1) of Lemma~\ref{lem:rdc} is clearly satisfied. 
To satisfy condition~(C2), we consider $\bm{w} \in \Gamma$ such that $\bm{u} \preceq_{\texttt{US}} \bm{w} \preceq_{\texttt{US}} \bm{v}$. 
From Lemma~\ref{lem:lex}, we have $\bm{u} \preceq_{\texttt{lex}} \bm{w} \preceq_{\texttt{lex}} \bm{v}$, implying that $w_j = u_j$ for all $j \in [1,s-1]$. 
Therefore, we cannot apply any operations to $w_j$ for $j \in [1,s-1]$ in the process of transforming $\bm{w}$ from $\bm{u}$ into $\bm{v}$. 
We must apply the \texttt{Up} operation only once, because the value of $\sum_{j=1}^n w_j$ remains the same after the \texttt{Swap} operation. 
Condition~(US1) guarantees that for all $j \in [s+1,n]$, $w_j$ does not coincide with $u_s + 1$ even if $\texttt{Up}(\,\cdot\,,j)$ is applied. 
Therefore, $\texttt{Swap}(\,\cdot\,,s,j)$ for $j \in [s+1,n]$ never leads to $w_s = u_s + 1$. 
As a result, $\texttt{Up}(\,\cdot\,,s)$ must be performed.  
Other applicable \texttt{Swap} operations produce a sequence that cannot be converted into $\bm{v}$.
This means that $\bm{w} = \bm{u}$ or $\bm{w} = \texttt{Up}(\bm{u},s) = \bm{v}$. 

\noindent
\underline{Case 2: Condition~(US2) is fulfilled}\\[.03in]
Condition~(C1) of Lemma~\ref{lem:rdc} is clearly satisfied. 
To satisfy condition~(C2), we consider $\bm{w} \in \Gamma$ such that $\bm{u} \preceq_{\texttt{US}} \bm{w} \preceq_{\texttt{US}} \bm{v}$. 
From Lemma~\ref{lem:lex}, we have $\bm{u} \preceq_{\texttt{lex}} \bm{w} \preceq_{\texttt{lex}} \bm{v}$. 
This implies that $w_j = u_j$ for all $j \in [1,s-1]$, and that $w_s \in [u_s, u_t]$. 
Therefore, we cannot apply any operations to $w_j$ for $j \in [1,s-1]$ in the process of transforming $\bm{w}$ from $\bm{u}$ into $\bm{v}$.
To keep the value of $\sum_{j=1}^n w_j$ constant, we can apply only the \texttt{Swap} operation. 
However, once the \texttt{Swap} operation is applied to $w_j$ for $j \in [t+1,n]$, the resultant sequence cannot be converted into $\bm{v}$. 
We cannot adopt $\bm{w} = \texttt{Swap}(\bm{u},s,j)$ for $j \in [s+1,t-1]$ due to condition~(US2). 
If we adopt $\bm{w} = \texttt{Swap}(\bm{u},j,t)$ for $j \in [s+1,t-1]$, we have $w_t \le u_s - 1$ due to condition~(US2), so the application of $\texttt{Swap}(\,\cdot\,,t,j)$ is unavoidable for $j \in [t+1,n]$. 
As a result, $\texttt{Swap}(\,\cdot\,,s,t)$ must be performed. 
Other applicable \texttt{Swap} operations produce a sequence that cannot be converted into $\bm{v}$.
This means that $\bm{w} = \bm{u}$ or $\bm{w} = \texttt{Swap}(\bm{u},s,t) = \bm{v}$. 

\section{Pseudocode}\label{app1}

\subsection{Constructive algorithm for $(\Gamma, E^*_{\normalfont \texttt{UM}})$}
\label{app1um}

\begin{algorithm}[t]
\caption{Constructive algorithm for $(\Gamma, E^*_{\texttt{UM}})$}\label{app:b1}
\hspace*{\algorithmicindent} \textbf{Input} a pair $(n,m)$ of positive integers\\
\hspace*{\algorithmicindent} \textbf{Output} transitive reduction $(\Gamma, E^*_{\texttt{UM}})$
\begin{algorithmic}[1]
\Procedure{}{}%{ConstructUM}{$n,m$}
\State $L \gets \mbox{list consisting of }(0,0,\ldots,0)$ \Comment{returns $\Gamma$}
\State $E \gets \mbox{empty list}$ \Comment{returns $E^*_{\texttt{UM}}$}
\State $Q \gets \mbox{queue consisting of }(0,0,\ldots,0)$
\While{$Q$ is not empty}
	\State $\bm{u} \gets \textsc{dequeue}(Q)$
%	\State $S \gets \mbox{empty list}$ \Comment{gives $\bm{u}$'s successors}
	\If{$(\bm{u},n) \in \mathcal{D}_{\texttt{U}}$} \Comment{for (UM1)}
		\State $\bm{v} \gets \texttt{Up}(\bm{u},n)$	
		\State $\textsc{append}(L,\bm{v})$, $\textsc{append}(E,(\bm{u},\bm{v}))$
		\State $\textsc{enqueue}(Q,\bm{v})$
	\EndIf
	\For{$s \in [1,n-1]$}\Comment{for (UM2)}
		\If{$(\bm{u},s,s+1) \in \mathcal{D}_{\texttt{M}}$} 
			\State $\bm{v} \gets \texttt{Move}(\bm{u},s,s+1)$		
			\State $\textsc{append}(L,\bm{v})$, $\textsc{append}(E,(\bm{u},\bm{v}))$
			\State $\textsc{enqueue}(Q,\bm{v})$
		\EndIf
	\EndFor
\EndWhile
\EndProcedure
\end{algorithmic}
\end{algorithm}

The nodes and directed edges of graph $(\Gamma, E^*_{\texttt{UM}})$ are enumerated in a breadth-first search 
and are stored in two lists $L$ and $E$, respectively. 
We use $\textsc{append}(L,\bm{v})$, which appends a vertex $\bm{v}$ to the end of $L$. 
We similarly use $\textsc{append}(E,(\bm{u},\bm{v}))$. % for a directed edge $(\bm{u},\bm{v})$. 

A queue $Q$ is used to store nodes of $L$ whose successors are under investigation (the ``frontier'' of $L$). %the expanding $L$). 
%Namely, $Q$ is the ``frontier'' of $L$. 
The nodes in $Q$ are listed in ascending order of depth, where % from %the minimal node 
the depth of $\bm{v}$ is the shortest-path length from $(0,0,\ldots,0)$ to $\bm{v}$. % in $(\Gamma, E^*_{\texttt{UM}})$. %, making the enumeration process layer by layer. 
We use $\textsc{dequeue}(Q)$, which returns and deletes the first element in $Q$, %(if any), 
and $\textsc{enqueue}(Q,\bm{v})$, which appends $\bm{v}$ to the end of $Q$. 

Algorithm~\ref{app:b1} summarizes our constructive algorithm for computing the transitive reduction $(\Gamma, E^*_{\texttt{UM}})$. 
For a given node $\bm{u}$ in line 6, we find all nodes $\bm{v}$ satisfying condition (UM1) in lines 7--10 and those satisfying condition (UM2) in lines 11--15.

By definition, the membership test for $\mathcal{D}_{\texttt{U}}$ and $\mathcal{D}_{\texttt{M}}$ can be performed in $\mathcal{O}(1)$ time. 
Recall that $\textsc{dequeue}$, $\textsc{enqueue}$, and $\textsc{append}$ can be performed in $\mathcal{O}(1)$ time. 
The FOR loop in lines 11--15 executes in $\mathcal{O}(n)$ time. 
Therefore, recalling that $|\Gamma|=(m+1)^n$, we see that Algorithm~\ref{app:b1} runs in $\mathcal{O}(n (m+1)^n)$ time. 

\subsection{Constructive algorithm for $(\Gamma, E^*_{\normalfont \texttt{US}})$}
\label{app1us}
Algorithm~\ref{app:b2} summarizes our constructive algorithm for computing the transitive reduction $(\Gamma, E^*_{\texttt{US}})$. 
Here, the difference from Algorithm~\ref{app:b1} is the method for finding nodes $\bm{v}$ satisfying conditions (US1) or (US2). 
For a given node $\bm{u}$ in line 6, we find all nodes $\bm{v}$ satisfying condition (US1) in lines 7--16, 
and those satisfying condition (US2) in lines 17--26. 
The following describes the latter part. 

\begin{algorithm}[t]
\caption{Constructive algorithm for $(\Gamma, E^*_{\texttt{US}})$}\label{app:b2}
\hspace*{\algorithmicindent} \textbf{Input:} a pair $(n,m)$ of positive integers\\
\hspace*{\algorithmicindent} \textbf{Output:} the transitive reduction $(\Gamma, E^*_{\texttt{US}})$
\begin{algorithmic}[1]
\Procedure{}{}%ConstructUS}{$n,m$}
\State $L \gets \mbox{list consisting of }(0,0,\ldots,0)$ \Comment{returns $\Gamma$}
\State $E \gets \mbox{empty list}$ \Comment{returns $E^*_{\texttt{US}}$}
\State $Q \gets \mbox{queue consisting of }(0,0,\ldots,0)$
\While{$Q$ is not empty}
	\State $\bm{u} \gets \textsc{dequeue}(Q)$
%	\State $S \gets \mbox{empty list}$ \Comment{gives $\bm{u}$'s successors}
	\For{$s \in [1,n]$} \Comment{for (US1)}
		\If{$(\bm{u},s) \in \mathcal{D}_{\texttt{U}}$}
			\State $\textit{flag} \gets \textit{True}$
			\For{$j \in [s+1,n]$}
				\If{$u_j \in \{ u_s, u_s+1\}$}
					\State $\textit{flag} \gets \textit{False}$, {\bf break}
				\EndIf
			\EndFor
		\If{$\textit{flag} = \textit{True}$}
			\State $\bm{v} \gets \texttt{Up}(\bm{u},s)$
			\State $\textsc{append}(L,\bm{v})$, $\textsc{append}(E,(\bm{u},\bm{v}))$
			\State $\textsc{enqueue}(Q,\bm{v})$
		\EndIf
		\EndIf
	\EndFor
	\For{$s \in [1,n-1]$} \Comment{for (US2)}
		\State $b \gets m+1$ %\{ u_s \}$ \Comment{$u_t \in O_s$ $\Rightarrow$ $(s,t)$ not meeting (US2)}
		\For{$t \in [s+1,n]$}
			\If{$(\bm{u},s,t) \in \mathcal{D}_{\texttt{S}}$ {\bf and} $u_t < b$}
				\State $\bm{v} \gets \texttt{Swap}(\bm{u},s,t)$
				\State $\textsc{append}(L,\bm{v})$, $\textsc{append}(E,(\bm{u},\bm{v}))$
				\State $\textsc{enqueue}(Q,\bm{v})$
				\State $b \gets u_t$
			\ElsIf{$u_t = u_s$}
				\State {\bf break}
			\EndIf
		\EndFor
	\EndFor
\EndWhile
\EndProcedure
\end{algorithmic}
\end{algorithm}

Let $(\bm{u},\bm{v})$ be a directed edge added to $E$ in line 22. 
Let $(\bar{s},\bar{t})$ be such that $\bm{v} = \texttt{Swap}(\bm{u},\bar{s},\bar{t})$. 
From line 20, we have 
$u_{\bar{s}} < u_{\bar{t}} < b$. 
Note that %In the execution of the algorithm, 
for each $t$ in line 19, 
value $b$ gives the smallest value of $u_j$ with $u_j > u_{\bar{s}}$ for $j\in [\bar{s}+1,t-1]$. 
Also, due to lines 25--26, $u_j \not= u_{\bar{s}}$ for $j\in [\bar{s}+1,\bar{t}-1]$. 
%(if otherwise the for-loop in line 25--34 stops before finding the pair $(\bar{s},\bar{t})$ ). 
Combining these observations, we see that for $j\in [\bar{s}+1,\bar{t}-1]$, 
\[
u_j < u_{\bar{s}}~~\mbox{or}~~ u_j \ge b > u_{\bar{t}} ~~ \mbox{(meaning $u_j \notin [u_{\bar{s}}, u_{\bar{t}}]$)}. 
\]
Therefore, the pair $(\bm{u},\bm{v})$ satisfies condition (US2). 
It is easy to verify that this process finds all vertices $\bm{v}$ satisfying condition (US2). 

Since both of the double FOR loops at lines 7--16 and 17--26 execute in $\mathcal{O}(n^2)$ time,  
Algorithm~\ref{app:b2} runs in $\mathcal{O}(n^2 (m+1)^n)$ time. 

%---------------------------------------

%%%%%%%%%%%%%%%%%%%%%%%%%%%%%%%%%%%%%%
%\newpage
\if0
\begin{IEEEbiography}
[{\includegraphics[width=1in,height=1.25in,clip,keepaspectratio]{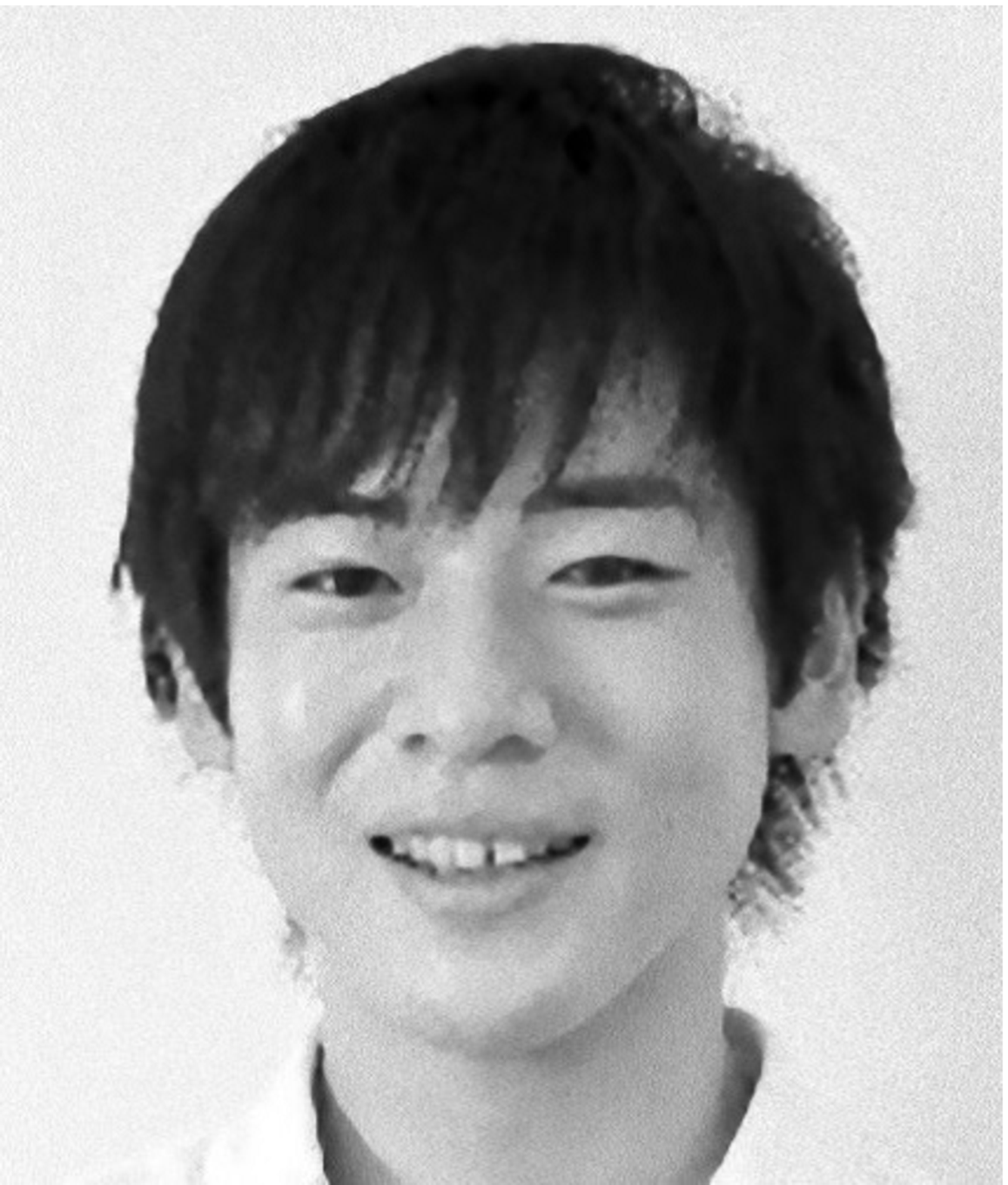}}]
{Naoki Nishimura}
received his B.Eng. and M.Eng. degrees in Industrial Engineering and Management from the Tokyo Institute of Technology, Japan, in 2013 and 2015, and Ph.D. degree in Policy and Planning Sciences from the University of Tsukuba, Japan, in 2020. He is currently a data scientist at Recruit Lifestyle Co., Ltd. His primary research interests are mathematical optimization, machine learning, and their application to digital marketing.
\end{IEEEbiography}
\begin{IEEEbiography}
[{\includegraphics[width=1in,height=1.25in,clip,keepaspectratio]{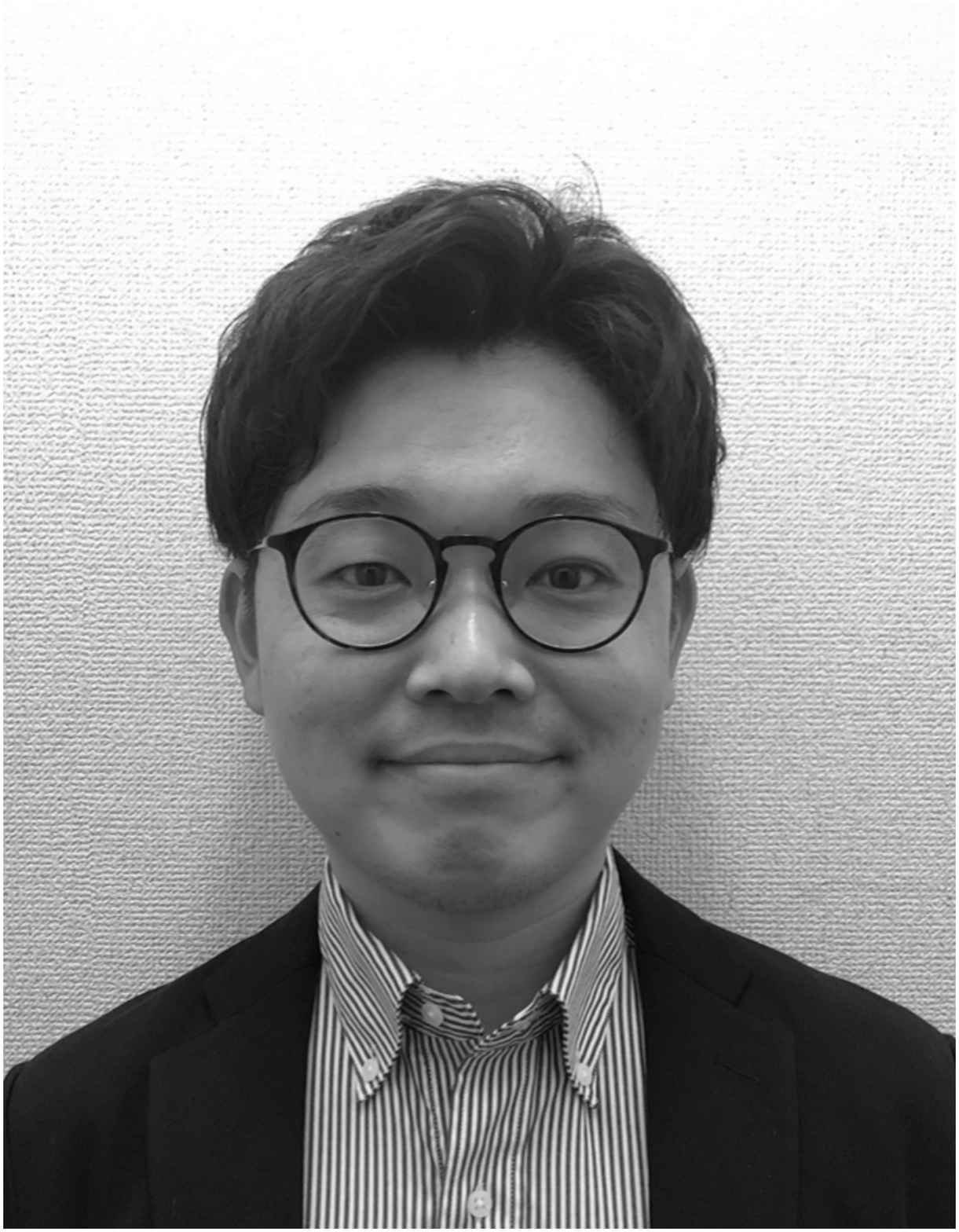}}]
{Noriyoshi Sukegawa}
received his Bachelor's degree in Policy and Planning Sciences and Master's degree in Engineering from the University of Tsukuba, Japan, in 2010 and 2012, respectively, 
and Doctor of Engineering from Tokyo Institute of Technology, Japan, in 2015. 
He is currently an assistant professor at the Faculty of Engineering of Tokyo University of Science, Japan. 
His primary research interests are algorithms and discrete mathematics.
\end{IEEEbiography}
\begin{IEEEbiography}
[{\includegraphics[width=1in,height=1.25in,clip,keepaspectratio]{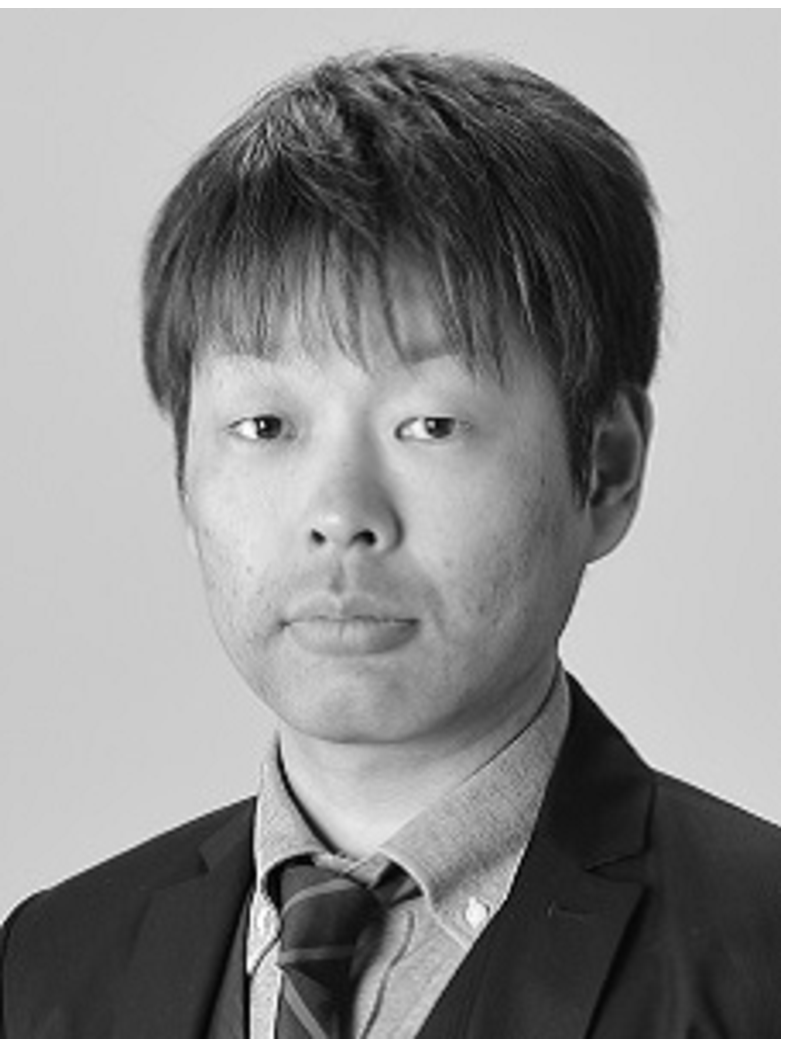}}]
{Yuichi Takano}
received his Bachelor's degree in Policy and Planning
Sciences in 2005, Master's degree in Engineering in 2007, and
Doctorate in Engineering in 2010, all from the University of Tsukuba, Japan.
He is currently an associate professor in the Faculty of Engineering,
Information and Systems, University of Tsukuba.
His primary research interests are mathematical optimization and its
application to financial engineering and machine learning.
\end{IEEEbiography}
\begin{IEEEbiography}
[{\includegraphics[width=1in,height=1.25in,clip,keepaspectratio]{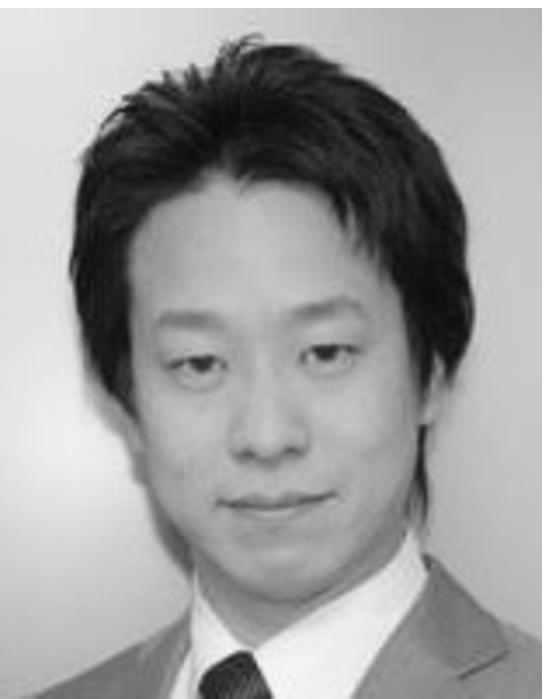}}]
{Jiro Iwanaga}
received his B.Sc. and M.Sc. degrees in mathematics from Waseda University, Japan, in 2006 and 2008. He is currently the CEO of Erdos Inc., and pursuing his Ph.D. degree in Policy and Planning Sciences from the University of Tsukuba, Japan. His primary research interests are mathematical optimization, machine learning, natural language processing, and their applications.
\end{IEEEbiography}
\fi
\end{document}